\def\squareforqed{\hbox{\rlap{$\sqcap$}$\sqcup$}}
\def\qed{\ifmmode\squareforqed\else{\unskip\nobreak\hfil
\penalty50\hskip1em\null\nobreak\hfil\squareforqed
\parfillskip=0pt\finalhyphendemerits=0\endgraf}\fi}
\newcommand{\ie}{i.\,e.\xspace}
\newcommand{\eg}{e.\,g.\xspace}
\newcommand{\etal}{et al.\xspace}
\newcommand{\vd}{$\mathit{VD}$\xspace}
\newcommand{\rk}{\textsf{RK}\xspace}
\newcommand{\ia}{\textsf{IA}\xspace}
\newcommand{\iaw}{\textsf{IAW}\xspace}
\newcommand{\da}{\textsf{DA}\xspace}
\newcommand{\daw}{\textsf{DAW}\xspace}
\newcommand{\dad}{\textsf{DAD}\xspace}
\newcommand{\dadw}{\textsf{DADW}\xspace}
\newcommand{\vda}{$\tilde{\mathit{VD}}$\xspace}
\newcommand{\upvd}{\textsf{updateApprVD-W}\xspace}
\newcommand{\sssp}{\textsf{updateSSSP-W}\xspace}
\newcommand{\upvdu}{\textsf{updateApprVD-U}\xspace}
\newcommand{\ssspu}{\textsf{updateSSSP-U}\xspace}
\begin{document}
\setstretch{1.175}

\title{\Large Approximating Betweenness Centrality in Fully-dynamic Networks\thanks{Parts of this paper have been published in a preliminary form in the Proceedings of the Seventeenth Workshop on Algorithm Engineering and Experiments
(ALENEX 2015)~\cite{DBLP:conf/alenex/BergaminiMS15} and the Proceedings of the 23rd Annual European Symposium
on Algorithms (ESA 2015)~\cite{DBLP:conf/esa/BergaminiM15}.}
}
\author{Elisabetta Bergamini\thanks{Institute of Theoretical Informatics, Karlsruhe Institute of Technology (KIT), Germany.}
\and 
Henning Meyerhenke\footnotemark[2]}
 
\date{}

\maketitle


\begin{abstract}
Betweenness is a well-known centrality measure that ranks the nodes of a network according to their participation in shortest paths.
Since an exact computation is prohibitive in large networks, several approximation algorithms have been proposed. Besides that, recent years
have seen the publication of dynamic algorithms for efficient recomputation of betweenness in networks that change over time.

In this paper we propose the first betweenness centrality approximation algorithms with a provable guarantee on the maximum approximation error for dynamic networks.
Several new intermediate algorithmic results contribute to the respective approximation algorithms:
(i) new upper bounds on the vertex diameter, (ii) the first fully-dynamic algorithm for updating an approximation of the vertex
diameter in undirected graphs, and (iii) an algorithm with lower time complexity for updating single-source shortest paths in unweighted
graphs after a batch of edge actions.

Using approximation, our algorithms are the first to make in-memory computation of betweenness in dynamic networks with millions of edges feasible. Our experiments show that our algorithms can achieve substantial speedups compared to recomputation, up to 
several orders of magnitude.
Moreover, the approximation accuracy is usually significantly better than the theoretical guarantee in terms of absolute error. More importantly, for reasonably small approximation error thresholds, the rank of nodes is well preserved, in particular for nodes with high betweenness.\\[0.25ex]

\noindent \textbf{Keywords:}  betweenness centrality, algorithmic network analysis, vertex diameter, approximation algorithms, shortest paths
\end{abstract}


\section{Introduction}
\label{sec:introduction}
The algorithmic analysis of complex networks has become a highly active research area. 
One important task in network analysis is to rank
nodes by their structural importance using centrality measures.
%
\emph{Betweenness centrality} (BC) is among the widely used measures; it ranks the importance of nodes based on 
their  participation in the shortest paths of the network. 
More formally, let the graph $G$ represent a network with $n$ nodes and $m$ edges.
Naming $\sigma_{st}$ the number of shortest paths from a node $s$ to a node $t$ and $\sigma_{st}(v)$ the number of shortest paths from $s$ to $t$ that go through $v$, the (normalized) BC
of $v$ is defined as~\cite{citeulike:1025135}:
$c_{B}(v)=\frac{1}{n(n-1)}\sum_{s\neq v \neq t}\frac{\sigma_{st}(v)}{\sigma_{st}}$.
Nodes with high betweenness can be important in routing, spreading processes
and mediation of interactions.
Depending on the context, this can mean, for example, finding the
most influential persons in a social network, the key infrastructure
nodes in the internet, or super spreaders of a disease.

Since BC depends on \textit{all} shortest paths, its exact computation is expensive: the best known 
algorithm~\cite{Brandes01betweennessCentrality} is quadratic in the number of nodes for sparse networks and cubic for dense networks, which is prohibitive for networks with hundreds of thousands of nodes. Many graphs of interest, however, such as web 
graphs or social networks, have millions or even billions of nodes and edges. 
Thus, recent years have seen the publication of several approximation algorithms
that aim to reduce the computational effort, while finding BC values
that are as close as possible to the exact
ones. Good results have been obtained in this regard; in particular,
a recent algorithm by Riondato and Kornaropoulos (\textsf{RK})~\cite{DBLP:conf/wsdm/RiondatoK14}  gives probabilistic guarantees on the quality of the approximation. It is described in more detail in Section~\ref{sec:rk} since we build our algorithms on this method.

\paragraph{Motivation.}
Large graphs of interest, such as the Web and social networks,
change continuously. Thus, in addition to the identification of important 
nodes in a static network, an issue of great interest is the dynamic behavior of centrality 
values in networks that change over time. So far, there have been no approximation
algorithms that efficiently update BC scores rather than recomputing them from scratch.
Several methods have been proposed to update the BC values after a graph modification, which for some of the
algorithms can only be one edge insertion and for others can also be
one edge deletion. However, all of these approaches are exact and have
a worst-case time complexity which is the same as Brandes's algorithm (\textsf{BA})
\cite{Brandes01betweennessCentrality} on general graphs and a memory footprint of $\Omega(n^2)$. 

\paragraph{Contribution.}
In this paper, as our main contribution, we present the first approximation algorithms for BC in graphs that change over time.
Such graphs may be directed or undirected, weighted or unweighted.
We consider two dynamic scenarios, an incremental one (\ie only edge insertions or weight decreases are allowed)
and a fully-dynamic one, which also handles edge deletions or weight increase operations.
After each batch of edge actions, we assert the same guarantee as the static \textsf{RK} algorithm: the 
approximated BC values differ by at most $\epsilon$
from the exact values with probability at least $1-\delta$, where
$\epsilon$ and $\delta$ can be arbitrarily small constants. 
Running time and memory required depend on how tightly the error should be bounded.

Besides resampling as few shortest paths as possible, several new intermediate algorithmic results 
contribute to the speed of the respective new approximation algorithms:
(i) In Section~\ref{sec:new_vd_approx} we propose new upper bounds on the vertex diameter \vd (\ie number of nodes in the shortest path(s) with the maximum number of nodes). These bounds vary depending on the graph type (weighted vs unweighted, directed vs undirected). 
Their usefulness stems from the fact that the new bounds can often improve the one used in the \rk algorithm~\cite{DBLP:conf/wsdm/RiondatoK14} and thereby significantly reduce the number of samples necessary for the error guarantee.
(ii) In Section~\ref{sec:dyn-algo}, besides detailing the BC approximation algorithms, we also present the first fully-dynamic algorithms for updating an approximation of \vd in undirected graphs.
(iii) As part of the BC approximation algorithms, we propose an algorithm with lower time complexity 
for updating single-source shortest paths in unweighted graphs after a batch of edge actions.

Our experimental study shows that our algorithms are the first to make in-memory computation of a betweenness ranking practical for large dynamic networks.
With approximation we achieve a much improved scaling behavior compared to exact approaches (also dynamic ones),
enabling us to update approximate betweenness scores in a network with 36 million edges in a few seconds on typical workstation hardware.
Moreover, processing batches of edge actions, our algorithms yield significant speedups (several orders of magnitude) compared to restarting the approximation with \textsf{RK}.
Regarding accuracy, our experiments show that the estimated absolute errors are always lower than the guaranteed ones. 
For nodes with high betweenness, also the rank of nodes is well preserved, even when relatively few shortest paths are sampled.

\section{Related work}
\label{sec:related_work}
\paragraph*{Static BC algorithms - exact and approximation.}

The fastest existing method for the exact BC computation, \textsf{BA},
requires $\Theta(nm)$ operations for unweighted graphs
and $\Theta(nm+n^{2}\log n)$ for graphs with positive edge weights~\cite{Brandes01betweennessCentrality}.
\textsf{BA} computes for every node $s\in V$
a slightly modified version of a single-source shortest-path tree
(SSSP tree), producing for each $s$ the directed acyclic
graph (DAG) of \emph{all} shortest paths starting at $s$. Using
the information contained in the DAGs, \textsf{BA} computes the \textit{dependency}
 $\delta_{s}(v)$ for each node $v$, that is
the sum over all nodes $t$ of the fraction of shortest paths between
$s$ and $t$ that $v$ is internal to. The betweenness
of each node $v$ is simply the \emph{sum} over all sources $s\in V$
of the dependencies $\delta_{s}(v)$. Therefore, we can
see the dependency $\delta_{s}(v)$ as a \emph{contribution}
that $s$ gives to the computation of $c_{B}(v)$.
Based on this concept, some algorithms for an
\emph{approximation} of BC have been developed.
Brandes and Pich~\cite{DBLP:journals/ijbc/BrandesP07} propose to
approximate $c_{B}(v)$ by extrapolating it from the contributions
of a \emph{subset} of source nodes, also
called \emph{pivots}. Selecting the pivots uniformly at random, the
approximation can be proven to be an unbiased
estimator for $c_{B}(v)$ (i.e.\, its expectation is equal to $c_{B}(v)$).
In a subsequent work, Geisberger \etal~\cite{DBLP:conf/alenex/GeisbergerSS08}
notice that this can overestimate BC scores
of nodes close to the pivots. To limit
this bias, they introduce
a \emph{scaling function} which gives less importance to
contributions from pivots that are close to the node. 
Bader \etal~\cite{DBLP:conf/waw/BaderKMM07} approximate
the BC of a specific node only, based on 
an adaptive sampling technique that reduces the number of pivots for nodes with high centrality.
Chehreghani~\cite{DBLP:journals/cj/Chehreghani14} proposes alternative sampling techniques that
try to minimize the number of samples.
Different from the previous approaches is the approximation algorithm
by Riondato and Kornaropoulos~\cite{DBLP:conf/wsdm/RiondatoK14},
which samples a \emph{single} random shortest path at each iteration.
This approach allows a theoretical guarantee on the quality of approximation (see Section~\ref{sec:rk}).
Because of this guarantee, we use it as a 
building block of our new approach and refer to it as \textsf{RK}.

\paragraph*{Exact dynamic algorithms.}
Dynamic algorithms update the betweenness values of all nodes
in response to a modification on the graph, which might be an edge/node
insertion, an edge/node deletion or a change in an edge's weight. 
The first published approach
is \textsf{QUBE} by Lee \etal~\cite{DBLP:conf/www/LeeLPCC12}, which relies on the decomposition
of the graph into biconnected components. The approach has been extended to node updates in~\cite{DBLP:journals/im/GoelSIG15}.
The approach proposed
by Green \etal~\cite{DBLP:conf/socialcom/GreenMB12} for unweighted graphs 
maintains a structure with the previously calculated BC values and
additional information, like the distance
of each node $v$ from every source $s\in V$ and the list of \textit{predecessors}, i.e.\ the nodes
immediately preceding $v$ in all shortest paths from $s$ to $v$. Using this information,
it tries to limit the recomputations to the nodes whose
betweenness has actually been affected. Kourtellis
\etal~\cite{DBLP:journals/tkde/KourtellisMB15} modify the
approach by Green \etal~\cite{DBLP:conf/socialcom/GreenMB12} in
order to reduce the memory requirements. Instead of
storing the predecessors of each node $v$ from each possible
source, they recompute them every time the information is required. 

Kas \etal~\cite{DBLP:conf/asunam/KasWCC13} extend an existing algorithm for
the dynamic APSP problem by Ramalingam and Reps~\cite{Ramalingam92anincremental}
to also update BC scores.
The recent work by Nasre 
\etal~\cite{DBLP:conf/mfcs/NasrePR14} contains the first dynamic
algorithm for BC (\textsf{NPR}) which is asymptotically faster than recomputing 
from scratch on certain inputs. In particular, when only edge insertions are allowed and
the considered graph is sparse and weighted, their algorithm
takes $O(n^2)$ operations, whereas \textsf{BA} requires
$O(n^2\log n)$ on sparse weighted graphs. Pontecorvi and Ramachandran~\cite{DBLP:journals/corr/PontecorviR15} extend existing fully-dynamic APSP algorithms with new data structures
to update \textit{all} shortest paths (APASP) and then recompute dependencies as in \textsf{BA}.

All dynamic algorithms mentioned perform better than
recomputation on certain inputs. Yet, none of them
has a worst-case complexity better than \textsf{BA} on \textit{all graphs} since
all require an update of an APSP problem.
 For this problem, no algorithm exists
which has better worst-case running time than recomputation~\cite{DBLP:journals/algorithmica/RodittyZ11}. In addition, the problem of updating BC
seems even harder than the dynamic APSP problem. Indeed,
the dependencies (and therefore BC) might
need to be updated even on nodes whose distance from the source
has not changed, as they could be part of new shortest paths or not
be part of old shortest paths any more.

\paragraph*{Batch dynamic SSSP algorithms.} Dynamic SSSP algorithms recompute distances from a source node after a single edge update or a batch of edge updates.
Algorithms for the batch problem have been published~\cite{Ramalingam92anincremental,Frigioni_semi-dynamicalgorithms,DBLP:conf/wea/BauerW09} and compared in experimental studies~\cite{DBLP:conf/wea/BauerW09,DBLP:conf/wea/DAndreaDFLP14}.
The experiments show that the tuned algorithm \textsf{T-SWSF}~\cite{DBLP:conf/wea/BauerW09} performs well on many types of graphs and edge updates.
Therefore we use \textsf{T-SWSF} as a building block in our fully-dynamic BC approximation algorithm.


\section{\textsf{RK} algorithm}
\label{sec:rk}
In this section we briefly describe the static BC approximation algorithm by Riondato and Kornaropoulos \textsf{(RK)}~\cite{DBLP:conf/wsdm/RiondatoK14}, 
the foundation for our incremental approach. The idea of \textsf{RK} is to sample a set $S =\{p_{(1)},...,p_{(r)}\}$ of $r$ shortest paths between randomly sampled source-target pairs $(s, t)$.
Then, \textsf{RK} computes the approximated
betweenness centrality $\tilde{c}_B(v)$ of a node $v$ as the
fraction of sampled paths $p_{(k)}\in S$ that $v$ is internal
to, by adding $\frac{1}{r}$ to the node's score for each of these paths.
Figure~\ref{fig:rk-idea} illustrates an example where the sampling of two shortest paths leads to $\frac{2}{r}$ and $\frac{1}{r}$ being added to the score of $u$ and $v$, respectively.
Each possible shortest path $p_{st}$ has the following probability of being sampled in each of the $r$ iterations:
$\pi_{G}(p_{st})=\frac{1}{n(n-1)}\cdot\frac{1}{\sigma_{st}}$ (Lemma 7 of~\cite{DBLP:conf/wsdm/RiondatoK14}).
\begin{wrapfigure}{r}{0.3\textwidth}
  \begin{center}
    \includegraphics[width=0.25\textwidth]{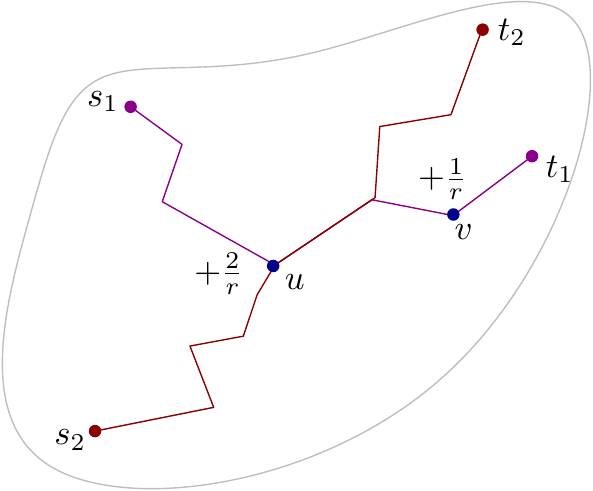}
  \end{center}
  \caption{Sampled paths and score update in the \textsf{RK} algorithm}
  \label{fig:rk-idea}
\end{wrapfigure}
\noindent The number $r$ of samples required to approximate BC scores with the given error guarantee is calculated as

\begin{equation}
r=\frac{c}{\epsilon^{2}}\left(\lfloor\log_{2}\left(\mathit{VD}-2\right)\rfloor+1+\ln\frac{1}{\delta}\right),
\label{sample_size}
\end{equation}

\noindent where $\epsilon$ and $\delta$ are constants in $(0,1)$, $c \approx 0.5$ and \vd is the vertex diameter of $G$, i.e.\ the
number of nodes in the shortest path of $G$ with maximum number of nodes. In unweighted graphs \vd coincides with \textsf{diam}$+1$,
where \textsf{diam} is the number of edges in the longest
shortest path. In weighted graphs \vd and the (weighted) diameter \textsf{diam} (\ie the length of the longest shortest path)
are unrelated quantities. 
The following error guarantee holds:

\begin{lemma}
\textup{\cite{DBLP:conf/wsdm/RiondatoK14}} \label{lem:1} If $r$ shortest paths
are sampled according to the above-defined probability distribution
$\pi_{G}$, then with probability at least $1-\delta$ the approximations
$\tilde{c}_B(v)$ of the betweenness centralities are within $\epsilon$
from their exact value:
$
\Pr(\exists v\in V\: s.t.\:|c_{B}(v)-\tilde{c}_B(v)|>\epsilon)<\delta.
$

\end{lemma}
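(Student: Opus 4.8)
The plan is to recognize Lemma~\ref{lem:1} as an instance of the classical $\epsilon$-approximation (uniform convergence) theorem for range spaces of bounded VC dimension, following Riondato and Kornaropoulos. First I would set up the range space $(\mathcal{D},\mathcal{R})$ whose ground set $\mathcal{D}$ is the collection of all shortest paths of $G$, and whose range family is $\mathcal{R}=\{\,R_v : v\in V\,\}$, where $R_v$ consists of the shortest paths that have $v$ as an internal vertex. The sampling distribution $\pi_G$ is a probability measure on $\mathcal{D}$ (it sums to $1$ over all ordered source–target pairs and all shortest paths between them — a routine check), and the definition of betweenness rewrites as $c_B(v)=\pi_G(R_v)=\Pr_{p\sim\pi_G}[\,p\in R_v\,]$, while the estimator $\tilde{c}_B(v)$ is exactly the empirical frequency of $R_v$ among the $r$ sampled paths. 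Hence the event ``$|c_B(v)-\tilde{c}_B(v)|\le\epsilon$ for all $v\in V$'' is precisely the statement that the sample is an $\epsilon$-approximation of $(\mathcal{D},\mathcal{R},\pi_G)$.

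The core step — and the one I expect to be the main obstacle — is to bound the VC dimension $d$ of $(\mathcal{D},\mathcal{R})$ by $\lfloor\log_2(\mathit{VD}-2)\rfloor+1$. Suppose a set $Q=\{p_1,\dots,p_\ell\}\subseteq\mathcal{D}$ of $\ell$ shortest paths is shattered: for every subset $Q'\subseteq Q$ there is a vertex internal to exactly the paths of $Q'$ among those in $Q$. Fix one path $p\in Q$ and range over the $2^{\ell-1}$ subsets $Q'$ with $p\in Q'$; each is witnessed by a vertex that is internal to $p$, and distinct subsets force distinct witnesses (they are internal to distinct subsets of $Q$). So $p$ has at least $2^{\ell-1}$ internal vertices, hence at least $2^{\ell-1}+2$ vertices in total, so $\mathit{VD}\ge 2^{\ell-1}+2$, which gives $\ell\le\log_2(\mathit{VD}-2)+1$; integrality of $\ell$ then yields $d\le\lfloor\log_2(\mathit{VD}-2)\rfloor+1$. (The degenerate case $\mathit{VD}\le 2$, where all betweenness values are $0$, is trivial and can be handled separately.)

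With the VC bound in hand, I would invoke the $\epsilon$-sample / uniform convergence theorem (Vapnik–Chervonenkis, with the best known absolute constant, so that $c\approx 0.5$): a sample of size $r\ge\frac{c}{\epsilon^{2}}\bigl(d+\ln\frac{1}{\delta}\bigr)$ drawn i.i.d.\ from $\pi_G$ is, with probability at least $1-\delta$, an $\epsilon$-approximation of the range space. Substituting the bound $d\le\lfloor\log_2(\mathit{VD}-2)\rfloor+1$ gives exactly the sample size of Equation~\eqref{sample_size}, and unfolding the definition of $\epsilon$-approximation as in the first paragraph turns this into $\Pr(\exists v\in V\ \text{s.t.}\ |c_B(v)-\tilde{c}_B(v)|>\epsilon)<\delta$, as claimed. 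The VC-dimension estimate is where essentially all the work lies; everything else is bookkeeping plus a citation to the sampling theorem.
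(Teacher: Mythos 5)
The paper itself does not prove Lemma~\ref{lem:1}: it is quoted directly from Riondato and Kornaropoulos~\cite{DBLP:conf/wsdm/RiondatoK14}, and your reconstruction follows exactly the argument of that cited source --- the range space whose points are shortest paths and whose ranges $R_v$ collect the paths with $v$ internal, the VC-dimension bound $\lfloor\log_2(\mathit{VD}-2)\rfloor+1$ obtained by counting the distinct internal vertices a shattered path must contain, and the $\epsilon$-approximation theorem with constant $c\approx 0.5$. Your argument, including the witness-counting step and the identification $c_B(v)=\pi_G(R_v)$ with $\tilde{c}_B(v)$ as the empirical frequency, is correct, so this is essentially the same proof as in the reference the paper relies on.
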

To sample the shortest paths according to $\pi_{G}$, \textsf{RK} first chooses
a node pair $(s,t)$ uniformly at random and performs an SSSP search from $s$, keeping also track of the number $\sigma_{sv}$
of shortest paths from $s$ to $v$ and of the list of predecessors
$P_{s}(v)$ for any node $v$. Then one shortest path is selected: 
Starting from $t$, a predecessor $z\in P_{s}(t)$
is selected with probability $\sigma_{sz}/\sum_{w\in P_{s}(t)}\sigma_{sw}=\sigma_{sz}/\sigma_{st}$.
The sampling is repeated iteratively
until node $s$ is reached. Algorithm \ref{algo:rk} is the
pseudocode for \textsf{RK}. Function \texttt{computeExtendedSSSP} is 
an SSSP algorithm that keeps track of the number of shortest paths and
of the list of predecessors while computing distances, as in \textsf{BA}.
\begin{algorithm2e}
 \begin{small}
\LinesNumbered
\SetKwData{B}{$\tilde{c}_B$}\SetKwData{VD}{VD}
\SetKwFunction{getVertexDiameter}{getVertexDiameter}
\SetKwFunction{sampleUniformNodePair}{sampleUniformNodePair}
\SetKwFunction{computeExtendedSSSP}{computeExtendedSSSP}
\SetKwInOut{Input}{Input}\SetKwInOut{Output}{Output}
\Input{Graph $G=(V,E),\epsilon,\delta \in (0,1)$}
\Output{Approximated BC values $\forall v\in V$}
\ForEach{node $v \in V$}
{
	\B$(v)\leftarrow 0$\;
}
\VD($G$)$\leftarrow$\getVertexDiameter{$G$}\;
$r \leftarrow (c/\epsilon^2) (\lfloor \log_2(\VD(G)-2)\rfloor +\ln(1/\delta))$\;
\For{$i \leftarrow 1$ \KwTo $r$}{
	$(s_i,t_i)\leftarrow$ \sampleUniformNodePair{$V$}\;
	\mbox{$(d_{s_i},\sigma_{s_i},P_{s_i})\leftarrow$ \computeExtendedSSSP{$G,s_i$}}\;
	\tcp{Now one path from $s_i$ to $t_i$ is sampled uniformly at random}
	$v \leftarrow t_i$\;
	$p_{(i)}\leftarrow$ empty list\;
	\While{$P_{s_i}(v) \neq \{s_i\}$}
	{
		\mbox{sample $z \in P_{s_i}(v)$ with $\Pr=\sigma_{s_i}(z)/\sigma_{s_i}(v)$}\;
		$\B(z)\leftarrow \B(z)+1/r$\;
		add $z\rightarrow p_{(i)}$;
		$v\leftarrow z$\;
	}
}
\Return{$\{(v,\B(v)),\: v\in V\}$}
\end{small}
\caption{\rk algorithm}
\label{algo:rk}
\end{algorithm2e}
\paragraph{Approximating the vertex diameter.}
\rk uses two upper bounds on \vd that can be both computed in $O(n+m)$. For unweighted undirected graphs, it samples a source node $s_i$ for each connected component of $G$, computes a BFS from each $s_i$ and sums the two shortest paths with maximum length starting in $s_i$. The \vd approximation is the maximum of these sums over all components. For directed or weighted graphs, \rk approximates \vd with the size of the largest weakly connected component, which can be a significant overestimation for complex networks, possibly of orders of magnitude. In this paper, we present new approximations for directed and for weighted graphs, described in Section~\ref{sec:new_vd_approx}.
\section{New upper bounds on the vertex diameter}
\label{sec:new_vd_approx}
\subsection{Directed unweighted graphs.}
\label{sub:directed}
Let $G$ be a directed unweighted graph. For now, let us assume $G$ is strongly connected. Let $s$ be any node in $G$ and let $u$ be the node with maximum forward distance from $s$ (\ie $d(s, u) \geq d(s, x)\  \forall x\in V$). Analogously, let $v$ be the node with maximum backward distance (\ie $d(v, s) \geq d(x, s)\  \forall x\in V$). Then, naming $\tilde{\mathit{VD}}_{\text{SC}}$ the sum $d(s,u)+d(v,s)+1$:
\begin{lemma}
\label{lemma:vd_directed}
$\mathit{VD} \leq \tilde{\mathit{VD}}_{\text{SC}} < 2 \mathit{VD} $. 
\end{lemma}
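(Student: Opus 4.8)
The plan is to phrase everything in terms of the (directed) diameter $\diam := \max_{x,y \in V} d(x,y)$, using the fact recalled above that for unweighted graphs $\mathit{VD} = \diam + 1$ (a shortest path with the maximum number of nodes has $\diam + 1$ of them, since a shortest $x$--$y$ path on $k$ nodes has length $k-1 \le \diam$, and orienting the edges does not change this). Thus it suffices to establish the two inequalities $\diam \le d(s,u) + d(v,s)$ and $d(s,u) + d(v,s) \le 2\diam$: the first gives $\mathit{VD} \le \tilde{\mathit{VD}}_{\text{SC}}$, and the second gives $\tilde{\mathit{VD}}_{\text{SC}} = d(s,u)+d(v,s)+1 \le 2\diam+1 < 2\diam + 2 = 2\mathit{VD}$.

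For the first inequality I would take an ordered pair $(x,y)$ realizing $d(x,y) = \diam$ and route it through $s$: the directed triangle inequality yields $d(x,y) \le d(x,s) + d(s,y)$. The choice of $v$ as a node of maximum backward distance to $s$ gives $d(x,s) \le d(v,s)$, and the choice of $u$ as a node of maximum forward distance from $s$ gives $d(s,y) \le d(s,u)$; combining these proves $\diam \le d(v,s) + d(s,u)$. The second inequality is immediate: $u$ and $v$ are vertices of $G$, hence $d(s,u) \le \diam$ and $d(v,s) \le \diam$.

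The only delicate point is that every distance appearing above must be finite for the triangle inequality to be meaningful, which is precisely why the lemma assumes $G$ strongly connected; this is also the step that will require genuinely new ideas when the bound is extended to arbitrary directed graphs (which is handled separately in the paper). Modulo that, the argument is a one-line estimate in each direction and I foresee no real obstacle.
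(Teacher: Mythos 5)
Your proof is correct and is essentially the paper's own argument: the lower bound comes from the triangle inequality through $s$ applied to a pair realizing the maximum distance, and the upper bound from the fact that $d(s,u)$ and $d(v,s)$ are each at most the diameter (equivalently, each corresponding shortest path has at most $\mathit{VD}$ nodes). Rephrasing everything via $\mathit{VD}=\operatorname{diam}+1$ is only a cosmetic difference from the paper's direct computation with $\mathit{VD}$.
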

\begin{proof}
Let $x$ and $y$ be two nodes such that the number of nodes in the shortest path from $x$ to $y$ is equal to \vd. Due to the triangle inequality, $d(x,y) \leq d(x,s)+d(s,y)$. Therefore, $d(x,y) \leq d(v,s)+d(s,u)$. Since in unweighted graphs $d(x,y) = \mathit{VD} -1$, the first inequality holds. By definition of \vd, $2\cdot \mathit{VD} \geq (d(v,s)+1) + (d(s,u)+1) > \tilde{\mathit{VD}}_{\text{SC}}$.
 \qed
\end{proof}
The upper bound $\tilde{\mathit{VD}}_{\text{SC}}$ can be computed in $O(n+m)$, simply by running a forward and a backward BFS from any source node $s$.

Let us now consider any directed unweighted graph $G$. We can define the directed acyclic graph $\mathcal{G} = (\mathcal{V}, \mathcal{E})$ of strongly connected components (SCCs) (sometimes referred to as \emph{shrunken graph} in the literature)
similarly to Borassi \etal~\cite{DBLP:journals/corr/BorassiCM15}. In $\mathcal{G}$, the set of vertices is the set of SCCs of $G$ and there is an edge from $C \in \mathcal{V}$ to $C' \in \mathcal{V}$ if and only if there is an edge in $E$ from a node in $C$ to a node in $C'$. (Notice that this also means that all the nodes in $C'$ are reachable from the nodes in $C$.) More in general, the set of nodes reachable from any node in $C$ is exactly the set of nodes in the SCCs reachable from $C$ in $\mathcal{G}$.

We can now define an algorithm that computes an upper bound on \vd for $G$. For each $C$ in $\mathcal{V}$, we compute an upper bound $\tilde{\mathit{VD}}_{\text{SC}}(C)$ on \vd in $C$ (\ie, considering only paths that are contained in $C$) as described before for strongly connected graphs. This can be done in linear time by running a backward and a forward BFS from a random source node $s$ for each SCC and stopping the search when a visited node belongs to a different SCC from that of $s$. For each $C_i$, we know that the nodes reachable from nodes in $C_i$ are only those in the SCCs $C_j$ such that there is a path in $\mathcal{G}$ from $C_i$ to $C_j$. We can compute a topological sorting $\{C_1, \ldots, C_k \}$ of $\mathcal{V}$ (that is, $(C_i,C_j) \in \mathcal{E} \Rightarrow i<j$). Let $C_k$ be the last component in the topological ordering. Then, we know that no path from a node in $C_k$ to any node that is not in $C_k$ exists, which means that the node $C_k$ in $\mathcal{G}$ has outdegree 0.

We call $\tilde{\mathit{VD}}_{\text{DIR}}(C)$ the upper bound on \vd restricted only to nodes that \textit{start} in $C$ (but may end somewhere else). For $C_k$, $\tilde{\mathit{VD}}_{\text{DIR}}(C_k)$ is equal to $\tilde{\mathit{VD}}_{\text{SC}}(C_k)$. For the other components, we can compute it in the following way: Starting from $C_k$, we process all the components in reverse topological ordering and set
$$\tilde{\mathit{VD}}_{\text{DIR}}(C) = \max_{(C,C')\in \mathcal{E}}\tilde{\mathit{VD}}_{\text{DIR}}(C') + \tilde{\mathit{VD}}_{\text{SC}}(C).$$
To get an upper bound on the whole graph, we can take the maximum over all upper bounds $\tilde{\mathit{VD}}_{\text{DIR}}(C)$, \ie we set $\tilde{\mathit{VD}}_{\text{DIR}} := \max_{C \in \mathcal{V}} \tilde{\mathit{VD}}_{\text{DIR}}(C)$. In other words:
$$\tilde{\mathit{VD}}_{\text{DIR}} = \max_{p \in \mathcal{P(G)}} \sum_{C_i \in p} \tilde{\mathit{VD}}_{\text{SC}}(C_i)$$ where $\mathcal{P}(G)$ is the set of paths in $\mathcal{G}$.
\begin{proposition}
\label{prop:vd_directed}
$\mathit{VD} \leq \tilde{\mathit{VD}}_{\text{DIR}} < 2 \cdot \mathit{VD}^2 $. 
\end{proposition}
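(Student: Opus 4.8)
The plan is to prove the two inequalities separately. The left inequality $\mathit{VD}\le\tilde{\mathit{VD}}_{\text{DIR}}$ certifies that the algorithm never underestimates, while the right one $\tilde{\mathit{VD}}_{\text{DIR}}<2\cdot\mathit{VD}^2$ controls the overestimation. Throughout I would use the elementary fact that, since $\mathcal{G}$ is acyclic, any path of $G$ between two nodes of one SCC $C$ stays inside $C$; consequently distances within $C$ agree with distances in $G$ for node pairs inside $C$ (so $\mathit{VD}(C)\le\mathit{VD}$), and a shortest path of $G$ enters and leaves each SCC at most once, meeting it in one contiguous block.

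For the lower bound, take a shortest path $P=(w_0,\dots,w_{\mathit{VD}-1})$ of $G$ with $\mathit{VD}$ nodes. By the contiguity observation, the SCCs met by $P$, listed in order, form a directed path $q=C_{i_1}\to\cdots\to C_{i_\ell}$ of $\mathcal{G}$, and $P$ splits into consecutive blocks, one per $C_{i_t}$, with $n_t$ nodes and $\sum_t n_t=\mathit{VD}$. Each block is a sub-path of the shortest path $P$, hence a shortest path of $G$, and since it lies inside $C_{i_t}$ it is also a shortest path within $C_{i_t}$; arguing exactly as in the proof of Lemma~\ref{lemma:vd_directed} (triangle inequality inside $C_{i_t}$ through its sampled source) its node count satisfies $n_t\le\tilde{\mathit{VD}}_{\text{SC}}(C_{i_t})$. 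Summing over $t$, and noting that $q$ is one of the paths over which the maximum defining $\tilde{\mathit{VD}}_{\text{DIR}}$ is taken, gives $\mathit{VD}=\sum_t n_t\le\sum_t\tilde{\mathit{VD}}_{\text{SC}}(C_{i_t})\le\tilde{\mathit{VD}}_{\text{DIR}}$.

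For the upper bound, let $p^{*}=C_{j_1}\to\cdots\to C_{j_m}$ be a path of $\mathcal{G}$ attaining $\tilde{\mathit{VD}}_{\text{DIR}}=\sum_{k=1}^{m}\tilde{\mathit{VD}}_{\text{SC}}(C_{j_k})$. Lemma~\ref{lemma:vd_directed} applied to each (strongly connected) $C_{j_k}$, together with $\mathit{VD}(C_{j_k})\le\mathit{VD}$, yields $\tilde{\mathit{VD}}_{\text{SC}}(C_{j_k})<2\,\mathit{VD}(C_{j_k})\le 2\,\mathit{VD}$, so $\tilde{\mathit{VD}}_{\text{DIR}}<2m\,\mathit{VD}$. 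It therefore suffices to prove $m\le\mathit{VD}$, i.e.\ that the maximizing condensation path visits at most $\mathit{VD}$ SCCs. The natural route is to lift $p^{*}$ to a long shortest path of $G$: consecutive SCCs on $p^{*}$ are joined by an edge of $G$, each $C_{j_k}$ is internally strongly connected, so one can splice an entry-to-exit path inside each $C_{j_k}$ with those connecting edges into a walk of $G$ through all $m$ SCCs; extracting a shortest path from this walk and using that every SCC it meets contributes at least one node would give $\mathit{VD}\ge m$.

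The delicate step — and the one I expect to consume essentially all of the work — is precisely this lift. Naive splicing only produces a \emph{simple} path of $G$ with $\ge m$ nodes, and a simple path need not be shortest: ``shortcut'' edges of $G$ between non-consecutive SCCs of $p^{*}$ can collapse the walk into a much shorter geodesic, and it is exactly such shortcuts that threaten the bound $m\le\mathit{VD}$ (and, more broadly, the quadratic estimate rather than a linear one). To make this airtight I would try to choose the global source and sink, and the entry/exit vertex inside each $C_{j_k}$, so that no admissible shortcut exists — for instance by routing through the very geodesics (through the sampled source) that witness each $\tilde{\mathit{VD}}_{\text{SC}}(C_{j_k})$ and using the topological order to rule out back-edges. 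If that careful choice cannot be pushed through, the fallback is to bound $\sum_{k}\tilde{\mathit{VD}}_{\text{SC}}(C_{j_k})$ against $2\,\mathit{VD}^2$ directly rather than through the two crude factors $m$ and $2\,\mathit{VD}$; either way, converting ``a long path in $\mathcal{G}$'' into ``a long shortest path in $G$'' is the crux.
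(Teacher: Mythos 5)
Your lower-bound argument is exactly the paper's: split a maximum-size shortest path into its contiguous blocks inside the SCCs it traverses, bound each block by $\tilde{\mathit{VD}}_{\text{SC}}(C_{i_t})$ via Lemma~\ref{lemma:vd_directed}, and observe that the traversed SCCs form a path of $\mathcal{G}$, so the sum is at most $\tilde{\mathit{VD}}_{\text{DIR}}$. For the upper bound your skeleton is also the paper's: each term satisfies $\tilde{\mathit{VD}}_{\text{SC}}(C_{j_k})<2\,\mathit{VD}$, so everything reduces to showing that the maximizing condensation path visits at most $\mathit{VD}$ components. But you never actually prove that count bound: you describe a lift of the condensation path to a walk in $G$, correctly note that extracting a geodesic from it may lose most of the components because of shortcut edges between non-consecutive SCCs, and then only sketch possible repairs (choosing entry/exit vertices to block shortcuts, or bounding the sum directly) without carrying either out. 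As it stands the proposal is therefore an incomplete proof; the step $m\le\mathit{VD}$ is a genuine gap, and it is the only one.

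It is worth saying plainly how the paper handles that same step: it asserts in one sentence that, since each $C_i$ on the maximizing path contains a node of $G$, ``there must be at least a shortest path of size $l$ in $G$ that goes through the components $C_1,\ldots,C_l$,'' and concludes $l\le\mathit{VD}$. That is precisely the inference you declined to take for granted, and your scruple is substantive: the natural splice produces a (simple) path with at least $l$ nodes, not a shortest one, and when shortcut edges join non-consecutive components the two notions diverge badly. A transitive tournament (singleton SCCs $C_1,\ldots,C_l$ with every forward edge present) makes the point concrete: the heaviest condensation path has $l$ components while every shortest path of $G$ has only two nodes, so no choice of sources, sinks or entry/exit vertices can realize the lift there, and your fallback of comparing $\sum_k\tilde{\mathit{VD}}_{\text{SC}}(C_{j_k})$ directly with $2\,\mathit{VD}^2$ meets the same obstruction. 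So the missing step is not a routine detail you left out; it is a point at which the published argument itself supplies only an assertion, and any complete proof would have to add a restriction or a genuinely new idea to get past the shortcut phenomenon you identified.
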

\begin{proof}
Let us prove the first inequality. Let $p = (u_1, \ldots, u_{\mathit{VD}})$ be a shortest path in $G$ whose number $|p|$ of nodes is equal to \vd. Say $p$ traverses $l$ SCCs $(C_1, ..., C_l)$. Then $p$ can be partitioned in $l$ subpaths $p_i, \ i=1,..,l$, such that $p_i \subseteq C_i$ and $p_i$ is a shortest path in $C_i$. By 
Lemma~\ref{lemma:vd_directed}, $|p_i| \leq \tilde{\mathit{VD}}_{\text{SC}}(C_i),\ \ i=1,...,l$. Therefore, $|p| = \sum_{i=1,...,l} |p_i| \leq  \sum_{i=1,...,l} \tilde{\mathit{VD}}_{\text{SC}}(C_i) \leq \tilde{\mathit{VD}}_{\text{DIR}}$ (this last inequality holds by definition of $\tilde{\mathit{VD}}_{\text{DIR}}$).

How ``bad'' can $\tilde{\mathit{VD}}_{\text{DIR}}$ be in the worst case? Let now $(C_1, ..., C_l)$ denote the path in $\mathcal{G}$ such that $\tilde{\mathit{VD}}_{\text{DIR}} =  \sum_{i=1,..,l} \tilde{\mathit{VD}}_{\text{SC}}(C_i)$. Let $l$ be the number of components in this path. How large can $l$ be?  Since there is at least one node of $G$ in each $C_i$, there must be at least a shortest path of size $l$ in $G$ that goes through the components $C_1,...,C_l$. Therefore, $l \leq \mathit{VD}$. Also, let $k=\max_{C \in \mathcal{V}} \tilde{\mathit{VD}}_{\text{SC}}(C)$. By Lemma~\ref{lemma:vd_directed}, $k < 2\cdot \mathit{VD}(C_k)$, where $C_k$ is the SCC whose upper bound is equal to $k$. Clearly, $k <2\cdot \mathit{VD}$. Then, by definition, $\tilde{\mathit{VD}}_{\text{DIR}} = \sum_{i=1,..,l} \tilde{\mathit{VD}}_{\text{SC}}(C_i) \leq l \cdot k < 2 \cdot \mathit{VD}^2$.
 \qed
\end{proof}
The upper bound can be computed in $O(n+m)$. Indeed, $\mathcal{G}$ can be computed in $O(n+m)$, by finding the SCCs of $G$ and scanning the edges in $E$. Then, the topological sorting and the accumulation of the $\tilde{\mathit{VD}}_{\text{DIR}}(C)$ of the different components can be done in $O(|\mathcal{V}|+|\mathcal{E}|) = O(n+m)$.
\begin{figure}[htb]
  \begin{center}
    \includegraphics[width=0.45\textwidth]{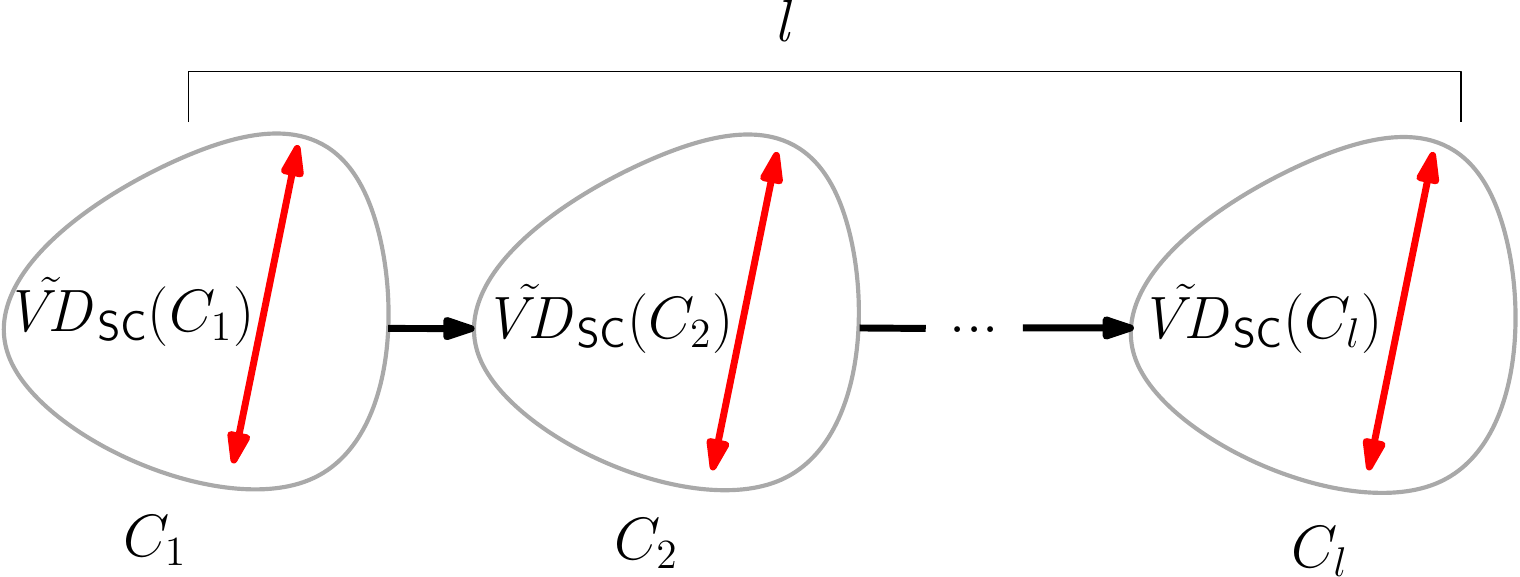}
  \end{center}
  \caption{Path $(C_1, ..., C_l)$ of the DAG $\mathcal{G}$ of SCCs. Each SCC $C_i$ has its own upper bound $\tilde{\mathit{VD}}_{\text{SC}}(C_i)$ and $\tilde{\mathit{VD}}_{\text{DIR}}$ is computed as $\sum_{i=1,..,l} \tilde{\mathit{VD}}_{\text{SC}}(C_i)$.}
  \label{fig:vd_directed}
\end{figure}
Notice that our new upper bound is never larger than the size of the largest weakly connected component, the previous bound used in \rk. Also, although the upper bound can be as bad as $2\cdot \mathit{VD}^2$ in theory, our experimental results on real-world complex networks show that it is within a factor 4 from the exact \vd and several orders of magnitude smaller than the size of the largest weakly connected components.

\subsection{Undirected weighted graphs}
\label{sub:weighted}
Let $G$ be an undirected graph. For simplicity, let $G$ be connected for now. If it is not, we compute an approximation for each connected component and take the maximum over all the approximations. Let $T \subseteq G$ be an SSSP tree from any source node $s \in V$. Let $p_{xy}$ denote a shortest path between $x$ and $y$ in $G$ and let $p_{xy}^T$ denote a shortest path between $x$ and $y$ in $T$. Let $|p_{xy}|$ be the number of nodes in $p_{xy}$ and $d(x,y)$ be the distance between $x$ and $y$ in $G$, and analogously for $|p_{xy}^T|$ and $d^T(x,y)$. Let $\overline{\omega}$ and  $\underline{\omega}$ be the maximum and minimum edge weights in $G$, respectively. Let $u$ and $v$ be the nodes with maximum distance from $s$, \ie $d(s, u)\geq d(s,v) \geq d(s, x)\  \forall x\in V, x\neq u $. 
We define the \vd approximation $\tilde{\mathit{VD}}_{\text{W}} := 1 + \frac{d(s,u)+d(s,v)}{\underline{\omega}}$. Then:
\begin{proposition}
\label{lem:vd2}
$\mathit{VD} \leq \tilde{\mathit{VD}}_{\text{W}} < 2 \cdot \mathit{VD} \cdot \overline{\omega} / \underline{\omega}$. 
\end{proposition}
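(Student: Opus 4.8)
The plan is to prove both inequalities by the same elementary device: bounding the length of a path by its number of edges times either $\underline{\omega}$ (from below) or $\overline{\omega}$ (from above), and combining this with the triangle inequality and the extremal choice of $u$ and $v$.

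For the lower bound $\mathit{VD} \le \tilde{\mathit{VD}}_{\text{W}}$, I would fix a pair of nodes $x,y$ whose shortest path $p_{xy}$ attains the vertex diameter, so that $p_{xy}$ has $|p_{xy}| = \mathit{VD}$ nodes and hence $\mathit{VD}-1$ edges. Since every edge has weight at least $\underline{\omega}$, this gives $d(x,y) \ge (\mathit{VD}-1)\,\underline{\omega}$. On the other hand, because $G$ is undirected, the triangle inequality yields $d(x,y) \le d(s,x) + d(s,y)$, and since $u$ and $v$ are the two nodes farthest from $s$ (with $d(s,u)\ge d(s,v)\ge d(s,w)$ for every $w\neq u$) and $x\neq y$, we have $d(s,x)+d(s,y) \le d(s,u)+d(s,v)$. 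Chaining these inequalities gives $(\mathit{VD}-1)\,\underline{\omega} \le d(s,u)+d(s,v)$; dividing by $\underline{\omega}$ and adding $1$ yields exactly $\mathit{VD} \le 1 + \tfrac{d(s,u)+d(s,v)}{\underline{\omega}} = \tilde{\mathit{VD}}_{\text{W}}$.

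For the upper bound $\tilde{\mathit{VD}}_{\text{W}} < 2\cdot \mathit{VD}\cdot\overline{\omega}/\underline{\omega}$, I would instead look at the shortest paths from $s$ to $u$ and from $s$ to $v$ themselves. By the definition of the vertex diameter, each of these paths has at most $\mathit{VD}$ nodes, hence at most $\mathit{VD}-1$ edges, and since every edge has weight at most $\overline{\omega}$, both $d(s,u)$ and $d(s,v)$ are at most $(\mathit{VD}-1)\,\overline{\omega}$. Substituting into the definition of $\tilde{\mathit{VD}}_{\text{W}}$ gives $\tilde{\mathit{VD}}_{\text{W}} \le 1 + \tfrac{2(\mathit{VD}-1)\,\overline{\omega}}{\underline{\omega}}$, and it remains to verify $1 + \tfrac{2(\mathit{VD}-1)\,\overline{\omega}}{\underline{\omega}} < \tfrac{2\,\mathit{VD}\,\overline{\omega}}{\underline{\omega}}$, which rearranges to $1 < 2\,\overline{\omega}/\underline{\omega}$; this holds because $\overline{\omega} \ge \underline{\omega} > 0$.

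There is essentially no deep difficulty here; the only points needing a moment of care are (a) justifying $d(s,x)+d(s,y)\le d(s,u)+d(s,v)$ when one of $x,y$ might coincide with $u$ — which is fine precisely because $v$ realizes the second-largest distance from $s$ and $x\ne y$ — and (b) noting that the distances $d(s,u)$, $d(s,v)$ and the distance realizing $\mathit{VD}$ are distances in $G$ (equivalently, along paths of $T$), so the vertex-diameter bound $|p_{su}|\le \mathit{VD}$ applies directly. I would also remark, as in the surrounding text, that for disconnected $G$ the same argument applies componentwise and one takes the maximum.
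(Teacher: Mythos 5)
Your proof is correct and follows essentially the same argument as the paper's: both inequalities rest on bounding a path's length by its edge count times $\underline{\omega}$ (from below) or $\overline{\omega}$ (from above), combined with the extremal choice of $u$ and $v$ and the observation $|p_{su}|\leq \mathit{VD}$. The only cosmetic difference is in the first inequality, where the paper routes the bound through the SSSP tree $T$ (via $d^T(x,y)\geq d(x,y)$ and $d^T(s,\cdot)=d(s,\cdot)$), whereas you apply the triangle inequality directly in $G$ to the pair attaining $\mathit{VD}$ --- an equivalent and slightly more direct formulation of the same step.
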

\begin{proof}
To prove the first inequality, we can notice that $d^T(x,y) \geq d(x,y)$ for all $x,y \in V$, since all the edges of $T$ are contained in those of $G$. Also, since every edge has weight at least $\underline{\omega}$, $d(x,y) \geq (|p_{xy}|-1)\cdot \underline{\omega}$. Therefore, $d^T(x,y) \geq (|p_{xy}|-1)\cdot \underline{\omega}$, which can be rewritten as $|p_{xy}| \leq 1+ \frac{d^T(x,y)}{\underline{\omega}}$, for all $x,y \in V$. Thus, $\mathit{VD} = \max_{x,y} |p_{xy}| \leq 1+(\max_{x,y} d^T(x,y))/ \underline{\omega} \leq 1 + \frac{d^T(s,u)+d^T(s,v)}{\underline{\omega}} = 1 + \frac{d(s,u)+d(s,v)}{\underline{\omega}}$, where the last expression equals $\tilde{\mathit{VD}}_{\text{W}}$ by definition.

To prove the second inequality, we first notice that $d(s,u) \leq (|p_{su}|-1)\cdot \overline{\omega}$,
 and analogously $d(s,v) \leq (|p_{sv}|-1)\cdot \overline{\omega}$. Consequently, $\tilde{\mathit{VD}}_{\text{W}} \leq 1+(|p_{su}| + |p_{sv}|-2)\cdot \overline{\omega} / \underline{\omega} < 2 \cdot |p_{su}|\cdot \overline{\omega} / \underline{\omega}$, supposing that $|p_{su}| \geq |p_{sv}|$ without loss of generality. By definition of \vd, $|p_{su}| \leq \mathit{VD}$. Therefore, $\tilde{\mathit{VD}}_{\text{W}} < 2 \cdot \mathit{VD} \cdot \overline{\omega} / \underline{\omega}$.
 \qed
\end{proof}
To obtain the upper bound $\tilde{\mathit{VD}}_{\text{W}}$, we can simply compute an SSSP search from any node $s$, find the two nodes with maximum distance and perform the remaining calculations.
Notice that $\tilde{\mathit{VD}}_{\text{W}}$ extends the upper bound proposed for \rk~\cite{DBLP:conf/wsdm/RiondatoK14} for unweighted graphs: When the graph is unweighted and thus $\underline{\omega} = \overline{\omega}$, $\tilde{\mathit{VD}}_{\text{W}}$ becomes equal to the approximation used by \rk.
Complex networks are often characterized by a small diameter and in networks like coauthorship, friendship, communication networks, \vd and $\overline{\omega} / \underline{\omega}$ can be several order of magnitude smaller than the size of the largest component, though. In this case the new bound translates into a substantially improved \vd approximation.

\subsection{Directed weighted graphs.}
\label{sub:dir_weighted}
The upper bound for directed weighted graphs can be easily derived from those described in Sections~\ref{sub:directed} and~\ref{sub:weighted}. If $G$ is strongly connected, we can define $\tilde{\mathit{VD}}_{\text{SCW}} := 1 + \frac{d(s,u)+d(v,s)}{\underline{\omega}}$, where $s$ is any node,  $u$ is the node with maximum forward distance from $s$, $v$ is the node with maximum backward distance and $\underline{\omega}$ is the minimum edge weight. It can be easily proved that Proposition~\ref{lem:vd2} holds also in this case, considering a forward SSSP tree from $s$ (where distances are bounded by $d(s,u)$) and a backward SSSP tree (where distances are bounded by $d(v,s)$). For general directed weighted graphs, we can use the algorithm described in Section~\ref{sub:directed} using $\tilde{\mathit{VD}}_{\text{SCW}}(C) := 1 + \frac{d(s,u)+d(v,s)}{\underline{\omega}_c}$ as an upper bound for each SCC $C$ (where $\underline{\omega}_c$ is the minimum edge weight in $C$). It is easy to prove that the resulting bound is an upper bound on \vd and that it is always smaller than $2\cdot \max_{C \in \mathcal{G}} \frac{\overline{\omega}_C}{\underline{\omega}_C} \cdot \mathit{VD} ^2$, using Propositions~\ref{prop:vd_directed} and~\ref{lem:vd2}.
Since it requires to compute an SSSP tree for each SCC, the complexity of computing the bound is $O(m + n \log n)$ (the other operations can be done in linear time, as described in Section~\ref{sub:directed}).

\section{Fully-dynamic approximation algorithms}
\label{sec:dyn-algo}
\subsection{Path subsitution.}
\label{sub:substitution}
\begin{wrapfigure}{r}{0.4\textwidth}
  \begin{center}
    \includegraphics[width=0.3\textwidth]{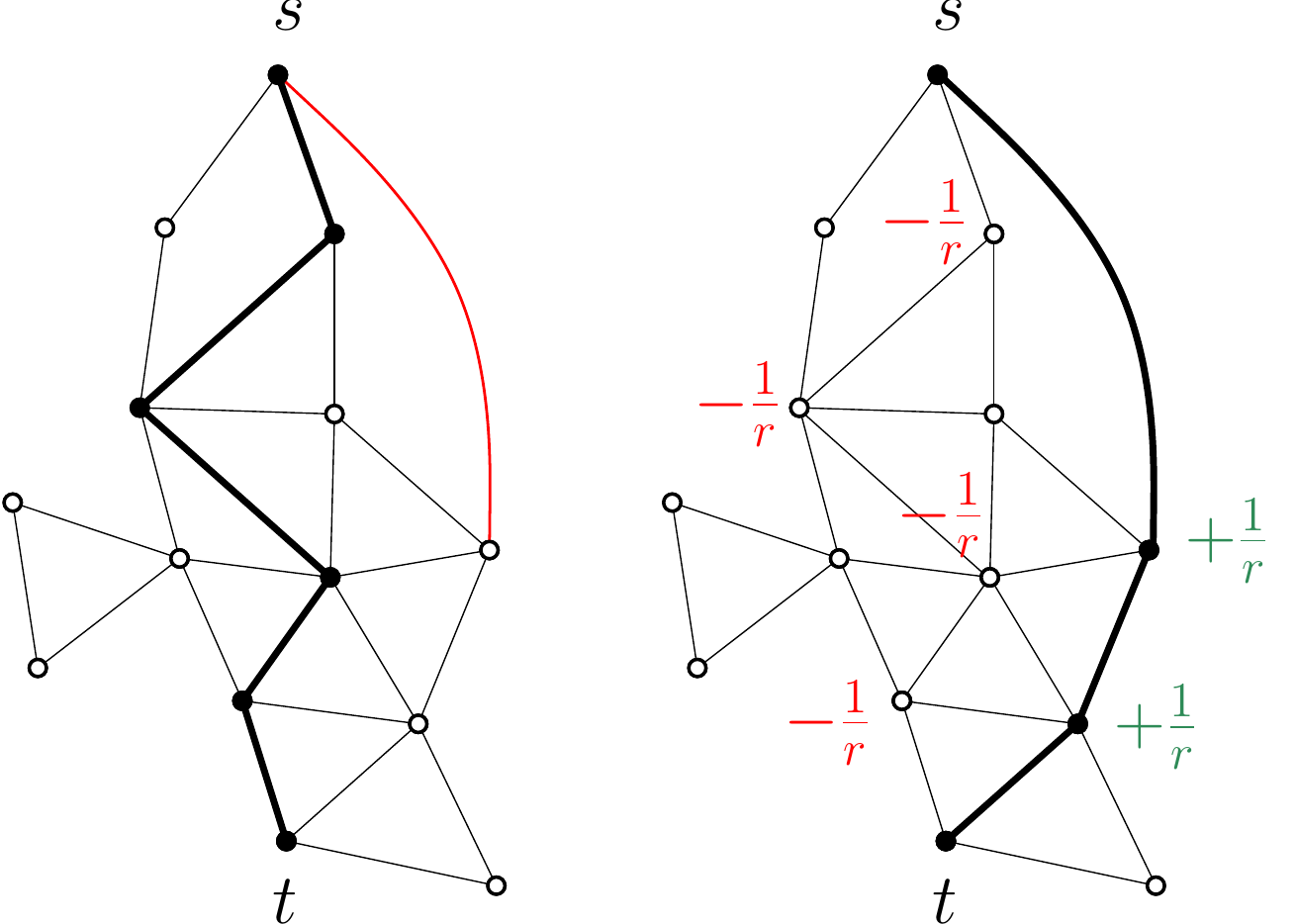}
  \end{center}
  \caption{Updating shortest paths and BC scores}
  \label{fig:path-update}
\end{wrapfigure}
Our algorithms for dynamic BC approximation are composed of two phases: an \textit{initialization} phase, which executes \textsf{RK} on the initial graph, and an \textit{update} phase, which recomputes the approximated BC scores after a sequence of edge updates. 
Let us consider a batch $\beta = \{e_1,...,e_k\}$ of edge updates $e_i=\{u_i,v_i, \omega(u_i, v_i)\}$ applied to a graph $G$. Also, let us assume for the moment that $\beta$ is composed of \textbf{edge insertions only} (or weight decreases) and $\beta$ \textbf{does not increase} the vertex diameter of $G$ and therefore also the number $r$ of samples required by \rk for the maximum error guarantee. We will discuss the general case in Section~\ref{sub:fully-dyn}.
Intuitively, our basic idea is to keep the old sampled paths and update them only when necessary, instead of re-computing $r$ shortest paths from scratch.
Figure~\ref{fig:path-update} shows an example to illustrate this idea: Assume several shortest paths between $s$ and $t$ exist, of which one has been sampled (with black nodes). An edge insertion (represented in red) shortens the distance between $s$ and $t$, creating a new shorter path. Therefore, we simply subtract $1/r$ from each node in the old shortest path and add $1/r$ to each node in the new one.

From this point on, we give a formal description and only consider edge insertions. We suppose the graph is undirected, but
in this restricted semi-dynamic setting our results can be easily extended to weight decreases and directed graphs. 
Let $G'=(V,E\cup \beta)$ be the new graph,
let $d'_{s}(t)$ denote the new distance between any node pair $(s,t)$
and let $\sigma'_{st}$ be the new number of shortest paths between $s$ and $t$.
Let $\mathcal{S}_{st}$ and $\mathcal{S}'_{st}$ be the old and the new set of shortest paths between $s$ and $t$, respectively.
A new set $ S'=\{p'_{(1)},...,p'_{(r)}\} $
of shortest paths has to be sampled now in order to let Lemma~\ref{lem:1}
hold for the new configuration; in particular, the probability  $\Pr(p'_{(k)}=p'_{st})$ of
each shortest path $p'_{st}$ to be sampled
must be equal to $\pi_{G'}(p'_{st})=\frac{1}{n(n-1)}\cdot\frac{1}{\sigma'_{st}}$.
Clearly, one could rerun \textsf{RK} on the new graph, but we can be more efficient:
 We recall that we are assuming that the \vd and therefore also the number of samples for $G'$ is smaller than or equal to that of $G$.
Given any old sampled path $p_{st}$, we can \emph{keep}
$p_{st}$ if the set of shortest paths between $s$ and $t$ has not
changed, and \emph{replace} it with a new path between $s$ and
$t$ otherwise. Then, the following lemma holds:

\begin{lemma}
\label{lem:2}Let $ S$ be a set of shortest paths of $G$
sampled according to $\pi_{G}$. Let $\mathcal{P}$ be the procedure
that creates $ S'$ by substituting each path $p_{st}\in S$
with a path $p'_{st}$ according to the following rules:
1. $p'_{st}=p_{st}$ if $d'_{s}(t)=d_{s}(t)$ and $\sigma'_{st}=\sigma_{st}$.
2. $p'_{st}$ selected uniformly at random among $\mathcal{S}'_{st}$
\mbox{otherwise}.
Then, $p'_{st}$ is a shortest path of $G'$ and the probability of
any shortest path $p'_{xy}$ of $G'$ to be sampled at each iteration is
$\pi_{G'}(p'_{xy})$, i.e.\
$\Pr(p'_{(k)}=p'_{xy})=\frac{1}{n(n-1)}\cdot\frac{1}{\sigma'_{xy}}, k=1,...,r\text{.}$
\end{lemma}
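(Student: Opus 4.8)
The plan is to compute the probability $\Pr(p'_{(k)} = p'_{xy})$ for an arbitrary shortest path $p'_{xy}$ of $G'$ by conditioning on which old path $p_{st} \in S$ was drawn at iteration $k$, and then tracking what the substitution procedure $\mathcal{P}$ does to it. Since $p'_{(k)}$ is obtained by applying $\mathcal{P}$ to $p_{(k)}$, and $p_{(k)}$ is distributed according to $\pi_G$, I would write
\[
\Pr(p'_{(k)} = p'_{xy}) = \sum_{p_{st} \in S_{\text{all}}} \Pr(p_{(k)} = p_{st}) \cdot \Pr\bigl(\mathcal{P}(p_{st}) = p'_{xy}\bigr),
\]
where the sum ranges over all shortest paths of $G$. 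The key observation is that $\mathcal{P}(p_{st})$ is again a path between the \emph{same} endpoints $s$ and $t$ (rule 1 keeps it, rule 2 resamples among $\mathcal{S}'_{st}$, which are all $s$--$t$ paths), so the only old paths that can possibly map to $p'_{xy}$ are those with endpoints $x$ and $y$. Hence the sum collapses to
\[
\Pr(p'_{(k)} = p'_{xy}) = \sum_{p_{xy} \in \mathcal{S}_{xy}} \Pr(p_{(k)} = p_{xy}) \cdot \Pr\bigl(\mathcal{P}(p_{xy}) = p'_{xy}\bigr),
\]
and $\Pr(p_{(k)} = p_{xy}) = \pi_G(p_{xy}) = \frac{1}{n(n-1)} \cdot \frac{1}{\sigma_{xy}}$ for each such path, independent of which $p_{xy}$.

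Next I would split into the two cases governed by the rules. \textbf{Case A:} the $(x,y)$ pair is \emph{unchanged}, meaning $d'_x(y) = d_x(y)$ and $\sigma'_{xy} = \sigma_{xy}$. I first need the small fact that in this case $\mathcal{S}'_{xy} = \mathcal{S}_{xy}$: no shortest path can have been destroyed (an edge insertion cannot lengthen a path already present) and no new one created (that would require $d'_x(y) < d_x(y)$, since any path through a newly inserted edge that is not already shortest would have to be strictly shorter to become shortest — and if it merely ties, $\sigma'_{xy}$ would exceed $\sigma_{xy}$, contradicting our assumption). So rule 1 fires, $\mathcal{P}$ is the identity on $\mathcal{S}_{xy} = \mathcal{S}'_{xy}$, and $\Pr(\mathcal{P}(p_{xy}) = p'_{xy})$ is $1$ if $p_{xy} = p'_{xy}$ and $0$ otherwise; the sum gives exactly $\frac{1}{n(n-1)} \cdot \frac{1}{\sigma_{xy}} = \frac{1}{n(n-1)} \cdot \frac{1}{\sigma'_{xy}} = \pi_{G'}(p'_{xy})$. \textbf{Case B:} the pair has changed. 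Then rule 2 fires for \emph{every} old $p_{xy} \in \mathcal{S}_{xy}$, and $\Pr(\mathcal{P}(p_{xy}) = p'_{xy}) = \frac{1}{|\mathcal{S}'_{xy}|} = \frac{1}{\sigma'_{xy}}$, uniformly, regardless of $p_{xy}$. Summing over the $\sigma_{xy} = |\mathcal{S}_{xy}|$ old paths:
\[
\Pr(p'_{(k)} = p'_{xy}) = \sigma_{xy} \cdot \frac{1}{n(n-1)} \cdot \frac{1}{\sigma_{xy}} \cdot \frac{1}{\sigma'_{xy}} = \frac{1}{n(n-1)} \cdot \frac{1}{\sigma'_{xy}} = \pi_{G'}(p'_{xy}).
\]
That $p'_{st}$ is itself a shortest path of $G'$ is immediate: under rule 1 it equals $p_{st}$, which is still shortest by the Case A argument, and under rule 2 it is drawn from $\mathcal{S}'_{st}$ by definition.

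The step I expect to require the most care is the claim in Case A that $\mathcal{S}'_{xy} = \mathcal{S}_{xy}$ whenever $d'_x(y) = d_x(y)$ and $\sigma'_{xy} = \sigma_{xy}$ — that is, that the \emph{counts} matching forces the \emph{sets} to match. The ``no path destroyed'' direction is trivial for insertions, so $\mathcal{S}_{xy} \subseteq \mathcal{S}'_{xy}$; combined with $|\mathcal{S}_{xy}| = \sigma_{xy} = \sigma'_{xy} = |\mathcal{S}'_{xy}|$ and finiteness, equality follows. I would state this explicitly as it is the crux of why rule 1 is sound. Everything else is bookkeeping: the decomposition by conditioning, the collapse to fixed endpoints, and the two uniform-probability computations. (Note the lemma concerns a single iteration $k$; since iterations are independent and identically handled, the per-iteration statement is all that is needed, which is why I keep the argument at the level of one fixed $k$.)
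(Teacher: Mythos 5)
Your proof is correct and takes essentially the same route as the paper's: a case split on whether the set of shortest paths between the sampled endpoints changed, with the kept path inheriting the correct probability because $\sigma_{xy}=\sigma'_{xy}$, and the resampled case contributing $\frac{1}{n(n-1)}\cdot\frac{1}{\sigma'_{xy}}$ by the uniform choice in $\mathcal{S}'_{xy}$ (your sum over the $\sigma_{xy}$ old paths is just the paper's ``probability that $(x,y)$ is the $k$-th sampled pair''). Your explicit argument that, under insertions, equal distances and counts force $\mathcal{S}_{xy}=\mathcal{S}'_{xy}$ is precisely the fact the paper states more tersely, so there is no substantive difference.
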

\begin{proof}
To see that $p'_{st}$ is a shortest path of $G'$, it is sufficient
to notice that, if $d'_{s}(t)=d_{s}(t)$ and $\sigma'_{s}(t)=\sigma_{s}(t)$,
then all the shortest paths between $s$ and $t$ in $G$ are shortest
paths also in $G'$.

Let $p'_{xy}$ be a shortest path of $G'$ between nodes $x$ and
$y$. Basically, there are two possibilities for $p'_{xy}$ to be the $k$-th sample. Naming $e_{1}$ the event $\{\mathcal{S}_{xy}=\mathcal{S}'_{xy}\}$
(the set of shortest paths between $x$ and $y$ does not change after
the edge insertion) and $e_{2}$ the complementary event of $e_{1}$,
we can write $\Pr(p'_{(k)}=p'_{xy})$ as $\Pr(p'_{(k)}=p'_{xy}\cap e_{1})+\Pr(p'_{(k)}=p'_{xy}\cap e_{2})$.

Using conditional probability, the first addend can be rewritten as
$\Pr(p'_{(k)}=p'_{xy}\cap e_{1})=\Pr(p'_{(k)}=p'_{xy}|e_{1})\Pr(e_{1})$.
As the procedure $\mathcal{P}$ keeps the old shortest path when $e_{1}$
occurs, then $\Pr(p'_{(k)}=p'_{xy}|e_{1})=\Pr(p_{(k)}=p'_{xy}|e_{1})=\frac{1}{n(n-1)}\frac{1}{\sigma_{x}(y)}$,
which is also equal to $\frac{1}{n(n-1)}\frac{1}{\sigma'_{x}(y)}$,
since $\sigma_{x}(y)=\sigma'_{x}(y)$ when we condition on $e_{1}$.
Therefore, $\Pr(p'_{(k)}=p'_{xy}\cap e_{1})=\frac{1}{n(n-1)}\frac{1}{\sigma'_{x}(y)}\cdot\Pr(e_{1})$.

Analogously, $\Pr(p'_{(k)}=p'_{xy}\cap e_{2})=\Pr(p'_{(k)}=p'_{xy}|e_{2})\Pr(e_{2})$.
In this case, $\Pr(p'_{(k)}=p'_{xy}|e_{2})=\frac{1}{n(n-1)}\cdot\frac{1}{\sigma'_{x}(y)}$,
since this is the probability of the node pair $(x,y)$ to be the $k$-th sample in the initial sampling and of $p'_{xy}$ to be selected
among other paths in $\mathcal{S}'_{xy}$. Then, $\Pr(p'_{(k)}=p'_{xy}\cap e_{2})=\frac{1}{n(n-1)}\cdot\frac{1}{\sigma'_{x}(y)}\cdot\Pr(e_{2})=\frac{1}{n(n-1)}\cdot\frac{1}{\sigma'_{x}(y)}\cdot(1-\Pr(e_{1}))$.
The probability $\Pr(p'_{(k)}=p'_{xy})$ can therefore be
rewritten as $ \Pr(p'_{(k)}=p'_{xy}) =\mbox{\ensuremath{\frac{1}{n(n-1)}\frac{1}{\sigma'_{x}(y)}\cdot\Pr}(\ensuremath{e_{1}})+\ensuremath{\frac{1}{n(n-1)}\frac{1}{\sigma'_{x}(y)}\cdot}(1-\ensuremath{\Pr}(\ensuremath{e_{1}}))}=\frac{1}{n(n-1)}\frac{1}{\sigma'_{x}(y)}.$
\qed 
\end{proof}
Since the set of paths is constructed according to $\pi_{G'}$, Theorem~\ref{theorem_samples} follows directly from Lemma \ref{lem:1}.
\begin{theorem}
\label{thm:approx_guarantee}
Let $G=(V,E)$ be a graph and let $G'=(V,E\cup \beta)$
be the modified graph after the the batch $\beta$. Let $VD(G)\geq VD(G')$.
Let $ S$ be a set of $r$ shortest paths of $G$
sampled according to $\pi_{G}$
and $r=\frac{c}{\epsilon^{2}}\left(\lfloor\log_{2}\left(\mathit{VD}(G)-2\right)\rfloor+1+\ln\frac{1}{\delta}\right)$
for some constants $\epsilon,\delta\in(0,1)$. Then, if a new set
$ S'$ of shortest paths of $G'$ is built according to procedure
$\mathcal{P}$ and the approximated values of betweenness centrality
$\tilde{c}'_B(v)$ of each node $v$ are computed as the fraction
of paths of $ S'$ that $v$ is internal to, then
\[
\Pr(\exists v\in V\: s.t.\:|c'_{B}(v)-\tilde{c}'_B(v)|>\epsilon)<\delta,
\]
where $c'_{B}(v)$ is the new exact value of betweenness centrality
of $v$ after the edge insertion.
\label{theorem_samples}
\end{theorem}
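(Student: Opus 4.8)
The plan is to obtain Theorem~\ref{theorem_samples} as an immediate consequence of Lemma~\ref{lem:2} together with the static \rk guarantee (Lemma~\ref{lem:1}) applied to the modified graph $G'$. Lemma~\ref{lem:2} already carries the bulk of the argument: it certifies that each path $p'_{(k)}$ in the set $S'$ produced by procedure $\mathcal{P}$ is an actual shortest path of $G'$ and that its distribution is exactly $\pi_{G'}(p'_{st}) = \frac{1}{n(n-1)}\cdot\frac{1}{\sigma'_{st}}$. Hence $S'$ is, up to one caveat about its size, precisely the kind of sample that \rk would generate if it were run from scratch on $G'$, and the error guarantee for $G'$ should transfer verbatim.

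Concretely, I would proceed in three steps. First, I would observe that the $r$ paths of $S'$ are mutually independent, each with law $\pi_{G'}$: the original samples $p_{(1)},\dots,p_{(r)}$ drawn by \rk on $G$ are i.i.d., and procedure $\mathcal{P}$ produces $p'_{(k)}$ from $p_{(k)}$ alone (in rule~2 using fresh, independent randomness), so independence is preserved while Lemma~\ref{lem:2} identifies the common marginal. Second, I would use the hypothesis $\mathit{VD}(G)\geq \mathit{VD}(G')$. Since $x\mapsto\lfloor\log_2(x-2)\rfloor$ is non-decreasing, this gives
\[
r=\frac{c}{\epsilon^{2}}\left(\lfloor\log_{2}(\mathit{VD}(G)-2)\rfloor+1+\ln\tfrac{1}{\delta}\right)\ \geq\ \frac{c}{\epsilon^{2}}\left(\lfloor\log_{2}(\mathit{VD}(G')-2)\rfloor+1+\ln\tfrac{1}{\delta}\right)=:r_{G'},
\]
i.e.\ $S'$ contains at least the number of samples that Equation~\eqref{sample_size} prescribes for $G'$. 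Third, I would invoke Lemma~\ref{lem:1} for $G'$: with $r\geq r_{G'}$ i.i.d.\ samples from $\pi_{G'}$, the approximations $\tilde c'_B(v)$ computed as the fraction of paths of $S'$ internal to $v$ satisfy $\Pr(\exists v\in V\text{ s.t. }|c'_B(v)-\tilde c'_B(v)|>\epsilon)<\delta$, which is exactly the claim.

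The only step that is not purely bookkeeping is the passage from ``$r_{G'}$ samples suffice'' to ``$r\geq r_{G'}$ samples suffice''. Here I would appeal to the fact underlying \rk's analysis, namely that $r_{G'}$ is chosen so that a random sample of that size is an $\epsilon$-approximation of the shortest-path range space with probability at least $1-\delta$, and that this $\epsilon$-sample property is monotone in the sample size --- drawing more than $r_{G'}$ i.i.d.\ samples only improves the probability of success. (Alternatively one can phrase Lemma~\ref{lem:1} directly for sample sizes ``at least'' the value in Equation~\eqref{sample_size}, which is how it is stated in~\cite{DBLP:conf/wsdm/RiondatoK14}.) The independence of the $p'_{(k)}$ deserves one explicit sentence as well, since Lemma~\ref{lem:2} asserts only the marginal law; everything else is a direct substitution of $G'$ for $G$ in Lemma~\ref{lem:1}.
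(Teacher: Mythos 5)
Your proposal is correct and follows essentially the same route as the paper, which simply observes that by Lemma~\ref{lem:2} the set $S'$ is distributed according to $\pi_{G'}$ and then invokes Lemma~\ref{lem:1}, using $\mathit{VD}(G)\geq \mathit{VD}(G')$ to ensure the sample size suffices. Your version merely spells out the bookkeeping the paper leaves implicit (independence of the $p'_{(k)}$ and monotonicity of the $\epsilon$-sample guarantee in the sample size), which is a fair and accurate expansion.
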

Algorithm~\ref{algo:main_algo}
shows the update procedure based on
Theorem~\ref{thm:approx_guarantee}. For each sampled node pair $(s_{i},t_{i}),\, i=1,...,r$,
we first update the SSSP from $s_i$, a step which will be
discussed in Sections~\ref{sub:update_weighted} and~\ref{sub:update_unweighted}.
 In case the distance
or the number of shortest paths between $s_{i}$ and $t_{i}$ has
changed, a new shortest path is sampled uniformly as in \textsf{RK}. 
This means that $\frac{1}{r}$ is subtracted from the score of each node in the old shortest path and the same quantity
is added to the nodes in the new shortest
path. On the other hand, if both distances and number of shortest
paths between $s_{i}$ and $t_{i}$ have not changed, nothing needs
to be updated. 
Considering edges in a batch allows us to recompute the BC scores only once instead of doing it after each single edge update. Moreover, this gives us the possibility to use specific batch algorithms for the update of the SSSP DAGs, which process the nodes affected by multiple edges of $\beta$ only once, instead of for each single edge.

Differently from \rk, in our dynamic algorithm we scan the neighbors every time we need the predecessors instead of storing them (Line~\ref{predecessors}). This allows us to use $\Theta(n)$ memory per sample (\ie, $\Theta(r \cdot n)$ in total) instead of $\Theta(m)$ per sample, while our experiments show that the running time is hardly influenced. The number of samples depends on $\epsilon$, so in theory this can be as large as $|V|$. However, our experiments show that relatively large values of $\epsilon$ (\eg $\epsilon=0.05$) lead to a good ranking of nodes with high BC and for such values the number of samples is typically much smaller than $|V|$, making the memory requirements of our algorithms significantly less demanding than those of the dynamic exact algorithms ($\Omega(n^2)$) for many applications.

\begin{algorithm2e}
\begin{small}
\LinesNumbered
\SetKwData{B}{$\tilde{c}_B$}
\SetKwInOut{Input}{Input}\SetKwInOut{Output}{Output}
\Input{Graph $G=(V,E)$, source node $s$, number of iterations $r$, batch $\beta$  of edge insertions/weight decreases}
\Output{New approximated BC values $\forall v\in V$}
\For{$i \leftarrow 1$ \KwTo $r$}
{
	$d^{old}_i\leftarrow d_{s_i}(t_i)$\;
	$\sigma^{old}_i\leftarrow \sigma_{s_i}(t_i)$\;
	$(d_{s_i},\sigma_{s_i})\leftarrow$\textsf{UpdateSSSP}($G,d_{s_i},\sigma_{s_i},\beta$)\; \label{lst:line:update_sssp}
	\If{$d_{s_i}(t_i)<d^{old}_i$ \textbf{or} $\sigma_{s_i}(t_i)\neq \sigma^{old}_i$}
	{\label{check_paths}
			\ForEach{$w \in p_{(i)}$}
			{
				\B($w$) $\leftarrow \B(w)-1/r$\;
			}
			$v \leftarrow t_i$\;
			$p_{(i)}\leftarrow$ empty list\;
			$P_{s_i}(v)\leftarrow \{u\in V | (u,v) \in E \wedge d_{s_i}(u)+\omega(u,v) = d_{s_i}(v)\}$\; \label{predecessors}
			\While{$P_{s_i}(v) \neq \{s_i\}$}
			{
				\mbox{sample $z \in P_{s_i}(v)$ with $\Pr=\sigma_{s_i}(z)/\sigma_{s_i}(v)$}\;
				$\B(z)\leftarrow \B(z)+1/r$\;
				add $z$ to $p_{(i)}$\;
				$v\leftarrow z$\;
			}
	}
}
\Return{$\{(v,\B(v)),\: v\in V\}$}
\end{small}
\caption{BC update after a batch $\beta$ of edge insertions/weight decreases}
\label{algo:main_algo}
\end{algorithm2e}

\subsection{Sampling new paths.}
\label{sub:fully-dyn}
In the previous section, we assumed that $\mathit{VD}(G) \geq \mathit{VD}(G')$. Although many real-world networks exhibit a shrinking-diameter behavior~\cite{DBLP:conf/kdd/LeskovecKF05}, it is clearly possible that \vd increases as a consequence of edge insertions/deletions or weight updates. If this happens, we can still update the old paths as described in Section~\ref{sub:substitution}, but we also need to sample new additional paths, according to the probability distribution $\pi_G'$. The general algorithm to update the BC scores after a batch $\beta$ could be described as follows: First, we update the old shortest paths as described in Section~\ref{sub:substitution}. Then, we recompute an upper bound on $\mathit{VD}(G')$ in linear time, using the algorithms described in Section~\ref{sec:new_vd_approx}. Using $\mathit{VD}(G')$, we compute the number of samples $r(G')$ defined in Eq.~(\ref{sample_size}). If $r(G') > r(G)$, we sample new $r(G')-r(G)$ additional paths using \rk. For undirected graphs, we also propose two fully-dynamic algorithms (one for weighted and one for unweighted graphs) to keep track of an upper bound on \vd over time (Section~\ref{sub:dyn_vd_approx}). This saves additional time, allowing for a quick recomputation of the upper bound after the batch instead of recomputing it from scratch.

Notice that, if edge deletions are allowed, it is not sufficient to check whether the distance and the number of shortest paths between two nodes $s$ and $t$ has not changed (Line~\ref{check_paths} of Algorithm~\ref{algo:main_algo}), since they might remain unchanged even if the set of shortest paths is actually different. In this case, we always replace the old shortest path with a new one (we basically remove the \textit{if} statement in Line~\ref{check_paths}).

In the following, we present the fully-dynamic algorithms (for weighted and unweighted graphs) needed to update the shortest paths (\textsf{updateSSSP} in Algorithm~\ref{algo:main_algo}) and the fully-dynamic algorithm that recomputes an upper bound on \vd for undirected graphs. Finally, we show how these algorithms can be combined to obtain an even faster algorithm (than the one described in this section) for dynamic BC approximation in undirected graphs (Section~\ref{sub:combined}).

\subsection{SSSP update in weighted graphs.}
\label{sub:update_weighted}
Our SSSP update is based on \textsf{T-SWSF}~\cite{DBLP:conf/wea/BauerW09}, which recomputes distances from a source node $s$ after a batch $\beta$ of weight updates (or edge insertions/deletions). For our purposes, we need two extensions of \textsf{T-SWSF}: an algorithm that also recomputes the number of shortest paths between $s$ and the other nodes (\sssp) and one that also updates a \vd approximation for the connected component of $s$ (\upvd) (the latter is used in the fully-dynamic \vd approximation for undirected graphs, described in Section~\ref{sub:dyn_vd_approx}). 
The \vd approximation is computed as described in Sections~\ref{sec:rk} and~\ref{sub:weighted}. Thus, \upvd keeps track of the two maximum distances $d'$ and $d''$ from $s$ and the minimum edge weight $\underline{\omega}$.

We call \textit{affected nodes} the nodes whose distance (or also whose number of shortest paths, in \sssp) from $s$ has changed as a consequence of $\beta$. Basically, the idea is to put the set $A$ of affected nodes $w$ into a priority queue $Q$ with priority $p(w)$ equal to the candidate distance of $w$. When $w$ is extracted, if there is actually a path of length $p(w)$ from $s$ to $w$, the new distance of $w$ is set to $p(w)$, otherwise $w$ is reinserted into $Q$ with a higher candidate distance. In both cases, the affected neighbors of $w$ are inserted into $Q$. 

Algorithm~\ref{algo:weighted} describes the SSSP update for weighted undirected graphs. The extension to directed graphs is trivial. The pseudocode updates both the \vd approximation for the connected component of $s$ and the number of shortest paths from $s$, so it basically includes both \sssp and \upvd.
Initially, we scan the edges $e =\{u,v\}$ in $\beta$ and, for each $e$, we insert the endpoint with greater distance from $s$ into $Q$ (w.l.o.g., let $v$ be such endpoint). The priority $p(v)$ of $v$ represents the \textit{candidate} new distance of $v$. This is the minimum between the $d(v)$ and $d(u)$ plus the weight of the edge $\{u, v\}$. Notice that we use the expression ``insert $v$ into $Q$'' for simplicity, but this can also mean update $p(v)$ if $v$ is already in $Q$ and the new priority is smaller than $p(v)$. 
When we extract a node $w$ from $Q$, we have two possibilities: (i) there is a path of length $p(w)$ and $p(w)$ is actually the new distance or (ii) there is no path of length $p(w)$ and the new distance is greater than $p(w)$. In the first case (Lines~\ref{update_weighted:part1-1}~-~\ref{update_weighted:part1-2}), we set $d(w)$ to $p(w)$ and insert the neighbors $z$ of $w$ such that $d(z)>d(w)+\omega(\{w,z\})$ into $Q$ (to check if new shorter paths to $z$ that go through $w$ exist). In the second case (Lines~\ref{update_weighted:part2-1}~-~\ref{update_weighted:part2-2}), there is no shortest path between $s$ and $w$ known anymore, so that we set $d(w)$ to $\infty$. We compute $p(w)$ as $\min_{\{v,w\}\in E} d(v) + \omega{(v,w)}$ (the new candidate distance for $w$) and insert $w$ into $Q$. Also its neighbors could have lost one (or all of) their old shortest paths, so we insert them into $Q$ as well. The update of $\underline{\omega}$ can be done while scanning the batch and of $d'$ and $d''$ when we update $d(w)$. 

The pseudocode also updates a global variable $vis(w)$ that keeps track, for each node $w$, of the number of source nodes from which $w$ is reachable. This is necessary for the fully-dynamic \vd approximation and will be explained in Section~\ref{sub:dyn_vd_approx}.
In particular, we decrease $vis(w)$ when updating $d(w)$ in case the old $d(w)$ was equal to $\infty$ (\ie w has become reachable) and we decrease $vis(w)$ when we set $d(w)$ to $\infty$ (\ie $w$ has become unreachable). We update the number of shortest paths after updating $d(w)$, as the sum of the shortest paths of the predecessors of $w$ (Lines~\ref{update_weighted:sp1}~-~\ref{update_weighted:sp2}).
In \upvd, $d'$ and $d''$ are recomputed while updating the distances (Line~\ref{line:update_dd}) and $\underline{\omega}$ is updated while scanning $\beta$ (Line~\ref{line:omega}).  
Let $|\beta |$ represent the cardinality of $\beta$ and let $||A ||$ represent the sum of the nodes in $A$ and of the edges that have at least one endpoint in $A$. Then, the following complexity derives from feeding $Q$ with the batch and inserting into/extracting from $Q$ the affected nodes and their neighbors.
\begin{algorithm2e}
\begin{footnotesize}
\LinesNumbered
\SetKwFunction{insertOrDecreaseKey}{insertOrDecreaseKey}
\SetKwFunction{extractMin}{extractMin}
\SetKwInOut{Input}{Input}\SetKwInOut{Output}{Output}
\Input{Graph $G=(V,E)$, vector $d$ of distances from $s$, vector $\sigma$ of number of shortest paths from $s$, batch $\beta$}
\Output{New values of $d(v)$ and $\sigma(v)$ $\forall v \in V$}
$Q\leftarrow$ empty priority queue\;
\ForEach{$e=\{u,v\} \in \beta, d(u)<d(v)$}
{\label{update_weighted:init1}
	$Q \leftarrow$ \insertOrDecreaseKey{$v, p(v)=\min\{d(u)+\omega(\{u,v\}),d(w)\}$}\;
}\label{update_weighted:init2}
$\underline{\omega} \leftarrow \min\{\underline{\omega},\  \omega(e): e \in \beta\}$\; \label{line:omega}
\While{there are nodes in $Q$}
{
		$\{w, p(w)\} \leftarrow$ \extractMin{$Q$}\;
		$con(w) \leftarrow \min_{z : (z,w)\in E} d(z) + \omega(z,w)$\;
		\If{$con(w) = p(w)$}
		{\label{update_weighted:part1-1}
			update $d'$ and $d''$\; \label{line:update_dd}
			\If{$d(w) = \infty$} 
			{
				$vis(w) \leftarrow vis(w)+1$\; \label{update_weighted:visinc}
			}
			$d(w) \leftarrow p(w)$; $\sigma(w)\leftarrow 0$\;
			\ForEach{incident edge $(z,w)$}
			{
				\If{$d(w)=d(z)+\omega(z,w)$}
				{\label{update_weighted:sp1}
					$\sigma(w)\leftarrow \sigma(w)+\sigma(z)$\; \label{line:sigma}
				}\label{update_weighted:sp2}
				\If{$d(z)\geq d(w)+\omega(z,w)$}
				{\label{update_w:neigh1}
					$Q \leftarrow $ \insertOrDecreaseKey{$z, p(z)=d(w)+\omega(z,w)$}\;
				}\label{update_w:neigh2}
			}
		}\label{update_weighted:part1-2}
		\Else
		{\label{update_weighted:part2-1}
			\If{$d(w) \neq \infty$}
			{
				$vis(w) \leftarrow vis(w)-1$\; \label{update_weighted:visdec}
				\If{vis(w)=0}{
				insert $w$ into $U$\;
				}
				\If{$con(w) \neq \infty$}
				{
					$Q \leftarrow $\insertOrDecreaseKey{$w, p(w)=con(w)$}\;
					\ForEach{incident edge $(z,w)$}
					{
						\If{$d(z)=d(w)+\omega(w,z)$}
						{
							$Q \leftarrow $\insertOrDecreaseKey{$z,p(z)=d(w)+\omega(z,w)$}\;
						}
					}
					$d(w)\leftarrow \infty$\;
				}
			}
		}\label{update_weighted:part2-2}
}
\end{footnotesize} 
\caption{SSSP update for weighted graphs (\textsf{updateSSSP-W})}
\label{algo:weighted}
\end{algorithm2e}

\begin{lemma}
\label{thm:complexity}
The time required by \upvd (\sssp) to update the distances and \vda (the number of shortest paths) is $O(|\beta |\log |\beta | + ||A || \log ||A ||)$.
\end{lemma}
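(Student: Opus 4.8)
The plan is to reduce the claim to the known complexity of \textsf{T-SWSF}~\cite{DBLP:conf/wea/BauerW09} by showing, first, that the priority-queue manipulations dominate the running time and, second, that the extra bookkeeping for $\sigma$, $d'$, $d''$, $\underline{\omega}$ and $vis$ contributes only lower-order terms.

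First I would charge the initialization loop (Lines~\ref{update_weighted:init1}--\ref{update_weighted:init2}): it performs at most $|\beta|$ \texttt{insertOrDecreaseKey} calls on a queue that never holds more than $|\beta|$ elements, hence $O(|\beta|\log|\beta|)$; the update of $\underline{\omega}$ along the way costs an additional $O(|\beta|)$.

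Next comes the main loop, where I would reuse the structural invariants of \textsf{T-SWSF}. The key facts to establish are: (i) a node is inserted into $Q$ only if it is affected or incident to an affected node, so every node ever in $Q$, and every edge scanned, is accounted for by $\|A\|$ (together with the $|\beta|$ edges of the batch); (ii) the candidate priorities handed to a given node over the run are non-increasing, so a node leaves the ``reject'' branch (distance raised to $\infty$) at most once before being extracted with its correct new finite distance, and a possible further extraction is discarded immediately because $d(w)=\infty$. Given (i) and (ii), the number of queue operations is $O(\|A\|+|\beta|)$, each costing $O(\log(\|A\|+|\beta|))$, and each extracted node $w$ additionally triggers one scan of its incident edges, for a total of $\sum_{w}O(\deg(w)\log(\|A\|+|\beta|))$. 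Since the batch edges that actually change a distance are incident to an affected node and hence counted in $\|A\|$, while the remaining batch edges only contribute the $|\beta|\log|\beta|$ term already isolated above, this sums to $O(|\beta|\log|\beta| + \|A\|\log\|A\|)$.

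Finally I would verify that the extensions are free: recomputing $\sigma(w)$ once $d(w)$ is fixed (Lines~\ref{update_weighted:sp1}--\ref{update_weighted:sp2}) is a single pass over the edges incident to $w$, already paid for; maintaining $d'$ and $d''$ (Line~\ref{line:update_dd}) and the reachability counter $vis(w)$ (Lines~\ref{update_weighted:visinc} and~\ref{update_weighted:visdec}) costs $O(1)$ per distance change. Adding the three contributions gives the stated bound. I expect the main obstacle to be part~(i) of the structural argument — proving that no vertex or edge is touched outside of what $\|A\|$ charges for, in particular that vertices inserted as neighbors but whose distance (and number of shortest paths) turns out unchanged are still correctly charged, and that the non-increasing-priority invariant indeed prevents a vertex from oscillating between the two branches; this is where the adaptation of the \textsf{T-SWSF} correctness proof to our variant with the $\infty$/$vis$ handling has to be done carefully.
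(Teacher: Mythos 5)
Your overall accounting — charging the initialization $O(|\beta|\log|\beta|)$, then bounding the main loop by queue operations plus edge scans of extracted nodes, and absorbing the $\sigma$, $d'$, $d''$, $\underline{\omega}$ and $vis$ bookkeeping into those same scans — has the same structure as the paper's proof. The gap is exactly in the step you defer. The invariant you offer to control re-extractions, namely that ``the candidate priorities handed to a given node over the run are non-increasing,'' is false for this algorithm: when $w$ is extracted and the consistency check fails ($con(w)>p(w)$), $w$ is reinserted with the \emph{strictly larger} key $p(w)=con(w)$ (the paper says explicitly that a rejected node is reinserted with a higher candidate distance; \texttt{insertOrDecreaseKey} only forbids key increases while the node sits in $Q$, not across extract/reinsert cycles). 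Consequently the conclusion you draw from it — that a node leaves the reject branch at most once and that any further extraction ``is discarded immediately because $d(w)=\infty$'' — is unsupported as stated, and its ordering is also off: after an accept-branch extraction $d(w)$ is finite, so it is not $d(w)=\infty$ that would discard a subsequent extraction in that order.

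What actually closes the argument (and what the paper uses) is the $\infty$-mechanism rather than key monotonicity: the reject branch performs work (an edge scan and heap operations) only if $d(w)\neq\infty$, and its first execution sets $d(w)=\infty$; $d(w)$ can become finite again only through the accept branch, which then assigns the final distance. Hence each affected node is extracted \emph{with nonzero work} at most twice — once per branch — and each such extraction costs at most one heap operation per incident edge, giving the $O(\|A\|\log\|A\|)$ term for the main loop. This two-extraction bound is precisely the part you flag as ``to be done carefully,'' i.e.\ the heart of the lemma, and it is not obtained by appealing to the known \textsf{T-SWSF} bound as a black box, since the lemma is exactly about re-establishing that bound for the modified algorithm (with the $\sigma$ recomputation, the insertion of neighbors at equal distance, and the $\infty$/$vis$ handling). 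Until you replace your item (ii) with this kind of argument, the proof is incomplete.
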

\begin{proof}
In the initial scan of the batch (Lines~\ref{update_weighted:init1}-\ref{update_weighted:init2}), we scan the nodes of the batch and insert the affected nodes into $Q$ (or update their value). This requires at most one heap operation (insert or decrease-key) for each element of $\beta$, therefore $O(|\beta| \log|\beta|)$ time.
When we extract a node $w$ from $Q$, we have two possibilities: (i) $con(w)=p(w)$ (Lines~\ref{update_weighted:part1-1}~-~\ref{update_weighted:part1-2}) or (ii) $con(w)>p(w)$ (Lines~\ref{update_weighted:part2-1}~-~\ref{update_weighted:part2-2}). In the first case, we scan the neighbors of $w$ and perform at most one heap operation for each of them (Lines~\ref{update_w:neigh1}~-~\ref{update_w:neigh2}). In the second case, we scan only if $d(w)\neq \infty$. Therefore, we can perform up to one heap operation per incident edge of $w$, for each extraction of $w$ in which $d(w)\neq \infty$ or $con(w)=p(w)$.
How many times can an affected node $w$ be extracted from $Q$ with $d(w) \neq \infty$ or $con(w)=p(w)$? If the first time we extract $w$, $con(w) $ is equal to $p(w)$ (case (i)), then the final value of $d(w)$ is reached and $w$ is not inserted into $Q$ anymore. If the first time we extract $w$, $con(w)$ is greater than $p(w)$ (case (ii)), then $w$ can be inserted into the queue again. However, its distance is set to $\infty$ and therefore no additional operations are performed, until $d(w)$ becomes less than $\infty$. But this can happen only in case (i), after which $d(w)$ reaches its final value. To summarize, each affected node $w$ can be extracted from $Q$ with $d(w)\neq \infty$ or $con(w)=p(w)$ at most twice and, every time this happens, at most one heap operation per incident edge of $w$ is performed. The complexity is therefore $O(|\beta |\log |\beta | + ||A || \log ||A ||)$. \qed
\end{proof}
\subsection{SSSP update in unweighted graphs.}
\label{sub:update_unweighted}
For unweighted graphs, we basically replace the priority queue $Q$ of \upvd and \sssp with a list of queues. Each queue represents a \emph{level} (distance from $s$) from 0 (which only the source belongs to) to the maximum distance $d'$. The levels replace the priorities and also in this case represent the candidate distances for the nodes. Algorithm~\ref{algo:unweighted} describes the pseudocode for unweighted graphs. As in Algorithm~\ref{algo:weighted}, we
first scan the batch (Lines~\ref{update:batch1} -~\ref{update:batch2}) and insert the nodes
in the queues. Then (Lines~\ref{update:queues1} -~\ref{update:queues2}), we scan the queues in order of increasing distance from $s$, in a fashion similar to that of a priority queue. 
In order not to insert a node in the queues multiple times, we use colors:
Initially we set all the nodes to white and then we set a node $w$ to black only when we find the final distance of $w$ (\ie when we set $d(w)$ to $k$) (Line~\ref{update:black1}). Black nodes extracted from a queue are then skipped (Line~\ref{update:black2}). At the end we reset all nodes to white.
\begin{algorithm2e}
\begin{footnotesize}
\LinesNumbered
\SetKwInOut{Input}{Input}\SetKwInOut{Output}{Output}
\Input{Graph $G=(V,E)$, vector $d$ of distances from $s$, vector $\sigma$ of number of shortest paths from $s$, batch $\beta$}
\Output{New values of $d(v)$ and $\sigma(v)$ $\forall v \in V$}
\textbf{Assumption:} $color(w)=white\quad \forall w \in V$\;
$Q[]\leftarrow$ array of empty queues\;
\ForEach{$e=\{u,v\} \in \beta, d(u)<d(v)$}
{\label{update:batch1}
	$k \leftarrow d(v)+1$; enqueue $v \rightarrow Q[k]$\;
}\label{update:batch2}
$k \leftarrow 1$\; \label{update:queues1}
\While{there are nodes in $Q[j],  j\geq k$}
{\label{lst2:line:start2}
	\While{$Q[k]\neq \emptyset$}
	{
		dequeue $w\leftarrow Q[k]$\;
		\textbf{if} $color(w)=black$ \textbf{then}			continue\; \label{update:black2}
	
		$con(w) \leftarrow \min_{z : (z,w)\in E} d(z) + 1$\;
		\If{$con(w) = k$}
		{\label{update:if}
			update $d'$ and $d''$\;
			\textbf{if} $d(w) = \infty$  \textbf{then}   $vis(w) \leftarrow vis(w)+1$\; \label{update:vis}
			$d(w) \leftarrow k$; $\sigma(w)\leftarrow 0$; $color(w) \leftarrow black$\; \label{update:black1}
			\ForEach{incident edge $(z,w)$}
			{
				\If{$d(w)=d(z)+1$}
				{\label{update:sp}
					$\sigma(w)\leftarrow \sigma(w)+\sigma(z)$\;
				}\label{update:sp2}
				\If{$d(z)>k$}
				{\label{update:neigh1}
					enqueue $z \rightarrow Q[k+1]$\;
				}\label{update:neigh2}
			}
		}\label{update:if2}
		\Else
		{\label{update:else}
			\If{$d(w) \neq \infty$}
			{
				$d(w)\leftarrow \infty$\;
				$vis(w) \leftarrow vis(w)-1$\; \label{update:vis3}
				\If{vis(w)=0}{
				insert $w$ into $U$\;
				}
				\If{$con(w) \neq \infty$}
				{
					enqueue $w \rightarrow Q[con(w)]$\;
					\ForEach{incident edge $(z,w)$}
					{
						\If{$d(z)>k$}
						{
							enqueue $z \rightarrow Q[k+1]$\;
						}
					}
				}
			}
		}\label{update:else2}
	}
	$k\leftarrow k+1$\;
}\label{update:queues2}
Set to white all the nodes that have been in $Q$\;
\end{footnotesize}
\caption{SSSP update for unweighted graphs (\textsf{updateSSSP-U})}
\label{algo:unweighted}
\end{algorithm2e}
 The replacement of the priority queue with the list of queues decreases the complexity of the SSSP update algorithms for unweighted graphs, which we call \upvdu and \ssspu, in analogy with the ones for weighted graphs.
\begin{lemma}
\label{thm:complexity2}
The time required by \upvdu (\ssspu) to update the distances and \vda (the number of shortest paths) is $O(|\beta |+ ||A ||+d_{\max})$, where $d_{\max}$ is the maximum distance from $s$ reached during the update.
\end{lemma}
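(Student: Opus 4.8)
The plan is to adapt the proof of Lemma~\ref{thm:complexity} to the bucketed setting of Algorithm~\ref{algo:unweighted}: every ``priority-queue'' operation on the array $Q[\,]$ of FIFO queues now costs $O(1)$ instead of $O(\log\cdot)$, so the $\log$ factors disappear; in exchange we incur one genuinely new cost, namely that the outer loop (Lines~\ref{update:queues1}--\ref{update:queues2}) advances the current level $k$ one unit at a time and may traverse empty queues $Q[k]$. Since no node is ever enqueued at a level larger than the largest distance reached during the update, the number of iterations of that loop is $O(d_{\max})$, which is where the additive $d_{\max}$ term comes from. The rest of the bound is obtained by re-running the accounting of Lemma~\ref{thm:complexity} with $O(1)$-time buckets.

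First I would bound the initial batch scan (Lines~\ref{update:batch1}--\ref{update:batch2}): each $e\in\beta$ triggers constant work plus one $O(1)$ enqueue, so together with the concurrent bookkeeping of $\underline{\omega}$, $d'$ and $d''$ this phase costs $O(|\beta|)$. For the extraction phase I would use the coloring scheme in place of the $d(w)=\infty$ marking of the weighted algorithm to suppress redundant work. When a node $w$ is dequeued: if $color(w)=black$, it is discarded in $O(1)$ (Line~\ref{update:black2}); otherwise we spend $O(\deg(w))$ to evaluate $con(w)$ and then either (case (i), $con(w)=k$) fix the final distance $d(w)=k$, color $w$ black, and scan the incident edges of $w$ once to recompute $\sigma(w)$, update $d'$ and $d''$, and enqueue each neighbor whose distance can still decrease; or (case (ii)) reset $d(w)$ to $\infty$, adjust $vis(w)$, and re-enqueue $w$ together with the relevant neighbors. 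Exactly as in Lemma~\ref{thm:complexity}, once $w$ has been handled in case (i) it is black and never processed substantively again, and once it has been handled in case (ii) its distance is $\infty$ so no further real work is done for $w$ until it re-enters case (i); hence each affected node $w$ undergoes only a constant number of substantive extractions, each costing $O(\deg(w))$, for a total of $O(||A||)$ (the updates of $\sigma$, $d'$, $d''$ are subsumed in this per-edge charge). The total number of enqueues, and therefore of dequeues — including the trivial black ones — is the $O(|\beta|)$ batch enqueues plus $O(1)$ per incident edge of each affected node, i.e.\ $O(|\beta|+||A||)$, each operation being $O(1)$; the final recoloring of all nodes that ever entered $Q$ is dominated by this. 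Summing the batch scan $O(|\beta|)$, the level-loop overhead $O(d_{\max})$, and the extraction work $O(||A||)$ yields $O(|\beta|+||A||+d_{\max})$, as claimed; the statement for \ssspu\ is the special case in which the $d'$, $d''$ bookkeeping is dropped.

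The step needing the most care — and the main obstacle — is the amortization argument that each affected node is extracted \emph{substantively} only $O(1)$ times. This is the unweighted analogue of the ``at most twice'' claim in the proof of Lemma~\ref{thm:complexity}, and one must verify that the interplay of the coloring ($black$ after case (i)) and the $d(w)=\infty$ resetting in case (ii) really does prevent $w$ from triggering the $O(\deg(w))$ recomputation of $con(w)$ more than a constant number of times; if that failed, the $||A||$ term would collapse. The only other point to check is the claim that the outer loop never needs to reach a level exceeding $d_{\max}$, which follows because nodes are enqueued only at candidate distances of the form (realized distance)$+1$ or at $con(\cdot)$-values, all bounded by the maximum distance reached. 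Everything else is a routine transcription of the weighted analysis with constant-time buckets.
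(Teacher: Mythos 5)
Your proposal is correct and follows essentially the same route as the paper's own proof: an $O(|\beta|)$ batch scan, an additive $O(d_{\max})$ charge for walking the array of level queues, and the same amortization (mirroring Lemma~\ref{thm:complexity}) that each affected node is extracted substantively only a constant number of times -- blocked afterwards by the black coloring in case (i) and the $d(w)=\infty$ reset in case (ii) -- at cost $O(\deg(w))$ each, giving the $O(\|A\|)$ term. The only difference is presentational: you make the enqueue/dequeue counting and the ``levels never exceed $d_{\max}$'' point explicit, which the paper leaves implicit.
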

\begin{proof}
The complexity of the initialization (Lines~\ref{update:batch1}~-~\ref{update:batch2}) of Algorithm~\ref{algo:unweighted}
is $O(|\beta|)$, as we have to scan the batch. In the main loop (Lines~\ref{update:queues1}~-~\ref{update:queues2}), we scan all the list
of queues, whose final size is $d_{\max}$. 
Every time we extract a node $w$ whose color is not black, we scan all the 
incident edges, therefore this operation is linear in the number of neighbors of $w$. 
If the first time we extract $w$ (say at level $k$) $con(w) $ is equal to $k$, then $w$ will be set to black and will not be scanned anymore.
If the first time we extract $w$, $con(w)$ is instead greater than $k$, $w$ will be inserted into the queue at level $con(w)$ (if $con(w)<\infty$). 
Also, other inconsistent neighbors of $w$ might insert $w$ in one of the queues. However, after the first time $w$ is extracted,
its distance is set to $\infty$, so its neighbors will not be scanned unless $con(w)=k$, in which case 
they will be scanned again, but for the last time, since $w$ will be set to black. To summarize, each affected node and its neighbors can 
be scanned at most twice. The complexity of the
algorithm is therefore $O(|\beta|+\|A\|+d_{\max})$. \qed
\end{proof}

\subsection{Fully-dynamic \vd approximation.}
\label{sub:dyn_vd_approx}
The algorithm keeps track of a \vd approximation for an undirected graph $G$, \ie for each connected component of $G$. It is composed of two phases. In the initialization, we compute an SSSP from a source node $s_i$ for each connected component $C_i$. During the SSSP search from $s_i$, we also compute a \vd approximation $\tilde{\mathit{VD}_i}$ for $C_i$, as described in Sections~\ref{sec:rk} and~\ref{sub:weighted}. In the update, we recompute the SSSPs and the \vd approximations with \upvd (or \upvdu). Since components might split or merge, we might need to compute new approximations, in addition to update the old ones. To do this, for each node, we keep track of the number $vis(v)$ of times it has been visited. This way we discard source nodes that have already been visited and compute a new approximation for components that have become unvisited. 
Algorithm~\ref{algo:vd_approx} describes the initialization. Initially, we put all the nodes in a queue and compute an SSSP from the nodes we extract (Line~\ref{line:sssp}). During the SSSP search, we increase the number of visits $vis(v)$ for all the nodes we traverse (Line~\ref{line:mark_visited}). When extracting the nodes, we skip those that have already been visited (Line~\ref{line:skip}): this avoids computing multiple approximations for the same component. 

In the update (Algorithm~\ref{algo:dyn_vd_approx}), we recompute the SSSPs and the \vd approximations with \upvd (or \upvdu) (Line~\ref{line:upvd}). Since components might split, we might need to add \vd approximations for some new subcomponents, in addition to recomputing the old ones. Also, if components merge, we can discard the superfluous approximations. To do this, we update $vis(v)$ for each node within \upvd (or \upvdu). Before the update, all the nodes are visited exactly once. While updating an SSSP from $s_i$, we increase (decrease) by one $vis(v)$ of the nodes $v$ that become reachable (unreachable) from $s_i$. This way we can skip the update of the SSSPs from nodes that have already been visited (Line~\ref{line:skip}). After the update, for all nodes $v$ that have become unvisited ($vis(v)=0$), we compute a new \vd approximation from scratch (Lines~\ref{line:newvd1} -~\ref{line:newvd2}).
The complexity of the update of the \vd approximation derives from the \vda update in the single components, using \upvd and \upvdu.
\begin{theorem}
\label{thm:complexity_vd}
The time required to update the \vd approximation is $O(n_c\cdot |\beta |\log |\beta |+\sum_{i=1}^{n_c}||A^{(i)}|| \log ||A^{(i)}||)$ in weighted graphs and $O(n_c\cdot |\beta |+\sum_{i=1}^{n_c}||A^{(i)}||+d^{(i)}_{max})$ in unweighted graphs, where $n_c$ is the number of components in $G$ before the update and $A^{(i)}$ is the sum of affected nodes in $C_i$ and their incident edges.
\end{theorem}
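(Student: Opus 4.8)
The plan is to obtain the bound by summing the per-component costs established in Lemmas~\ref{thm:complexity} and~\ref{thm:complexity2}, and to argue that the extra bookkeeping (the $vis$-counters, the skipping of already-visited source nodes, and the recomputation of fresh approximations for newly-unvisited components) does not increase the asymptotic cost. First I would observe that the update of Algorithm~\ref{algo:dyn_vd_approx} iterates over the $n_c$ source nodes $s_1,\dots,s_{n_c}$ fixed before the batch, and for each $s_i$ it calls \upvd (respectively \upvdu) on the batch $\beta$, restricted to the component $C_i$. By Lemma~\ref{thm:complexity}, the $i$-th such call costs $O(|\beta|\log|\beta| + \|A^{(i)}\|\log\|A^{(i)}\|)$ in the weighted case, and by Lemma~\ref{thm:complexity2} it costs $O(|\beta| + \|A^{(i)}\| + d^{(i)}_{\max})$ in the unweighted case, where $A^{(i)}$ is the set of affected nodes in $C_i$ together with their incident edges. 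Summing over $i=1,\dots,n_c$ gives exactly the two claimed expressions, once we fold the leading $n_c\cdot|\beta|\log|\beta|$ (resp.\ $n_c\cdot|\beta|$) terms together.

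Next I would handle the auxiliary operations so they can be absorbed into the above. The maintenance of $vis(v)$ is done inside \upvd/\upvdu (Lines~\ref{update_weighted:visinc},~\ref{update_weighted:visdec} of Algorithm~\ref{algo:weighted} and Lines~\ref{update:vis},~\ref{update:vis3} of Algorithm~\ref{algo:unweighted}): each increment/decrement happens when a node's distance is set from $\infty$ to finite or vice versa, i.e.\ at most once per affected node per call, hence already charged to $\|A^{(i)}\|$. Skipping an already-visited source node (Line~\ref{line:skip}) costs $O(1)$ per skipped node, which is dominated by the $n_c$ term. The only genuinely new ingredient is the recomputation of a \vd approximation from scratch for each node that has become unvisited after the batch (Lines~\ref{line:newvd1}--\ref{line:newvd2}): I would argue that every such fresh SSSP is run on a component that, before the batch, was part of some $C_i$ and had all its traversed nodes among the affected nodes $A^{(i)}$ of that component (a node becomes unvisited only if its distance from its former source changed, in fact went to $\infty$), so the cost of the fresh SSSP on a split-off subcomponent is again $O(\|A^{(i)}\|\log\|A^{(i)}\|)$ (resp.\ linear), and these subcomponents are disjoint, so the total stays within the stated sum.

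The step I expect to be the main obstacle is precisely this last accounting for component splits and merges: one must be careful that the nodes which form a brand-new component (and therefore trigger a from-scratch approximation) are genuinely a subset of the affected nodes of some pre-batch component, so that re-traversing them does not exceed the $\|A^{(i)}\|$ budget, and that no node is charged to two different such budgets. A merge is easier — it only causes source nodes to be skipped, which is the cheap direction — but I would still want to state explicitly that when two components $C_i$ and $C_j$ merge, we simply run \upvd from both $s_i$ and $s_j$, discover one of them is now redundant (its nodes are visited more than once), and discard its approximation; this is still covered by $O(\|A^{(i)}\|\log\|A^{(i)}\|) + O(\|A^{(j)}\|\log\|A^{(j)}\|)$. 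Once this is pinned down, the theorem follows by summing Lemmas~\ref{thm:complexity} and~\ref{thm:complexity2} over the $n_c$ components.
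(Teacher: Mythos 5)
Your proposal is correct and follows essentially the same route as the paper's proof: apply Lemmas~\ref{thm:complexity} and~\ref{thm:complexity2} to each of the $n_c$ per-source update calls and sum, then observe that the queue of newly unvisited nodes and the from-scratch SSSPs for split-off components touch only affected nodes (each scanned, with its incident edges, a constant number of times over disjoint subcomponents), so they are absorbed into the $\sum_i \|A^{(i)}\|$ terms. The extra care you flag about splits, merges, and double charging is exactly the accounting the paper makes, so no gap remains.
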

\begin{proof}
In the first part (Lines~\ref{vdup:loop1}~-~\ref{vdup:loop2} of Algorithm~\ref{algo:dyn_vd_approx}), we update an SSSP with \upvd or \upvdu for each source node $s_i$ such that $vis(s_i)$ is not greater than 1. Therefore the complexity of the first part is $O(n_c\cdot |\beta |\log |\beta |+\sum_{i=1}^{n_c}||A^{(i)}|| \log ||A^{(i)}||)$ in weighted graphs and $O(n_c\cdot |\beta |+\sum_{i=1}^{n_c}||A^{(i)}||+d^{(i)}_{max})$ in unweighted, by Lemmas~\ref{thm:complexity} and~\ref{thm:complexity2}. Only some of the affected nodes (those whose distance from a source node becomes $\infty$) are inserted into the queue $U$. Therefore the cost of scanning $U$ in Lines~\ref{line:newvd1} -~\ref{line:newvd2} is $O(\sum_{i=1}^{n_c}||A^{(i)}||)$. New SSSP searches are computed for new components that are not covered by the existing source nodes anymore. However, also such searches involve only the affected nodes and each affected node (and its incident edges) is scanned at most once during the search. Therefore, the total cost is $O(n_c\cdot |\beta |\log |\beta |+\sum_{i=1}^{n_c}||A^{(i)}|| \log ||A^{(i)}||)$ for weighted graphs and $O(n_c\cdot |\beta |+\sum_{i=1}^{n_c}||A^{(i)}||+d^{(i)}_{max})$ for unweighted graphs. \qed
\end{proof}

\begin{lemma}
\label{thm:vd_correctness}
At the end of Algorithm~\ref{algo:vd_approx}, $vis(v)=1$ for all $v \in V$, and exactly one \vd approximation is computed for each connected component of $G$.
\end{lemma}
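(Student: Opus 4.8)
The plan is to show that each connected component of $G$ triggers \emph{exactly one} SSSP search (and hence exactly one \vd approximation), and that this search increments $vis$ by exactly one on every node of that component and on no other node; the statement then follows by a simple counting argument.

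First I would record two basic invariants of Algorithm~\ref{algo:vd_approx}. (a) Every node is eventually dequeued: $Q$ initially holds all of $V$, each iteration removes one element, and nothing is ever re-enqueued, so the algorithm terminates and every node is processed exactly once. (b) An SSSP search is launched from a node $s$ only when $s$ is dequeued with $vis(s)=0$ (the guard on Line~\ref{line:skip}); such a search, being an undirected single-source traversal, visits precisely the nodes of the connected component $C(s)$ of $s$, incrementing $vis(v)$ by exactly one for each $v\in C(s)$ (Line~\ref{line:mark_visited}) and leaving $vis$ unchanged outside $C(s)$. In particular $vis$ is non-decreasing, and at any time $vis(v)$ equals the number of completed SSSP searches whose source lies in $C(v)$.

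Next I would fix a component $C$ and prove exactly one search is run from a node of $C$. \emph{At least one:} let $u$ be the first node of $C$ to be dequeued, which exists by (a). When $u$ is dequeued, no search has yet been launched from a node of $C$, since by (b) any such search originates from a node of $C$ dequeued strictly earlier, contradicting the choice of $u$; hence $vis(u)=0$ at that moment, the search from $u$ is performed, and a \vd approximation for $C$ is produced. \emph{At most one:} if searches were run from two sources $s,s'\in C$ with $s$ dequeued before $s'$, then the search from $s$ increments $vis$ on all of $C$, so $vis(s')\ge 1$ when $s'$ is dequeued and $s'$ is skipped on Line~\ref{line:skip} — a contradiction. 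So $C$ contributes exactly one search and exactly one \vd approximation.

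Finally I would combine these facts: each $v\in V$ lies in a unique component $C$, for which exactly one SSSP search is run; that search increments $vis(v)$ exactly once and no other search touches $v$, so, starting from $vis(v)=0$, we get $vis(v)=1$ at termination. The same per-component count shows the number of \vd approximations produced equals the number of connected components, one per component. The one delicate point is the ``at least one'' argument — that $vis(u)=0$ when the first node $u$ of $C$ is dequeued — which is exactly where invariant (b) (that $vis$ is bumped only on nodes inside the source's own component, and that sources are themselves dequeued nodes) does the work.
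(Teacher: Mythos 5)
Your proposal is correct and follows essentially the same argument as the paper: the ``at least one'' part (the first node of a component to be extracted from $U$ still has $vis=0$ and thus becomes the unique source) and the ``at most one'' part (a second source in the same component would already have been visited by the earlier search, hence skipped by the guard on Line~\ref{line:skip}) are exactly the paper's two steps, merely reorganized per component instead of per node. No gaps.
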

\begin{proof}
Let $v$ be any node. Then $v$ must be scanned by \emph{at least} one source node $s_i$ in the while loop (Lines~\ref{initvd:newsamples1}~-~\ref{initvd:newsamples2}): In fact, either $v$ is visited by some $s_i$ before $v$ is extracted from $U$, or $vis(v)=0$ at the moment of the extraction and $v$ becomes a source node itself. This implies that $vis(v) \geq 1, \ \forall v \in V$.
On the other hand, $vis(v)$ cannot be greater than $1$. In fact, let us assume by contradiction that $vis(v)>1$. This means that there are at least two source nodes $s_i$ and $s_j$ ($i<j$, w.l.o.g.) that are in the same connected component as $v$. Then also $s_i$ and $s_j$ are in the same connected component and $s_j$ is visited during the SSSP search from $s_i$. Then $vis(s_j)=1$ before $s_j$ is extracted from $U$ and $s_j$ cannot be a source node.
Therefore, $vis(v)$ is exactly equal to 1 for each $v \in V$, which means that exactly one \vd approximation is computed for each connected component of $G$.
 \qed
\end{proof}

\begin{proposition}
\label{thm:dyn_vd_correctness}
Let $\mathcal{C}' = \{C'_1,...,C'_{n'_c}\}$ be the set of connected components of $G$ after the update. Algorithm~\ref{algo:dyn_vd_approx} updates or computes exactly one \vd approximation for each $C'_i \in \mathcal{C}'$.
\end{proposition}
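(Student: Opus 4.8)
\noindent\emph{Proof proposal.} The plan is to mirror the argument of Lemma~\ref{thm:vd_correctness}, replacing ``all nodes are visited once at the start'' with ``all nodes are visited once \emph{before} the batch'' as an inductive hypothesis. Concretely, I would carry along the invariant that $vis(v)$ always equals the number of \vd approximations whose currently stored SSSP tree spans $v$'s connected component, and then show that at termination of Algorithm~\ref{algo:dyn_vd_approx} we have $vis(v)=1$ for every $v\in V$. Since \upvd and \upvdu recompute distances and reachability from their source correctly (Sections~\ref{sub:update_weighted} and~\ref{sub:update_unweighted}), an SSSP tree that is kept, updated, or freshly built spans exactly one connected component of the updated graph; hence a node $v\in C'_i$ lies on an approximation's tree iff that approximation is ``assigned'' to $C'_i$. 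Combining this with $vis(v)=1$ for all $v$ then gives precisely the claim: each $C'_i\in\mathcal{C}'$ carries exactly one \vd approximation.

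The first step is to verify the invariant is preserved. Before the batch it holds with $vis(v)=1$ everywhere and exactly one source per component (the base case being Lemma~\ref{thm:vd_correctness}). Inside the source loop (Lines~\ref{vdup:loop1}--\ref{vdup:loop2}), each call to \upvd/\upvdu (Line~\ref{line:upvd}) increments $vis(v)$ exactly when $v$ becomes reachable from its source and decrements it when $v$ becomes unreachable; a source that is found superfluous via $vis(s_i)>1$ and skipped/discarded (Line~\ref{line:skip}) has its approximation removed and $vis$ decremented along its stale tree. Each such operation changes $vis(v)$ by exactly the change in the number of approximations covering $v$, so the invariant survives; the same holds for the fresh searches launched for unvisited nodes (Lines~\ref{line:newvd1}--\ref{line:newvd2}), which increment $vis$ along the newly built tree.

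The second step is the lower bound $vis(v)\ge 1$. Fix $v$ and let $C'$ be its post-batch component. If $C'$ contains an old source whose update was \emph{not} skipped, its updated tree spans $C'$, so $v$ lies on it and $vis(v)\ge 1$; note that if $C'$ contained any old source at all, it would contain a non-skipped one (the one that caused the skip lies in the same component and was itself updated). Otherwise $C'$ contains no old source, so every node of $C'$ became unreachable from the source of its former component and was inserted into $U$; when $U$ is scanned, the first node of $C'$ reached still has $vis=0$, becomes a source, and its fresh SSSP spans $C'$, again giving $vis(v)\ge 1$. For the matching upper bound $vis(v)\le 1$, suppose two approximations rooted at $q_1,q_2\in C'$ both survive for $C'$, with $q_1$ committed first in the processing order (old sources in index order, then nodes drawn from $U$). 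When $q_1$'s tree is built it spans $C'$ and hence reaches $q_2$, so $vis(q_2)\ge 1$ before $q_2$ is considered; then $q_2$ is either an old source with $vis(q_2)>1$ (skipped/discarded in Line~\ref{line:skip}) or a node from $U$ with $vis(q_2)\ge 1$ (skipped in Lines~\ref{line:newvd1}--\ref{line:newvd2}), so $q_2$ yields no second approximation — a contradiction. Together these give $vis(v)=1$ for all $v$, and the invariant finishes the proof.

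The step I expect to be the main obstacle is the bookkeeping in the upper bound, i.e.\ pinning down the exact semantics of the skip/discard step (Line~\ref{line:skip}) and of how $vis$ is adjusted when an old component splits while another pair of old components merges. One must check that in every combination — an old source whose (possibly enlarged) component is still correctly covered, an old source discarded because of a merge, and a new source created for a piece that split off — the increments performed inside \upvd/\upvdu and the explicit discard together leave $vis$ consistent, so that no $C'_i$ ends up with zero or with two approximations. Once this interaction is made precise, everything else is a routine adaptation of the proof of Lemma~\ref{thm:vd_correctness}.
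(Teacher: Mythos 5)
Your route is genuinely different from the paper's: the paper proves the proposition by a case analysis over three pure scenarios (a component that survives unchanged, two components that merge, one component that splits) and asserts that combinations follow, whereas you set up a global invariant on $vis$ and derive the claim from matching bounds $vis(v)\ge 1$ and $vis(v)\le 1$. The upper-bound half of your argument is sound (the earliest committed source covering a new component forces every later candidate in that component to be skipped, using the pre-batch invariant of one source per component). The genuine gap sits exactly where you yourself flag the ``main obstacle,'' and it does not close by routine bookkeeping: you assert that the discard step of Algorithm~\ref{algo:dyn_vd_approx} (Line~\ref{line:skip2}) ``has its approximation removed and $vis$ decremented along its stale tree,'' but the pseudocode performs no such decrement -- it only removes $s_i$ and $\tilde{\mathit{VD}}_i$ and relies on the final blanket reset of $vis(v)$ to $1$. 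Hence your invariant (``$vis(v)$ equals the number of stored trees spanning $v$'s component'') is not maintained, and the branch of your lower bound reading ``otherwise $C'$ contains no old source, so every node of $C'$ became unreachable from the source of its former component and was inserted into $U$'' fails: insertion into $U$ happens only inside \textsf{updateApprVD} launched from that former source, and that call is never made if the source is skipped.

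Concretely, take old components $C_1=\{s_1\}$ and $C_2=\{s_2,a,b\}$ with edges $\{s_2,a\},\{a,b\}$, a batch that deletes $\{a,b\}$ and inserts $\{s_1,s_2\}$, and $s_1$ preceding $s_2$ in the source order. The update from $s_1$ raises $vis(s_2)$ and $vis(a)$ to $2$; then $s_2$ is discarded at Line~\ref{line:skip2} without its update ever running, so $vis(b)$ stays at $1$, $b$ is never inserted into $U$, and the new component $\{b\}$ ends the procedure with no $\mathit{VD}$ approximation -- exactly the situation your lower bound must exclude. To repair the argument you would need either an additional proof that a skipped source cannot orphan a split-off piece (false as stated, per the example), or the very operation your invariant presupposes (decrementing $vis$ along the discarded tree and feeding nodes that reach $0$ into $U$), i.e.\ an amendment of the algorithm rather than a proof about it. Note that the paper's own proof treats the split case only when the split component's source runs its update and dismisses combined merge-plus-split cases as straightforward, so it does not resolve this interaction either; still, as written your proposal hinges on a property the pseudocode does not have, so it does not yet establish the proposition.
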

\begin{proof}
Let $\mathcal{C} = \{C_1,...,C_{n_c}\}$ be the set of connected components before the update. Let us consider three basic cases (then it is straightforward to see that the proof holds also for combinations of these cases): (i) $C_i \in \mathcal{C}$ is also a component of $\mathcal{C}'$, (ii) $C_i \in \mathcal{C}$ and $C_j \in \mathcal{C}$ merge into one component $C'_k$ of $\mathcal{C}'$, (iii) $C_i \in \mathcal{C}$ splits into two components $C'_j$ and $C'_k$ of $\mathcal{C}'$. In case (i), the \vd approximation of $C_i$ is updated exactly once in the for loop (Lines~\ref{vdup:loop1}~-~\ref{vdup:loop2}). In case (ii), (assuming $i<j$, w.l.o.g.) the \vd approximation of $C'_k$ is updated in the for loop from the source node $s_i \in C_i$. In its SSSP search, $s_i$ visits also $s_j \in C_j$, increasing $vis(s_j)$. Therefore, $s_j$ is skipped and exactly one \vd approximation is computed for $C'_k$. In case (iii), the source node $s_i \in C_i$ belongs to one of the components (say $C'_j$) after the update. During the for loop, the \vd approximation is computed for $C'_j$ via $s_i$. Also, for all the nodes $v$ in $C'_k$, $vis(v)$ is set to 0 and $v$ is inserted into $U$. Then some source node $s'_k \in C'_k$ must be extracted from $U$ in Line~\ref{vdup:extract} and a \vd approximation is computed for $C'_k$. Since all the nodes in $C'_k$ are set to visited during the search, no other \vd approximations are computed for $C'_k$.
 \qed
\end{proof}
\begin{algorithm2e}[h]
 \begin{footnotesize}
\LinesNumbered
\SetKwFunction{initDynamicSSSP}{initApprVD}
\SetKwInOut{Input}{Input}\SetKwInOut{Output}{Output}
\SetKwFunction{SSSP}{SSSP}
\Input{Graph $G=(V,E)$}
\Output{Upper bound on \vd}
$U \leftarrow []$\;
\ForEach{node $v \in V$}
{
		$vis(v)\leftarrow 0$; insert $v$ into $U$\; \label{initvd:queue}
}
$i \leftarrow 1$\;
\While{$U\neq \emptyset$}
{\label{initvd:newsamples1}
	extract $s$ from $U$\;
	\If{$vis(s) = 0$} 
	{\label{line:skip}
		 $s_i \leftarrow s$\;
		$\tilde{\textsf{VD}}_i$ $\leftarrow$ \initDynamicSSSP{$G,s_i$}\;
		$i \leftarrow i+1$\;
	}
}\label{initvd:newsamples2}
$n_C \leftarrow i-1$\;
\vda $\leftarrow \max_{i=1,...,n_C} \tilde{\mathit{VD}}_i$\;
\Return{\vda}\;
 \tcp{\initDynamicSSSP computes \vda and adds 1 to $vis(v)$ of the nodes it visits}
 
\textbf{Function}
\initDynamicSSSP{$G, s$}\\ \Indp {
  	 \SSSP{$G, s$}\; \label{line:sssp}
	$d' \leftarrow \max \{d(s, u) | u \in V, d(u,v) \neq \infty\}$\;
	$d'' \leftarrow \max \{d(s, v) | v \in V, v \neq u, d(s,v) \neq \infty\}$\;
	$\underline{\omega} \leftarrow \min\{\omega(x,y) | (x,y) \in E\}$\;
	\vda $\leftarrow 1 +\frac{d'+d''}{\underline{\omega}}$\;
  	\ForEach{node $w \in V$ s.t. $d(s,w) \neq \infty$} {
		$vis(w) \leftarrow vis(w) + 1$\; \label{line:mark_visited}
	}
  	\Return \vda;
}
	
\end{footnotesize}
\caption{Dynamic \vd approximation (initialization)}
\label{algo:vd_approx}
\end{algorithm2e}

\begin{algorithm2e}[h]
 \begin{footnotesize}
\LinesNumbered
\SetKwFunction{dynamicSSSP}{updateApprVD}
\SetKwFunction{initDynamicSSSP}{initApprVD}
\SetKwInOut{Input}{Input}\SetKwInOut{Output}{Output}
\Input{Graph $G=(V,E)$, vector $vis$}
\Output{New \vd approximation}
$U \leftarrow []$\; 
\ForEach{$s_i$}
{\label{vdup:loop1}
	\If{vis($s_i )> 1$}
	{	
		remove $s_i$ and $\tilde{\mathit{VD}}_i$; decrease $n_C$\; \label{line:skip2}
	}
	\Else{
		 \tcp{\dynamicSSSP updates $vis$, inserts all $v$ for which $vis(v)=0$ into $U$ and recomputes a \vd approximation $\tilde{\mathit{VD}}_i$}
		 $\tilde{\textsf{VD}}_i$ $\leftarrow$ \dynamicSSSP{$G,s_i$} \; \label{line:upvd}
	}
}\label{vdup:loop2}
$i \leftarrow n_C$ \;
\While{$U\neq \emptyset$}
{\label{line:newvd1}
	extract $s'$ from $U$\; \label{vdup:extract}
	\If{$vis(s') = 0$}
	{
		 $s'_i \leftarrow s'$\;
		$\tilde{\textsf{VD}}_i$ $\leftarrow$ \initDynamicSSSP{$G,s'_i$}\;
		$i \leftarrow i+1$; $n_C \leftarrow n_C+1$\;
	}
}\label{line:newvd2}
reset $vis(v)$ to 1 for nodes $v$ such that $vis(v)>1$\;
\vda $\leftarrow \max_{i=1,...,n_C} \tilde{\mathit{VD}}_i$\;
\Return{\vda}
\end{footnotesize}
\caption{Dynamic \vd approximation (\textsf{updateApprVD})}
\label{algo:dyn_vd_approx}
\end{algorithm2e}
\subsection{Combined dynamic BC approximation.}
\label{sub:combined}
Let $G$ be an undirected graph with $n_c$ connected components. In Section~\ref{sub:fully-dyn}, we described an algorithm to update the betweenness approximations in fully-dynamic graphs. If the graph is undirected, we can use the fully dynamic \vd approximation to recompute \vda after a batch, instead of recomputing it from scratch. Then, we could update the $r$ sampled paths with \textsf{updateSSSP} and, if \vda (and therefore $r$) increases, we could sample new paths. 
However, since \textsf{updateSSSP} and \textsf{updateApprVD} share most of the operations, we can ``merge'' them and update at the same time the shortest paths from a source node $s$ and the \vd approximation for the component of $s$. We call this hybrid function \textsf{updateSSSPVD}. Instead of storing and updating $n_c$ SSSPs for the \vd approximation and $r$ SSSPs for the BC scores, we 
recompute a \vd approximation for each of the $r$ samples while recomputing the shortest paths with \textsf{updateSSSPVD}. This way we do not need to compute an additional SSSP for the components covered by the $r$ sampled paths (\ie the components in which the paths lie), saving time and memory. Only for components that are not covered by any of them (if they exist), we compute and store a separate \vd approximation. We refer to such components as $R'$ (and to $|R'|$ as $r'$).

In the initialization (Algorithm~\ref{algo:init}), we first compute the $r$ SSSP, like in \rk (Lines~\ref{init:sampling1}~-~\ref{init:sampling2}). However, we also check which nodes have been visited, as in Algorithm~\ref{algo:vd_approx}. While we compute the $r$ SSSPs, in addition to the distances and number of shortest paths, we also compute a \vd approximation for each of the $r$ source nodes and increase $vis(v)$ of all the nodes we visit during the sources with \textsf{initApprVD} (Line~\ref{init:dynsssp}). Since it is possible that the $r$ shortest paths do not cover all the components of $G$, we compute an additional VD approximation for nodes in the unvisited components, like in Algorithm~\ref{algo:vd_approx} (Lines~\ref{init:newsamples1}~-~\ref{init:newsamples2}). Basically we can divide the SSSPs into two sets: the set $R$ of SSSPs used to compute the $r$ shortest paths and the set $R'$ of SSSPs used for a \vd approximation in the components that were not scanned by the initial $R$ SSSPs. We call $r'$ the number of the SSSPs in $R'$.

The BC update after a batch is described in Algorithm~\ref{algo:bcupdate}. First (Lines~\ref{up:update1} -~\ref{up:update2}), we recompute the shortest paths like in Algorithm~\ref{algo:main_algo}: we update the SSSPs from each source node $s$ in $R$ and we replace the old shortest path with a new one (subtracting $1/r$ from the nodes in the old shortest path and adding $1/r$ to those in the new shortest path).
To update the SSSPs, we use the fully-dynamic \textsf{updateSSSPVD} that updates also the \vd approximation and keeps track of the nodes that become unvisited.
Then (Lines~\ref{up:ext1} -~\ref{up:ext2}), we add a new SSSP to $R'$ for each component that has become unvisited (by both $R$ and $R'$). After this, we have at least a \vd approximation for each component of $G$. We take the maximum over all these approximations and recompute the number of samples $r$ (Lines~\ref{up:recompute1} -~\ref{up:recompute2}). If $r$ has increased, we need to sample new paths and therefore new SSSPs to add to $R$. Finally, we normalize the BC scores, \ie we multiply them by the old value of $r$ divided by the new value of $r$ (Line~\ref{up:norm}).
We refer to the algorithm for unweighted graphs as \da and the one for weighted as \daw. The difference between \da and \daw is the way the SSSPs and the \vd approximation are updated: in \da we use \upvdu and in \daw \upvd. \begin{theorem}
\label{thm:correctness_bc}
Algorithm~\ref{algo:bcupdate} preserves the guarantee on the maximum absolute error, \ie naming $c'_{B}(v)$ and $\tilde{c}'_B(v)$ the new exact and approximated BC values, respectively, $\Pr(\exists v\in V\: s.t.\:|c'_{B}(v)-\tilde{c}'_B(v)|>\epsilon)<\delta$.
\end{theorem}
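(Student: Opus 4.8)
The plan is to reduce the claim to Lemma~\ref{lem:1} applied to the post-batch graph $G'$, exactly as Theorem~\ref{thm:approx_guarantee} does in the restricted setting, but now also accounting for (a) a possibly \emph{increased} vertex diameter and (b) the fact that the \vd bound is recomputed internally. Throughout, $G$ is undirected (as required for Algorithm~\ref{algo:bcupdate}), so Proposition~\ref{lem:vd2} and its unweighted specialization (the original \rk bound) apply; write $r(H)$ for the value of Eq.~(\ref{sample_size}) at the \vd approximation of a graph $H$, and note that the right-hand side of Eq.~(\ref{sample_size}) is non-decreasing in $\mathit{VD}$. \emph{First}, I would show that the value $\tilde{\mathit{VD}}(G')$ the algorithm computes in Lines~\ref{up:recompute1}--\ref{up:recompute2} is a genuine upper bound on $\mathit{VD}(G')$: by an argument identical to Proposition~\ref{thm:dyn_vd_correctness} (a surviving source is skipped for every other source of its component in Lines~\ref{up:update1}--\ref{up:update2}, and every component that becomes uncovered spawns a fresh search in Lines~\ref{up:ext1}--\ref{up:ext2}), Algorithm~\ref{algo:bcupdate} holds exactly one \vd approximation $\tilde{\mathit{VD}}_i$ per connected component of $G'$; Proposition~\ref{lem:vd2} (resp.\ its unweighted specialization) gives $\tilde{\mathit{VD}}_i \ge \mathit{VD}(C')$ for the corresponding component $C'$; and since every shortest path of $G'$ lies in a single component, $\mathit{VD}(G') = \max_{C'}\mathit{VD}(C') \le \max_i \tilde{\mathit{VD}}_i = \tilde{\mathit{VD}}(G')$. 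Hence the recomputed count $r(G')$ is at least the number of samples \rk would require on $G'$ with its true vertex diameter.

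\emph{Second}, I would verify that the set of paths held after the update is an i.i.d.\ sample from $\pi_{G'}$. Let $r_0 = r(G)$ be the size of the inherited sample and let $r_{\mathrm{fin}}$ be the number of paths held after the update; by construction $r_{\mathrm{fin}} \ge r(G')$. Each of the $r_0$ inherited paths is replaced either by the substitution rule of Lemma~\ref{lem:2} (when $\beta$ consists of insertions/weight decreases) or by the always-replace variant of Section~\ref{sub:fully-dyn} (when deletions/weight increases may occur); in both cases the resulting $p'_{st}$ is a shortest path of $G'$ and $\Pr(p'_{(k)} = p'_{st}) = \pi_{G'}(p'_{st})$. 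Moreover these $r_0$ updated paths are mutually independent: the original $p_{(1)},\dots,p_{(r_0)}$ are i.i.d., and $p'_{(k)}$ is a function of $p_{(k)}$ together with fresh, independent randomness, so independence is preserved. If $r(G') > r_0$, the extra $r(G') - r_0$ paths are drawn by \rk directly on $G'$, hence i.i.d.\ from $\pi_{G'}$ and independent of the inherited ones. Thus all $r_{\mathrm{fin}}$ paths form an i.i.d.\ sample from $\pi_{G'}$, and the normalization in Line~\ref{up:norm} rescales the scores so that $\tilde c'_B(v)$ is exactly the fraction of these $r_{\mathrm{fin}}$ paths that $v$ is internal to.

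\emph{Third}, I would conclude: we now have $r_{\mathrm{fin}} \ge r(G')$ i.i.d.\ samples from $\pi_{G'}$, and $r(G')$ already meets the \rk sample-size requirement for $G'$ (Step~1); since that requirement is a sufficient condition obtained from a VC-dimension/$\epsilon$-approximation argument and is monotone in the sample size, using $r_{\mathrm{fin}}$ rather than exactly $r(G')$ samples cannot hurt. Applying Lemma~\ref{lem:1} to $G'$ therefore yields $\Pr(\exists v\in V\ \text{s.t.}\ |c'_B(v)-\tilde c'_B(v)|>\epsilon)<\delta$, which is the assertion.

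\emph{Main obstacle.} The delicate part is the second step. Lemma~\ref{lem:2} only certifies the marginal distribution of each replaced path; for Lemma~\ref{lem:1} we need the whole collection --- inherited-then-substituted paths \emph{plus} freshly \rk-sampled paths --- to be i.i.d., which requires the independence-preservation argument sketched above and a careful splicing of the two sources of randomness. A secondary subtlety is that Lemma~\ref{lem:1} is literally stated for a sample of the exact size of Eq.~(\ref{sample_size}); one must invoke (as Theorem~\ref{thm:approx_guarantee} implicitly already does) that any larger i.i.d.\ sample also satisfies the guarantee. Finally, the deletion case needs the always-replace modification, so the analogue of Lemma~\ref{lem:2} must be re-checked in that regime: since the event ``the set of shortest paths is unchanged'' is no longer detectable from distances and path counts alone, replacing \emph{every} old path by a uniformly random element of $\mathcal{S}'_{st}$ is what restores the sampling distribution $\pi_{G'}$.
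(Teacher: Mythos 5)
Your proposal is correct and follows essentially the same route as the paper's proof: reduce to Lemma~\ref{lem:1} on $G'$ by showing every held path is distributed according to $\pi_{G'}$ --- via Lemma~\ref{lem:2} (or the always-replace variant) for the inherited paths and via the \rk sampling procedure for the $\Delta r$ additional ones. Your treatment is in fact more careful than the paper's, which leaves implicit the points you spell out (that the recomputed $\tilde{\mathit{VD}}$ is a valid upper bound on $\mathit{VD}(G')$ via Propositions~\ref{thm:dyn_vd_correctness} and~\ref{lem:vd2}, the independence of the updated and freshly drawn samples, and the monotonicity of the guarantee in the sample size).
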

\begin{proof}
Let $G$ be the old graph and $G'$ the modified graph after the batch of edge updates. Let $p'_{xy}$ be a shortest path of $G'$ between nodes $x$ and
$y$. To prove the theoretical guarantee, we need to prove that the probability of any sampled path $p'_{(i)}$ is equal to $p'_{xy}$ (\ie that the algorithms adds $1/r'$ to the nodes in $p'_{xy}$) is $\frac{1}{n(n-1)}\frac{1}{\sigma'_{x}(y)}$.
Algorithm~\ref{algo:bcupdate} replaces the first $r$ shortest paths with other shortest paths $p'_{(1)},...,p'_{(r)}$ between the same node pairs (Lines~\ref{up:new1}~-~\ref{up:new2}) using Algorithm~\ref{algo:main_algo}, for which we already proved that $\Pr(p'_{(k)}=p'_{xy})=\frac{1}{n(n-1)}\frac{1}{\sigma'_{x}(y)}$ (Lemma~\ref{lem:2}). The additional $\Delta r$ shortest paths (Line~\ref{up:new_paths}) are recomputed from scratch with \rk, therefore also in this case $\Pr(p'_{(k)}=p'_{xy})=\frac{1}{n(n-1)}\frac{1}{\sigma'_{x}(y)}$ by Lemma 7 of~\cite{DBLP:conf/wsdm/RiondatoK14}.
 \qed
\end{proof}
\begin{algorithm2e}
 \begin{footnotesize}
\LinesNumbered
\SetKwData{B}{$\tilde{c}_B$}\SetKwData{VD}{VD}
\SetKwFunction{getVertexDiameter}{getApproxVertexDiameter}
\SetKwFunction{sampleUniformNodePair}{sampleUniformNodePair}
\SetKwFunction{computeExtendedSSSP}{computeExtendedSSSP}
\SetKwFunction{initDynamicSSSP}{initApprVD}
\SetKwFunction{initSSSPVD}{initSSSPVD}
\SetKwInOut{Input}{Input}\SetKwInOut{Output}{Output}
\Input{Graph $G=(V,E)$, source node $s$, number of iterations $r$, batch $\beta$}
\Output{Approximated BC values $\forall v\in V$}
\ForEach{node $v \in V$}
{
	\B$(v)\leftarrow 0$; $vis(v) \leftarrow 0$\; \label{init:init}
}
\vda$\leftarrow$\getVertexDiameter{$G$}\; \label{init:sampling1}
$r \leftarrow (c/\epsilon^2) (\lfloor \log_2(\tilde{\mathit{VD}}-2)\rfloor +\ln(1/\delta))$\;

\For{$i \leftarrow 1$ \KwTo $r$}{
	\label{sampling1}
	$(s_i,t_i)\leftarrow$ \sampleUniformNodePair{$V$}\;
	$\tilde{\mathit{VD}}_i \leftarrow$ \initDynamicSSSP{$G,s_i,$}\; \label{init:dynsssp}
	$v \leftarrow t_i$\;\label{paths1}
	$p_{(i)}\leftarrow$ empty list\;
	$P_{s_i}(v) \leftarrow\{ z : \{z,v\}\in E \cap d_{s_i}(v) = d_{s_i}(z)+\omega(\{z,v\}) \}$\;
	\While{$P_{s_i}(v) \neq \{s_i\}$}
	{
		\mbox{sample $z \in P_{s_i}(v)$ with probability $\sigma_{s_i}(z)/\sigma_{s_i}(v)$}\;
		$\B(z)\leftarrow \B(z)+1/r$\;
		add $z\rightarrow p_{(i)}$;
		$v\leftarrow z$\;
		$P_{s_i}(v) \leftarrow\{ z : \{z,v\}\in E \cap d_{s_i}(v) = d_{s_i}(z)+\omega(\{z,v\}) \}$\;
	}\label{paths2}
}\label{sampling2} \label{init:sampling2}
$U \leftarrow V$\;
$i \leftarrow r+1$\;
\While{$U\neq \emptyset$}
{\label{init:newsamples1}
	extract $s'$ from $U$\;
	\If{$vis(s') = 0$}
	{
		 $s'_i \leftarrow s'$\;
		$\tilde{\mathit{VD}}_i \leftarrow$ \initDynamicSSSP{$G,s'_i$}\;
		$i \leftarrow i+1$\;
	}
}\label{init:newsamples2}
$r' \leftarrow r-i-1$\;
\Return{$\{(v,\B(v)):\: v\in V\}$}
\end{footnotesize}
\caption{BC initialization}
\label{algo:init}
\end{algorithm2e}

\begin{algorithm2e}
\begin{footnotesize}
\LinesNumbered
\SetKwData{B}{$\tilde{c}_B$}
\SetKwFunction{dynamicSSSP}{updateApprVD}
\SetKwFunction{dynamicSSSPVD}{updateSSSPVD}
\SetKwFunction{initDynamicSSSP}{initApprVD}
\SetKwInOut{Input}{Input}\SetKwInOut{Output}{Output}
\Input{Graph $G=(V,E)$, source node $s$, number of iterations $r$, batch $\beta$}
\Output{New approximated BC values $\forall v\in V$}
$U \leftarrow []$\;
\For{$i \leftarrow 1$ \KwTo $r$}
{\label{up:update1}
	$d^{old}_i\leftarrow d_{s_i}(t_i)$\;
	$\sigma^{old}_i\leftarrow \sigma_{s_i}(t_i)$\;
	 \tcp{\dynamicSSSPVD updates $vis$, inserts all $v :\: vis(v)=0$ into $U$ and updates the \vd approximation}
	$\tilde{\mathit{VD}}_i \leftarrow$ \dynamicSSSPVD{$G, s_i, \beta$}\; \label{up:dynsssp}
	\tcp { we replace the shortest path between $s_i$ and $t_i$}
			\ForEach{$w \in p_{(i)}$}
			{
				\B($w$) $\leftarrow \B(w)-1/r$\; \label{up:sub}
			}
			$v \leftarrow t_i$\;
			$p_{(i)}\leftarrow$ empty list\;
			$P_{s_i}(v) \leftarrow\{ z : \{z,v\}\in E \cap d_{s_i}(v) = d_{s_i}(z)+\omega(\{z,v\}) \}$\;\label{up:pred1}
			\While{$P_{s_i}(v) \neq \{s_i\}$}
			{\label{up:new1}
				\mbox{sample $z \in P_{s_i}(v)$ with probability $=\sigma_{s_i}(z)/\sigma_{s_i}(v)$}\;
				$\B(z)\leftarrow \B(z)+1/r$\;
				add $z$ to $p_{(i)}$\;
				$v\leftarrow z$\;
				$P_{s_i}(v) \leftarrow\{ z : \{z,v\}\in E \cap d_{s_i}(v) = d_{s_i}(z)+\omega(\{z,v\}) \}$\;\label{up:pred2}
			}\label{up:new2}
}\label{up:update2}
\For{$i \leftarrow r+1$ \KwTo $r+r'$}
{\label{up:update3}
		$\tilde{\mathit{VD}}_i \leftarrow$ \dynamicSSSP{$G, s_i, \beta$}\;
}\label{up:update4}
$i \leftarrow r+r'+1$\;\label{ext1}
\While{$U\neq \emptyset$}
{\label{up:ext1}
	extract $s'$ from $U$\;
	\If{$vis(s') = 0$}
	{
		 $s'_i \leftarrow s'$\;
		$\tilde{\mathit{VD}}_i \leftarrow$ \initDynamicSSSP{$G, s'_i$}\;
		$i \leftarrow i+1$; $r' \leftarrow r'+1$\;
	}
}\label{up:ext2}
\tcp{compute the maximum over all the ${\mathit{VD}}_i$ computed by \dynamicSSSP}
\vda $\leftarrow \max_{i=1,...,r+r'} \tilde{\mathit{VD}}_i$\; \label{up:recompute1}
$r_{\text{new}} \leftarrow (c/\epsilon^2) (\lfloor \log_2(\tilde{\mathit{VD}}-2)\rfloor +\ln(1/\delta))$\; \label{up:recompute2}
\If{$r_{\text{new}}>r$}
{
	sample new paths\; \label{up:new_paths}
	\ForEach{$v \in V$}
	{
		$\B(v) \leftarrow \B(v) \cdot r/r_{\text{new}}$ \label{up:norm}
	}
$r \leftarrow r_{\text{new}}$\;
}\label{ext4}
\Return{$\{(v,\B(v)):\: v\in V\}$}
\end{footnotesize}
\caption{Dynamic update of BC approximation (\textsf{DA})}
\label{algo:bcupdate}
\end{algorithm2e}

\subsection{Complexity of the dynamic BC algorithms.}
In this section we presented different algorithms for updating BC approximations after batches of edge updates. Algorithm~\ref{algo:main_algo} can be used on graphs for which we can be sure that the vertex diameter cannot increase after a batch of edge updates. This includes, for example, unweighted connected graphs on which only edge insertions are allowed. We refer to the unweighted version of this algorithm as \ia (incremental approximation) and to the weighted version as \iaw (incremental approximation weighted).
On general directed graphs, we can use the algorithms described in Section~\ref{sub:fully-dyn}. We name the unweighted version \dad (dynamic approximation directed) and the weighted one \dadw.
Finally, for undirected graphs, we can use the optimized algorithms presented in Section~\ref{sub:combined} (Algorithm~\ref{algo:bcupdate}), to which we refer as \da and \daw. Theorem~\ref{theo:complexity} presents the complexities of all the BC update algorithms. In the following, we name $||A^{(i)}||$ the sum of affected nodes and their incident edges in the $i$th sampled SSSP. We also name $r$ the number of samples. In case we need to sample new additional paths after the update (in \dad, \dadw, \da and \daw), we refer to the difference between the value of $r$ before and after the batch as $\Delta r$. In \da and \daw, we call $r'$ the number of additional samples necessary for the \vd approximation.
\begin{theorem}
\label{theo:complexity}
Given a graph $G=(V,E)$ with $n$ nodes and $m$ edges, the time required by the different algorithms to update the BC approximations after a batch $\beta$ are the following:
\begin{enumerate}[(i)]
\item \ia: $O(r\cdot |\beta | + \sum_{i=1}^{r} (||A^{(i)}||+d_{\max}^{(i)})$
\item \iaw: $O(r \cdot |\beta |\log |\beta | + \sum_{i=1}^{r} ||A^{(i)}|| \log ||A^{(i)}||)$ 
\item \dad: $O(r \cdot |\beta | + \sum_{i=1}^{r} (||A^{(i)}||+d_{\max}^{(i)}) + (\Delta r+1)(n+m))$
\item \dadw: $O((r \cdot |\beta |\log |\beta | + \sum_{i=1}^{r} ||A^{(i)}|| \log ||A^{(i)}|| + (\Delta r+1) (n \log n+m))$
\item \da: $O((r+r') |\beta | + \sum_{i=1}^{r+r'} (||A^{(i)}||+d_{\max}^{(i)}) + \Delta r(n+m))$
\item \daw: $O((r+r') |\beta |\log |\beta | + \sum_{i=1}^{r+r'} ||A^{(i)}|| \log ||A^{(i)}|| + \Delta r(n\log n+m))$
\end{enumerate}
\end{theorem}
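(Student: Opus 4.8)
The plan is to decompose the running time of each of the six algorithms into four kinds of work and bound each kind using the complexity results already established. The four parts are: (a) updating distances and numbers of shortest paths from each source node, done by \ssspu/\sssp\ (or, in the undirected optimized case, by the merged routine \textsf{updateSSSPVD}, whose cost per source equals that of \ssspu/\sssp); (b) for each of the $r$ sampled pairs, discarding the stored path and drawing a new one while adjusting the $\tilde{c}_B$ scores; (c) obtaining a fresh upper bound on \vd; and (d) when the required sample number grows by $\Delta r$, drawing $\Delta r$ new paths with \rk\ and rescaling all scores. I would treat the six algorithms in three groups, in each case assembling the bound from (a)--(d).

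\emph{Incremental algorithms \ia, \iaw\ (Algorithm~\ref{algo:main_algo}).} Here \vd cannot grow, so (c) and (d) are absent. For (a) I would sum Lemma~\ref{thm:complexity2} (resp.\ Lemma~\ref{thm:complexity}) over $i=1,\dots,r$, obtaining $O\big(r|\beta| + \sum_{i=1}^{r}(\|A^{(i)}\|+d_{\max}^{(i)})\big)$ (resp.\ $O\big(r|\beta|\log|\beta| + \sum_{i=1}^{r}\|A^{(i)}\|\log\|A^{(i)}\|\big)$), the $r|\beta|$ (resp.\ $r|\beta|\log|\beta|$) part accounting for feeding the batch into the $r$ queues. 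For (b), re-sampling is triggered (Line~\ref{check_paths}) only when the SSSP from $s_i$ actually changed; the new path from $t_i$ back to $s_i$ has $O(d_{\max}^{(i)})$ internal nodes, and recomputing predecessor sets (Line~\ref{predecessors}), iterating over the stored old path, and adding/subtracting $1/r$ along both paths is charged to $d_{\max}^{(i)}$ and to the edges incident to affected nodes, which are already counted inside $\|A^{(i)}\|$. Summing (a)+(b) yields (i) and (ii).

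\emph{Directed fully-dynamic algorithms \dad, \dadw.} These run Algorithm~\ref{algo:main_algo} with the conditional of Line~\ref{check_paths} removed, so (b) is now performed for every $i$ (still $O(d_{\max}^{(i)}+\|A^{(i)}\|)$ per sample by the same charging), and (a)+(b) produce the first two summands of (iii), resp.\ (iv). For (c), since there is no fully-dynamic \vd maintenance in the directed case, the bound of Section~\ref{sec:new_vd_approx} is rebuilt from scratch once per batch, in $O(n+m)$ (unweighted), resp.\ $O(n\log n+m)$ (weighted) -- this is the ``$+1$'' in the factor $(\Delta r+1)$. For (d), if the sample number rises by $\Delta r$, \rk\ performs $\Delta r$ further SSSP searches ($O(n+m)$, resp.\ $O(n\log n+m)$, each with an $O(n)$ back-walk) and rescales all scores in $O(n)$, giving $\Delta r(n+m)$, resp.\ $\Delta r(n\log n+m)$. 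Assembling the pieces yields (iii) and (iv).

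\emph{Undirected optimized algorithms \da, \daw\ (Algorithm~\ref{algo:bcupdate}).} The only structural change from the directed case is that \vd is maintained fully dynamically by folding \textsf{updateApprVD} into the SSSP update. Now (a) is run for the $r$ BC-samples \emph{and} for the $r'$ \vd-only samples, and when components split, \textsf{initApprVD} is invoked on the newly uncovered components; as in the proof of Theorem~\ref{thm:complexity_vd}, each such search visits only affected nodes and their incident edges and is therefore absorbed into $\sum_{i=1}^{r+r'}\|A^{(i)}\|$. Hence (a)+(b) contribute $O\big((r+r')|\beta| + \sum_{i=1}^{r+r'}(\|A^{(i)}\|+d_{\max}^{(i)})\big)$ (resp.\ the weighted analogue), (c) is already paid for inside (a), and (d) contributes $\Delta r(n+m)$ (resp.\ $\Delta r(n\log n+m)$) for the $\Delta r$ genuinely new samples plus the $O(n)$ rescaling. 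That exactly one \vd approximation is kept per component -- so that $r'$ is well defined -- follows from Lemma~\ref{thm:vd_correctness} and Proposition~\ref{thm:dyn_vd_correctness}, of which only the counting consequence is needed here. This proves (v) and (vi).

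\emph{Main obstacle.} The subtle step is (b): since predecessor sets are recomputed by scanning incident edges rather than being stored, a back-walk costs $\sum_{v\in p_{(i)}}\deg(v)$ a priori, which is not obviously $O(d_{\max}^{(i)})$; the argument must charge these scans to the affected region -- in the incremental case using that a re-sampled path meets the affected subtree, and in the fully-dynamic case via the analogous charge to $\|A^{(i)}\|$ -- and must likewise verify that the extra \textsf{initApprVD}/\rk\ searches for split-off components touch only affected nodes, so that they add no spurious $O(n+m)$ term per component. The remainder is a routine accumulation of Lemmas~\ref{thm:complexity} and~\ref{thm:complexity2}, Theorem~\ref{thm:complexity_vd}, and the linear-time \vd bounds of Section~\ref{sec:new_vd_approx}.
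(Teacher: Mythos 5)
Your proposal takes essentially the same route as the paper's own proof: each algorithm's bound is assembled by summing the SSSP-update costs of Lemmas~\ref{thm:complexity} and~\ref{thm:complexity2} over the $r$ (resp.\ $r+r'$) samples, adding $O((\Delta r+1)(n+m))$ resp.\ $O((\Delta r+1)(n\log n+m))$ for the from-scratch \vd bound and the $\Delta r$ new samples in the directed case, and absorbing the searches for newly uncovered components into the affected-node terms as in Theorem~\ref{thm:complexity_vd}, which is exactly the paper's (terser) argument. The only point where you go beyond it is the explicit concern about the back-walk cost of recomputing predecessor sets by scanning incident edges (Line~\ref{predecessors}); the paper's proof silently omits this accounting, and your charging sketch for it is at least as careful as what the paper provides.
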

\begin{proof}
We prove each case separately.
\begin{enumerate}[(i)]
\item \ia updates each sampled path with \textsf{updateSSSP-U}. Therefore, the total complexity is the sum of the times required to update each of the $r$ paths, \ie $O(r\cdot |\beta | + \sum_{i=1}^{r} (||A^{(i)}||+d_{\max}^{(i)})$.
\item Same as (i), with the only difference that we use \textsf{updateSSSP-W} for weighted graphs.
\item In \dad, we need to update the existing $r$ samples, exactly as in \ia. In addition to that, we might need to sample new $\Delta r$ additional paths using a BFS, whose complexity is $O(n+m)$. Also, we need to recompute the upper bound on \vd, whose complexity is also $O(n+m)$ (see Section~\ref{sub:directed}). Therefore, in this case we have to add an additional $O((\Delta r +1)(n+m))$ term to the complexity of \ia.
\item Similarly to (iii), we need to sample $\Delta r$ additional paths, but in weighted graphs the cost of a SSSP is $O(n \log n+m)$. Also the \vd approximation described in Section~\ref{sub:dir_weighted} requires $O(n \log n+m)$ time.
\item Let $\Delta r'$ be the difference between the values of $r'$ before and after the batch. After processing $\beta$, we might need to sample new paths for the betweenness approximation ($\Delta r>0$) and/or sample paths in new components that are not covered by any of the sampled paths ($\Delta r'>0$). Then, the complexity for the betweenness approximation update is $O(r\cdot |\beta | + \sum_{i=1}^r (||A^{(i)}||+d_{\max}^{(i)})) + O(\Delta r (n+m))$. The \vd update requires $O(r'\cdot |\beta | + \sum_{i=1}^{r'} (||A^{(i)}||+d_{\max}^{(i)}))$ to update the \vd approximation in the already covered components and $\sum_{i=1}^{\Delta r}(|V_i|+|E_i|)$ for the new ones, where $V_i$ and $E_i$ are nodes and edges of the $i$th component, respectively. From this derives the total complexity.
\item Same as (v), using \textsf{updateSSSP-W}, \textsf{approxVD-W}.
\end{enumerate}
\qed
\end{proof}
Notice that, if \vda does not increase, $\Delta r = 0$ and the complexities of \da and \dad (\daw and \dadw, respectively) are the same as the only-incremental algorithm \textsf{IA} (\textsf{IAW}, respectively). This case includes, for example, connected graphs subject to a batch of only edge insertions, or any batch that neither splits the graph into more components nor increases \vd. Also, notice that in the worst case the complexity can be as bad as recomputing from scratch. However, no dynamic SSSP (and so probably also no BC approximation) algorithm exists that is faster than recomputation on all graphs.

\section{Experiments}
\label{sec:experimental}

\paragraph{Implementation and settings.} For an experimental comparison, we implemented our six approaches \ia, \iaw, \dad, \dadw, \da, \daw, as well as the static approximation \textsf{RK}~\cite{DBLP:conf/wsdm/RiondatoK14}. In addition to that, we implemented the static exact \textsf{BA} and the dynamic exact algorithms \textsf{GMB}~\cite{DBLP:conf/socialcom/GreenMB12} and \textsf{KDB}~\cite{DBLP:journals/tkde/KourtellisMB15}, which were shown to have the best speedups on unweighted graphs. For a comparison on weighted graphs, we also implemented the algorithm by Nasre 
\etal~\cite{DBLP:conf/mfcs/NasrePR14} (\textsf{NPR}). In the implementation of \rk, we used the optimization proposed in~\cite{DBLP:conf/wsdm/RiondatoK14}, stopping all the SSSP searches once the target node has been found. Also, we computed the number of samples using our new bounds on \vd presented in Section~\ref{sec:new_vd_approx}.
We implemented all algorithms in C++, building on the open-source \textit{NetworKit} framework~\cite{DBLP:journals/corr/StaudtSM14}.
In all experiments we fix $\delta$ to 0.1 while the error bound $\epsilon$ varies.
The machine we employ is used for its 256 GB RAM -- the comparison to exact approaches requires a substantial amount
of memory. Of the machine's 2 x 8 Intel(R) Xeon(R) E5-2680 cores running at 2.7 GHz we use only one; all computations are sequential to make the comparison to previous work more meaningful.
\begin{table*}[t]
\begin{center}
\begin{footnotesize}

  \begin{tabular}{ | l | l | r | r | l |}
    \hline

    Graph 					& Type 	& Nodes 		& Edges 		&  Type\\ \hline
    \texttt{ca-GrQc}		 		& coauthorship 			& 5 242		& 14 496	 	&Unweighted, Undirected	\\
     \texttt{p2p-Gnutella09}			& file sharing			& 8 114		& 26 013		&Unweighted, Directed\\	
    \texttt{ca-HepTh} 				&  coauthorship 		& 9 877		& 25 998		&Unweighted, Undirected	\\    
    \texttt{PGPgiantcompo} 		& social / web of trust 	&10 680		& 24 316		&Unweighted, Undirected	\\ 
    \texttt{as-22july06}		 		& internet 				& 22 963		& 48 436	 	&Unweighted, Undirected	\\
    \hline
  \end{tabular}
  \end{footnotesize}
\end{center}
  \caption{Overview of small real-world networks used in the experiments.}
  \label{table:small}
\end{table*}
\begin{table*}[t]
\begin{center}
\begin{footnotesize}
  \begin{tabular}{ | l | l | r | r | r |}
    \hline
    Graph 						& Type 				& Nodes 			& Edges 			&  Type\\ \hline
   
    \texttt{repliesDigg}				& communication	& 30,398			& 85,155 			& Weighted	\\
    \texttt{emailSlashdot}			& communication	& 51,083 			& 116,573			& Weighted	\\ 
    \texttt{emailLinux}				& communication	& 63,399 			& 159,996			& Weighted	\\
    \texttt{facebookPosts}			& communication	& 46,952			& 183,412 		& Weighted	\\
    \texttt{emailEnron}				& communication	& 87,273 			& 297,456			& Weighted	\\
    \texttt{facebookFriends}			& friendship		& 63,731			& 817,035 		& Unweighted	\\
    \texttt{arXivCitations} 			& coauthorship 		& 28,093			& 3,148,447		& Unweighted	\\
    \texttt{englishWikipedia}		& hyperlink		& 1,870,709		& 36,532,531 		& Unweighted	\\

    \hline
  \end{tabular}
  \end{footnotesize}
\end{center}
  \caption{Overview of real dynamic graphs used in the experiments, taken from \url{http://konect.uni-koblenz.de/}.}
  \label{table:graphs}
\end{table*}

\paragraph{Data sets.} We use both real-world and synthetic networks. For our experiments on the accuracy and for comparison with the exact algorithms, we use relatively small networks, on which also the non-scalable algorithms can be executed. These networks are summarized in Table~\ref{table:small} and are publicly available from the collection compiled for the 10th DIMACS Challenge~\cite{DBLP:reference/snam/BaderMS0KW14} (\url{http://www.cc.gatech.edu/dimacs10/downloads.shtml}) and from the SNAP collection (\url{http://snap.stanford.edu}). Due to a shortage of actual dynamic networks in this size range, we simulate dynamics by removing a small fraction of random edges and adding them back in batches. We also use synthetic networks obtained with the Dorogovtsev-Mendes generator, a simple model for networks with power-law degree distribution~\cite{dorogovtsev2003evolution}. 

To compare the running times of the scalable algorithms (\rk and our dynamic algorithms), we use real dynamic networks, taken from The Koblenz Network Collection (KONECT)~\cite{DBLP:conf/www/Kunegis13} and summarized in Table~\ref{table:graphs}. 
All the edges of the KONECT graphs are characterized by a time of arrival. In case of multiple edges between two nodes, we extract two versions of the graph: one unweighted, where we ignore additional edges, and one weighted, where we replace the set  $E_{st}$ of edges between two nodes with an edge of weight $1/|E_{st}|$ (more tightly coupled nodes receive a smaller distance). 
In our experiments, we let the batch size vary from 1 to 1024 and for each batch size, we average the running times over 10 runs.
Since the networks do not include edge deletions, we implement additional simulated dynamics. In particular, we consider the following experiments. 
(i) \textit{Real dynamics.} We remove the $x$ edges with the highest timestamp from the network and we insert them back in batches, in the order of timestamps. 
(ii) \textit{Random insertions and deletions.} We remove $x$ edges from the graph, chosen uniformly at random. To create batches of both edge insertions and deletions, we add back the deleted edges with probability $1/2$ and delete other random edges with probability $1/2$. 
(iii) \textit{Random weight changes.} In weighted networks, we choose $x$ edges uniformly at random and we multiply their weight by a random 
value in the interval $(0,2)$.

To study the scalability of the methods, we also use synthetic graphs obtained with a generator based on a unit-disk graph model in hyperbolic geometry~\cite{LoozMP15generating}, where edge insertions and deletions are obtained by moving the nodes in the hyperbolic plane. The networks produced by the model were shown to have many properties of real complex networks, like small diameter and power-law degree distribution (for details and references the interested reader is referred to von Looz \etal~\cite{LoozMP15generating}). We generate seven networks, with $|E|$ ranging from about $2\cdot 10^4$ to about $2 \cdot 10^7$ and $|V|$ approximately equal to $|E|/10$.

We also compare our new upper bound on \vd for directed graphs presented in Section~\ref{sub:directed} with the one used in \rk. For this, we use directed real-world graphs of different sizes taken from the SNAP collection.

\subsection{Accuracy.}
\begin{figure}[htb]
\begin{center}
\includegraphics[width = 0.45\textwidth]{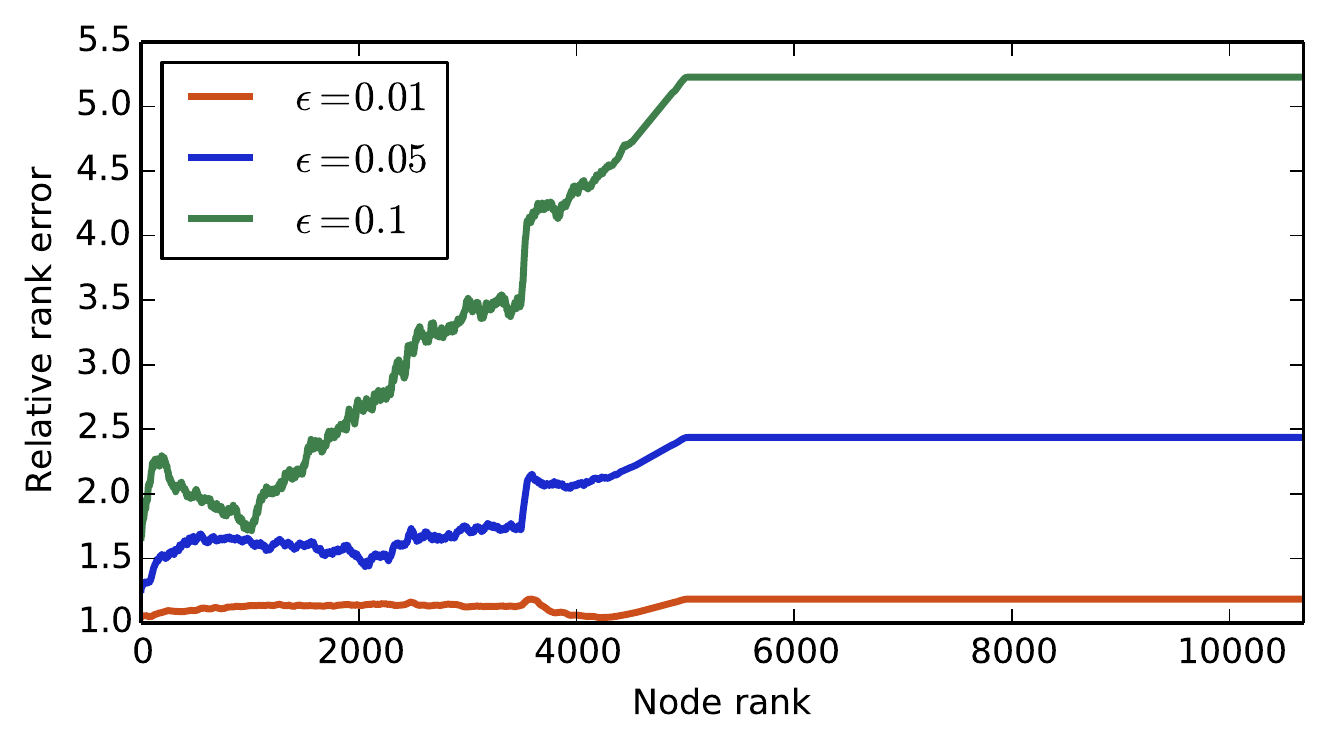}
\includegraphics[width = 0.45\textwidth]{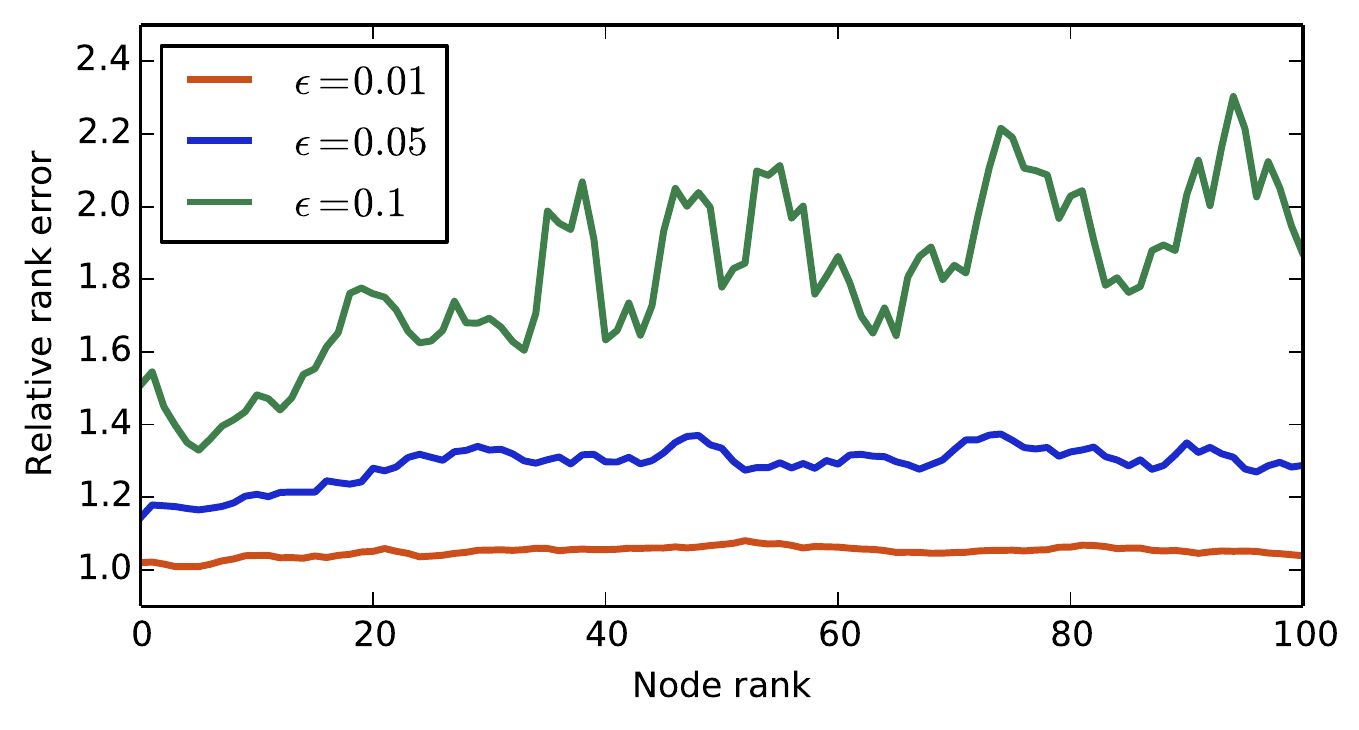}
\caption{Relative rank error on \texttt{PGPgiantcompo} for nodes ordered by rank. Left: relative errors of all nodes. Right: relative errors of the 100 nodes with highest betweenness.}
\label{fig:rank}
\end{center}
\end{figure}
\begin{table*}[h!]
\begin{center}
\begin{footnotesize}
  \begin{tabular}{ | l | r | r | r | r |  r |}
    \hline
    Graph 						&\texttt{{ca-GrQc}}	& \texttt{ca-HepTh} & \texttt{PGPgiantcompo}	&  \texttt{as-22july06} & \texttt{p2p-Gnutella09}  \\ \hline
    max. error ($\epsilon = 0.1$)		& 1.70e-02		& 1.69e-02			& 3.10e-02	& 3.22e-02	& 1.56e-02	\\ 
    max. error ($\epsilon = 0.05$)	& 9.12e-03	 	& 7.62e-03			& 1.38e-02	&  1.60e-02	& 6.55e-03	\\
    max. error ($\epsilon = 0.01$)	& 1.67e-03		& 1.41e-03			&2.99e-03		& 3.45e-03	& 1.23e-03	\\
    avg. error ($\epsilon = 0.1$)		& 4.55e-04		& 3.87e-04			& 4.56e-04	& 8.55e-05	& 5.92e-04	\\
    avg. error ($\epsilon = 0.05$)		& 2.42e-04		& 2.10e-04			& 2.54e-04 	&  5.35e-05	& 3.15e-04	\\
    avg. error ($\epsilon = 0.01$)		& 4.63e-05		& 4.29e-05			& 5.10e-05 	& 1.33e-05	& 6.55e-05	\\
    \hline
  \end{tabular}
 \end{footnotesize}
\end{center}
  \caption{Maximum and average absolute errors on real networks for different values of $\epsilon$ ($\delta=0.1$). The values are averaged over 10 runs.}
  \label{table:errors}
\end{table*}


We consider the accuracy of the approximated centrality scores both in terms of absolute error and, more importantly, the preservation of the ranking order of nodes.
Since we only replace the samples without changing their number, our dynamic algorithm has exactly the same accuracy as \textsf{RK}. 
The authors of~\cite{DBLP:conf/wsdm/RiondatoK14} study the behavior of \textsf{RK} also experimentally, considering the average and maximum estimation error on a small set of real graphs. 
We study the experimental errors on additional graphs. For our tests we use the networks summarized in Table~\ref{table:small} and Dorogovtsev-Mendes graphs of several sizes. Our results confirm those of~\cite{DBLP:conf/wsdm/RiondatoK14} in the sense that the measured absolute errors are \textit{always} below the guaranteed maximum error $\epsilon$ and the measured average error is often orders of magnitude smaller than $\epsilon$. Table~\ref{table:errors} shows the measured errors for the real networks.
We also study the \textit{relative rank error} introduced by Geisberger \etal~\cite{DBLP:conf/alenex/GeisbergerSS08} (\ie $\max \{\rho, 1/\rho\}$, denoting $\rho$ the ratio between the estimated rank and the
true rank), which we consider the most relevant measure of the quality of the approximations. Figure~\ref{fig:rank} shows the results for \texttt{PGPgiantcompo}, a similar trend can be observed on our other test instances as well. On the left, we see the errors for the whole set of nodes (ordered by exact rank) and, on the right, we focus on the top 100 nodes. The straight lines in the plot on the left correspond to nodes with betweenness 0, which are therefore undistinguishable.
The plots show that for a small value of $\epsilon$ (0.01), the ranking is very well preserved over all the positions. With higher values of $\epsilon$, the rank error of the nodes with low betweenness increases, as they are harder to approximate. However, the error remains small for the nodes with highest betweenness, the most important ones for many applications.

\subsection{New upper bound on \vd for directed graphs.}
We compute the new upper bound on \vd for directed graphs presented in Section~\ref{sub:directed} and compare it with the upper bound used in \rk~\cite{DBLP:conf/wsdm/RiondatoK14}, \ie the size of the largest weakly connected component. All the networks used in the experiment are real directed graphs. Since finding \vd exactly would be expensive in most of the graphs we used, we also compute a lower bound on \vd, by sampling nodes in the graph, computing their eccentricity (\ie, the maximum distance reachable from the node), and adding 1 to it. In Table~\ref{table:new_bound}, we report this lower bound (${\mathit{VD}}^{\star}$), our new upper bound ($\tilde{\mathit{VD}}$) and the one used in \rk ($\tilde{\mathit{VD}}_{\text{RK}}$). The results show that $\tilde{\mathit{VD}}$ is always several orders of magnitude smaller than $\tilde{\mathit{VD}}_{\text{RK}}$ and never more than a factor 4 from the lower bound ${\mathit{VD}}^{\star}$ (and therefore also from \vd). This difference is mitigated by the logarithm in the number of samples required for the approximation (see Eq.~(\ref{sample_size})). However, Table~\ref{table:new_bound} shows that $\tilde{\mathit{VD}}$ is almost always more than $2^{10}$ times smaller than $\tilde{\mathit{VD}}_{\text{RK}}$, resulting in at least 10 times less samples required for the theoretical guarantee. 
\begin{table*}[h!]
\begin{center}
\begin{footnotesize}
  \begin{tabular}{ | l | r | r | r | r | r |}
    \hline
    Graph 					& Nodes 		& Edges 	 	& ${\mathit{VD}}^{\star}$	&  $\tilde{\mathit{VD}}$ & $\tilde{\mathit{VD}}_{\text{RK}}$  \\ \hline
    \texttt{p2p-Gnutella24}		&	26 518	& 65 369	 	& 20			& 47		    &  26 498\\    
    \texttt{soc-Epinions1}		&	75 879	& 508 837	 	& 13			& 41		    &  75 877\\
     \texttt{slashdot081106}		&	77 356	& 516 575	 	& 14			& 39		    &  77 349\\
     \texttt{twitter-comb}		&	81 306	& 1 768 149	& 9			& 34		    &  81 306\\
     \texttt{slashdot090216}		&	81 870	& 545 671		& 14			& 40		    &  81 866\\
      \texttt{amazon0302}		&	262 111	& 1 234 877	& 71			& 183	    &  262 111\\
       \texttt{email-EuAll}		&	265 214	& 420 045		& 9			& 23		    &  224 832\\
    \hline
  \end{tabular}
 \end{footnotesize}
\end{center}
  \caption{Lower bound on \vd (${\mathit{VD}}^{\star}$) and upper bounds (our new bound $\tilde{\mathit{VD}}$ and the one proposed in \rk, $\tilde{\mathit{VD}}_{\text{RK}}$) on real-world networks.}
  \label{table:new_bound}
\end{table*}

\subsection{Running times.}
In this section, we discuss the running times of all the algorithms, the speedups of the exact incremental approaches (\textsf{GMB}, \textsf{KDB}) on \textsf{BA} and the speedups of our algorithms on \textsf{RK}. (Note that the term \emph{speedup} is used in this
paper to compare different algorithms, not sequential vs parallel execution.)

In all of our tests on relatively small graphs (Table~\ref{table:small}), \textsf{KDB} performs worse than \textsf{GMB}, therefore we only report the results of \textsf{GMB}.
 Figure~\ref{pgp} shows the behavior of the four algorithms on \texttt{PGPgiantcompo}. Since the graph is connected and we are considering batches of edge insertions, \ia and \da are identical in this case (therefore in Figure~\ref{pgp} we refer to our algorithm as \ia). On the left, the figure shows the running times of the four algorithms, whereas the plot on the right reports the speedups of \textsf{GMB} on \textsf{BA} and of \ia on \rk.
  \textsf{GMB} can only process edges one by one, therefore its running time increases linearly with the batch size, becoming slower than the static algorithm already with a batch size of 64. Our algorithm shows much better speedups and proves to be significantly faster than recomputation even with a batch of size 1024. The reasons for our high speedup are mainly two: First, we process the updates in a batch, processing only once the nodes affected by multiple edge insertions. Secondly, our algorithm does not need to recompute the dependencies, in contrast to all dynamic algorithms based on \textsf{BA} (\ie all existing dynamic exact algorithms). For each SSSP, the dependencies need to be recomputed not only for nodes whose distance or number of shortest paths from the source has changed after the edge insertion(s), but also for all the intermediate nodes in the old shortest paths, even if their distance and number of shortest paths are not affected. This number is significantly higher, since for every node which changes its distance or increases its number of shortest paths, the dependencies of \textit{all} the nodes in \textit{all} the old shortest paths are affected.

Results on the other small graphs and on synthetic Dorogovtsev-Mendes graphs are analogous to those shown in Figure~\ref{pgp}. 
Figure~\ref{small_synth} (left) shows the speedups of \ia on \rk and of \textsf{GMB} on \textsf{BA}, averaged over 10 synthetic graphs. 
We see in Figure~\ref{small_synth} (right) that our algorithm is significantly faster than \textsf{GMB}, and the speedup of \textsf{IA} on \textsf{GMB} clearly increases with the batch size. 
We also compared \textsf{NPR} with \iaw on small weighted graphs. Since \textsf{NPR} performs very poorly on all tested instances, we do not report the results here. However, they can be found in our conference paper~\cite{DBLP:conf/alenex/BergaminiMS15}.
\textsf{GMB} and \textsf{NPR} also have very high memory requirements ($\Theta(n^2+mn)$), which makes the algorithms unusable on networks with more than a few thousand edges. The memory requirement is the same also for all other existing dynamic algorithms, with the exception of \textsf{KDB}, which requires $\Theta(n^2)$ memory, still impractical for large networks. 
%
\begin{figure}[h!]
\begin{center}
\includegraphics[width = 0.45\textwidth]{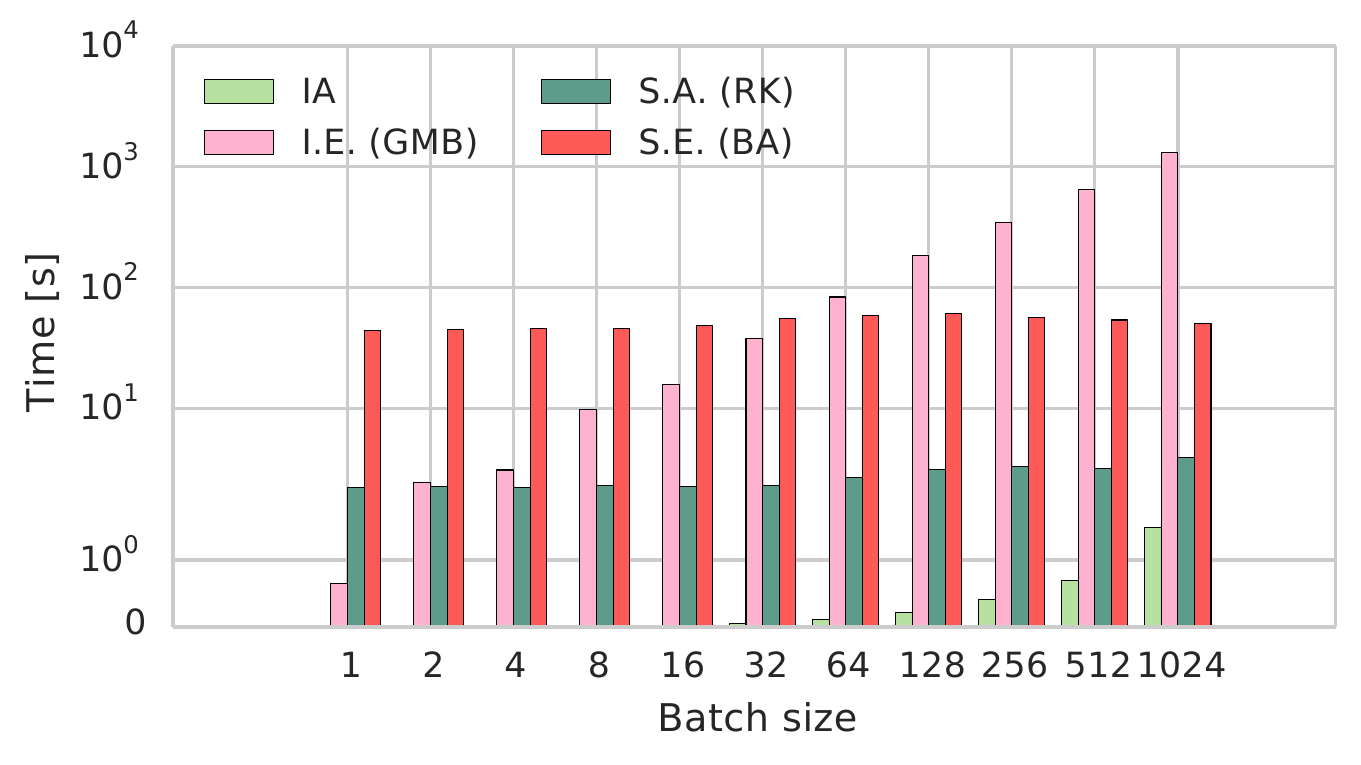}
\includegraphics[width = 0.45\textwidth]{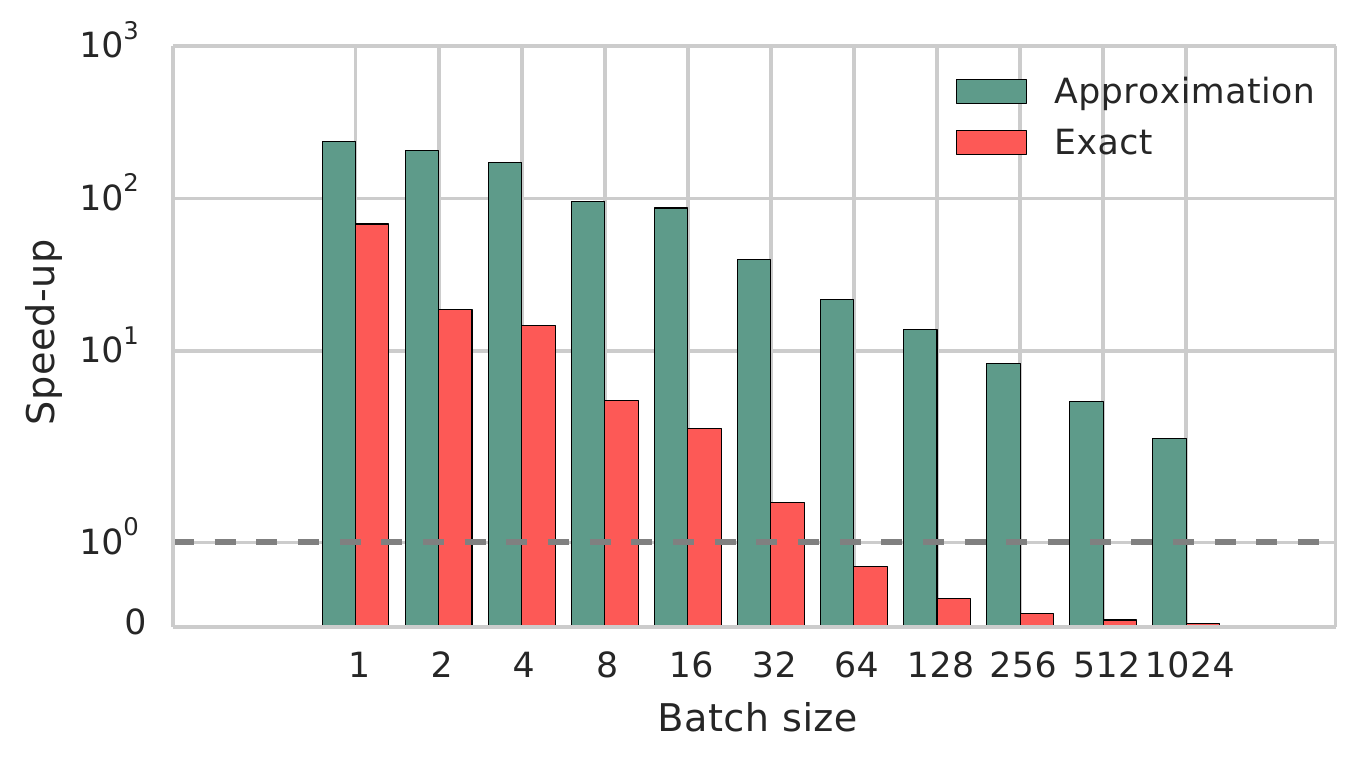}
\caption{Running times and speedups on \texttt{PGPgiantcompo}, with $\epsilon = 0.05$ and with batches of different sizes. Left: Running times of static exact (\textsf{BA}), static approximation (\textsf{RK}), incremental exact (\textsf{GMB}) and our incremental approximation \textsf{IA}. Right: Speedups of \textsf{GMB} on \textsf{BA} (exact) and of \textsf{IA} on \textsf{RK} (approximation).} 
\label{pgp}
\end{center}
\vspace{-4ex}
\end{figure}
\begin{figure}[h!]
\begin{center}
\includegraphics[width = 0.45\textwidth]{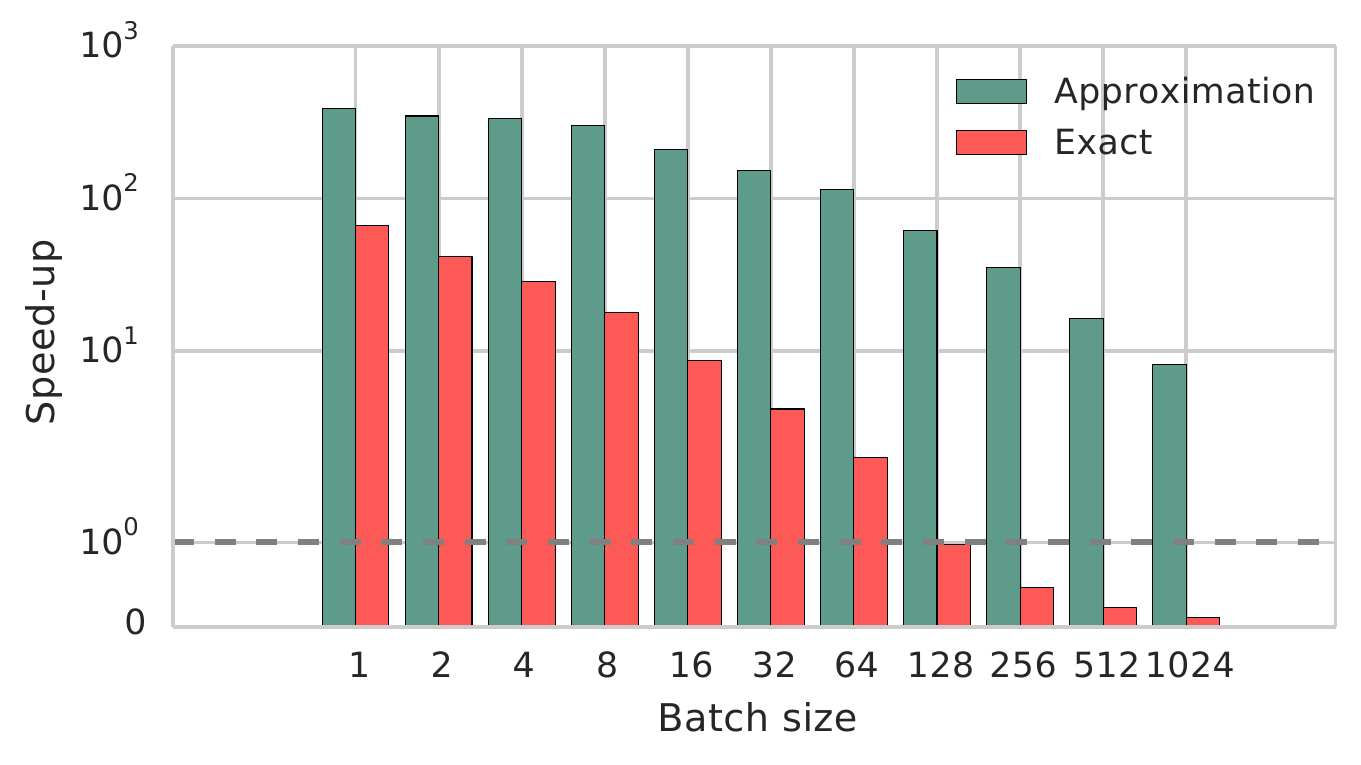}
\includegraphics[width = 0.45\textwidth]{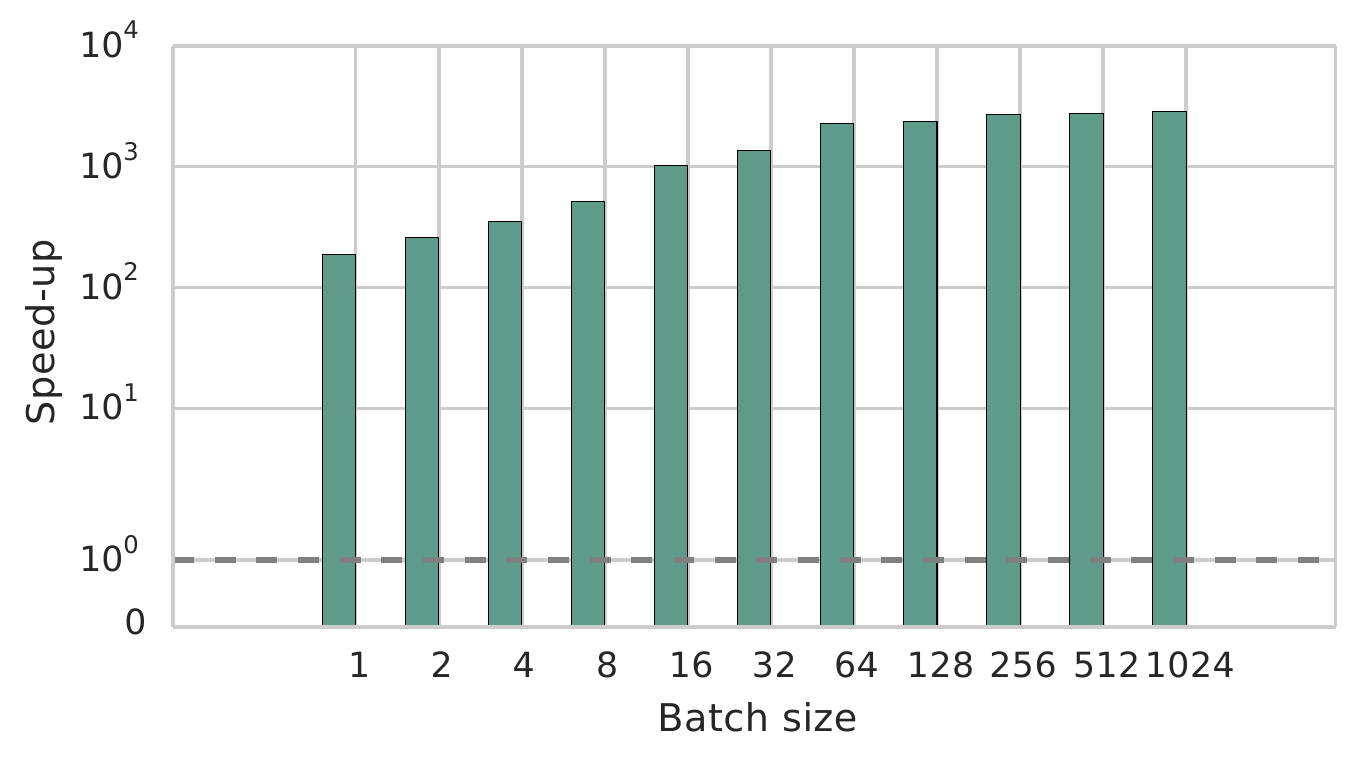}
\caption{Speedups on Dorogovtsev-Mendes synthetic graphs ($m= 40k$), with $\epsilon = 0.05$ with batches of different sizes. Left: comparison of the speedups of \textsf{GMB} on \textsf{BA} (exact) and of \textsf{IA} on \textsf{RK} (approximation). Right: speedups of \textsf{IA} on \textsf{GMB}.} 
\label{small_synth}
\end{center}
\vspace{-4ex}
\end{figure}

Since the exact algorithms are not scalable, for the comparison on larger networks (Table~\ref{table:graphs}) we used only \textsf{RK} and our algorithms. 
Figure~\ref{fig:speedups_real} (left) reports the speedups of \da on \rk in real graphs using real dynamics. Although some fluctuations can be noticed, the speedups tend to decrease as the batch size increases. We can attribute fluctuations to two main factors: First, different batches can affect areas of $G$ of varying sizes, influencing also the time required to update the SSSPs. Second, changes in the \vd approximation can require to sample new paths and therefore increase the running time of \da (and \daw). Nevertheless, \da is significantly faster than recomputation on all networks and for every tested batch size. 
Analogous results are reported in Figure~\ref{fig:random} (left) for random dynamics. 
Table~\ref{table:speedups_real} summarizes the running times of \da and its speedups on \rk with batches of size 1 and 1024 in unweighted graphs, under both real and random dynamics. Even on the larger graphs (\texttt{arXivCitations} and \texttt{englishWikipedia}) and on large batches, \da requires at most a few seconds to recompute the BC scores, whereas \rk requires about one hour for \texttt{englishWikipedia}. 
The results on weighted graphs are shown in Table~\ref{table:weighted}. 
In both real dynamics and random updates, the speedups vary between $\approx 50$ and $\approx 6 \cdot 10^3$ for single-edge updates and between $\approx 5$ and $\approx 75$ for batches of size 1024.

On hyperbolic graphs (Figure~\ref{fig:speedups_real}, right), the speedups of \da on \rk increase with the size of the graph. Table~\ref{table:hyperbolic} contains the precise running times and speedups on batches of 1 and 1024 edges. The speedups vary between $\approx 100$ and $\approx 3 \cdot 10^5$ for single-edge updates and between $\approx 3$ and $\approx 5 \cdot 10^3$ for batches of 1024 edges. 

Some graphs of Table~\ref{table:graphs} can also be interpreted as directed (\ie,  \texttt{repliesDigg}, \texttt{emailSlashdot}, \texttt{emailLinux}, \texttt{facebookPosts}, \texttt{emailEnron}, \texttt{arXivCitations}). We therefore test \dad on the directed version of the networks, using real dynamics. Also on directed networks, using \dad is
always faster than recomputation with \rk, by orders of magnitude on small batches. Figure~\ref{fig:random} (right) shows the speedups of \dad on \rk on the \texttt{facebookPosts} graph.

\begin{table*}
\begin{center}
\begin{scriptsize}
  \begin{tabular}{  l | r | r | r | r | r | r | r | r |}
\cline{2-9}  
 & \multicolumn{4}{  c |  }{Real}& \multicolumn{4}{  c |  }{Random} \\ \cline{2-9}
 &\multicolumn{2}{  c |  }{Time [s]}& \multicolumn{2}{  c |  }{Speedups} &\multicolumn{2}{  c |  }{Time [s]}& \multicolumn{2}{  c |  }{Speedups} \\ \cline{1-9}
\multicolumn{1}{| l|}{Graph} & $|\beta|=1$ & $|\beta|=1024$ & $|\beta|=1$ & $|\beta|=1024$  & $|\beta|=1$ & $|\beta|=1024$ & $|\beta|=1$ & $|\beta|=1024$ \\\cline{1-9}
\multicolumn{1}{| l|}{\texttt{repliesDigg}} 		& 0.078	& 1.028	& 76.11	& 5.42	& 0.008	& 0.832	& 94.00	& 4.76	\\ \cline{1-9}		
\multicolumn{1}{| l|}{\texttt{emailSlashdot}} 	& 0.043	& 1.055	& 219.02	& 9.91	& 0.038	& 1.151	& 263.89	& 28.81	\\ \cline{1-9}		
\multicolumn{1}{| l|}{\texttt{emailLinux}} 		& 0.049	& 1.412	& 108.28	& 3.59	& 0.051	& 2.144	& 72.73	& 1.33	\\ \cline{1-9}				
\multicolumn{1}{| l|}{\texttt{facebookPosts}} 	& 0.023	& 1.416	& 527.04	& 9.86	& 0.015	& 1.520	& 745.86	& 8.21	\\ \cline{1-9}	
\multicolumn{1}{| l|}{\texttt{emailEnron}} 		& 0.368	& 1.279	& 83.59	& 13.66	& 0.203	& 1.640	& 99.45	& 9.39	\\ \cline{1-9}				
\multicolumn{1}{| l|}{\texttt{facebookFriends}} 	& 0.447	& 1.946	& 94.23	& 18.70	& 0.448	& 2.184	& 95.91	& 18.24	\\ \cline{1-9}			
\multicolumn{1}{| l|}{\texttt{arXivCitations}} 	& 0.038	& 0.186	& 2287.84	& 400.45	& 0.025	& 1.520	& 2188.70	& 28.81	\\ \cline{1-9}			
\multicolumn{1}{| l|}{\texttt{englishWikipedia}} 	& 1.078	& 6.735	& 3226.11	& 617.47	& 0.877	& 5.937	& 2833.57	& 703.18	\\ \cline{1-9}		
  \end{tabular}
  \end{scriptsize}
\end{center}
  \caption{Times and speedups of \da on \rk in unweighted real graphs under real dynamics and random updates, for batch sizes of 1 and 1024.}
  \label{table:speedups_real}
\end{table*} 
\begin{figure}[h!]
\begin{center}
\includegraphics[width = 0.49\textwidth]{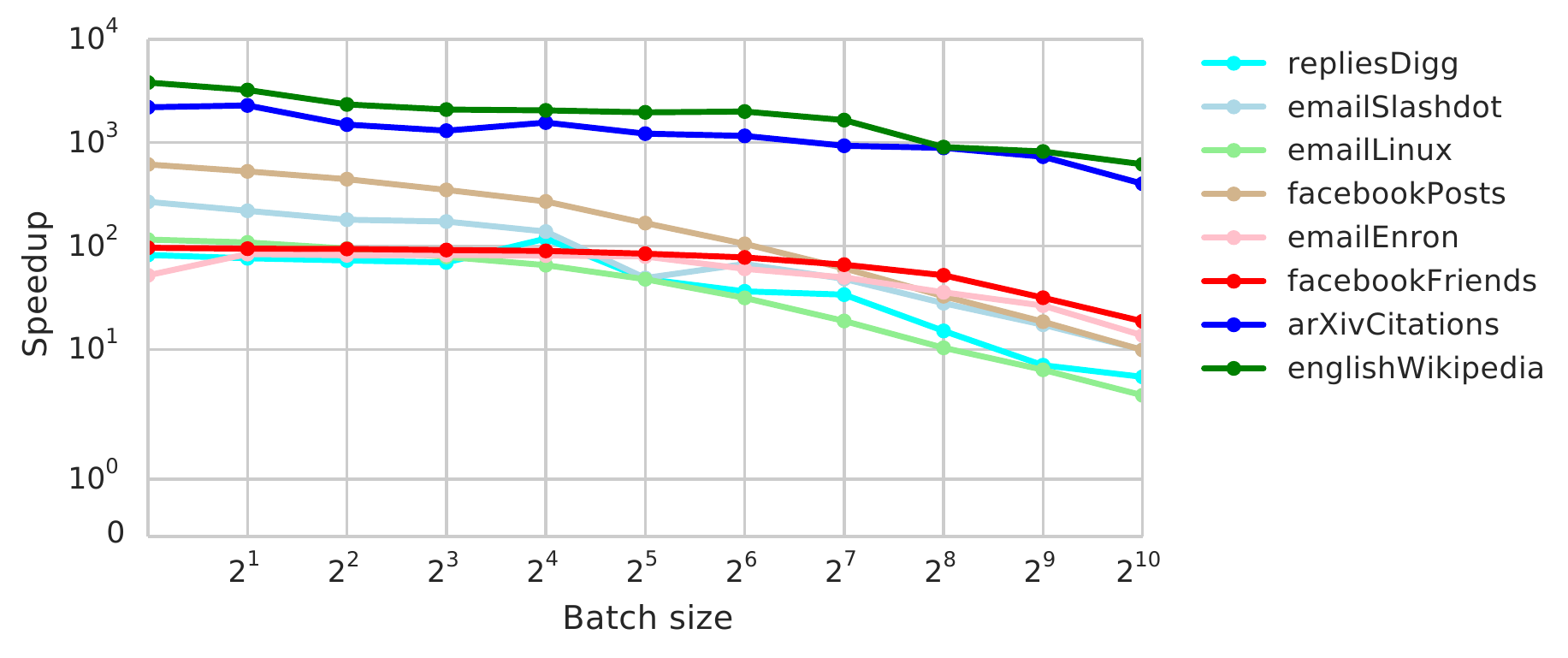}
\includegraphics[width = 0.49\textwidth]{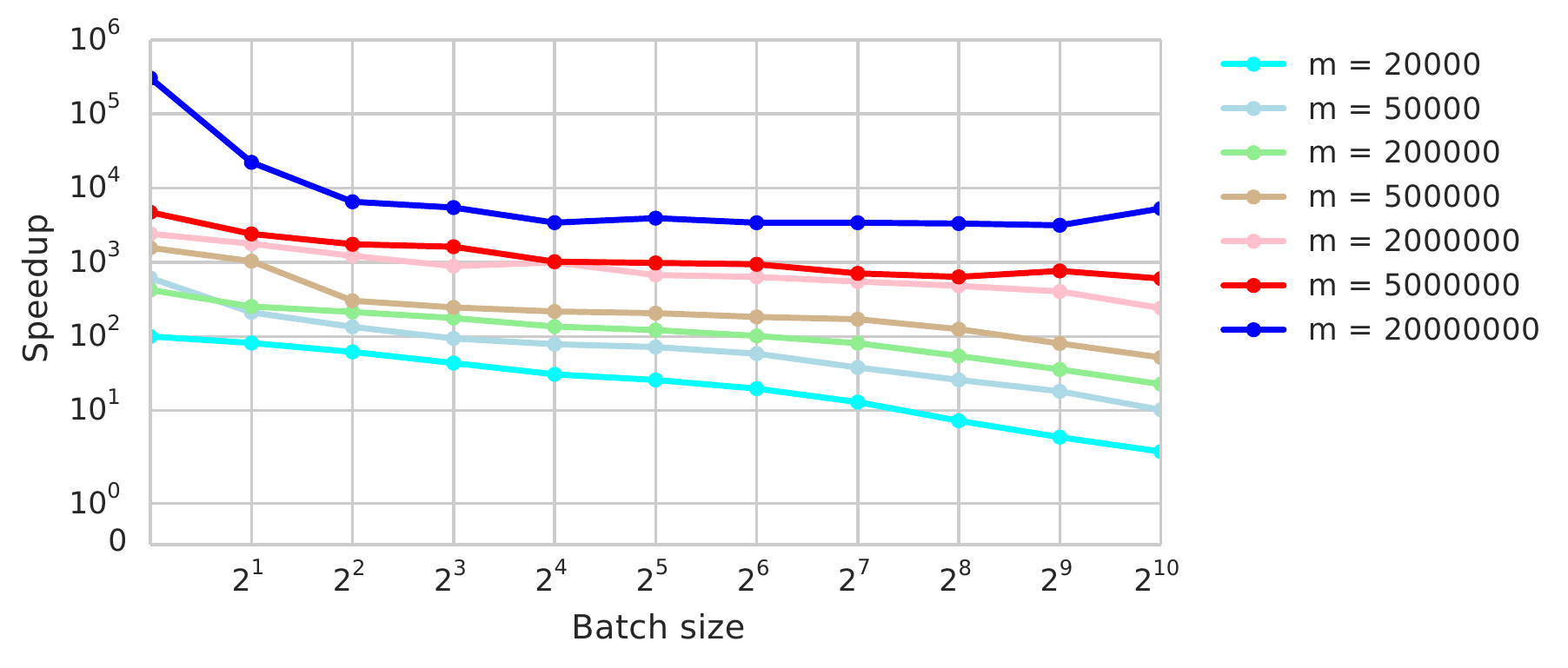}
\caption{Speedups of \da on \rk, with $\epsilon = 0.05$ and with batches of different sizes. Left: real unweighted networks using real dynamics. Right: hyperbolic unit-disk graphs of different sizes.} 
\label{fig:speedups_real}
\end{center}
\vspace{-4ex}
\end{figure}
\begin{figure}[h!]
\begin{center}
\includegraphics[width = 0.55\textwidth]{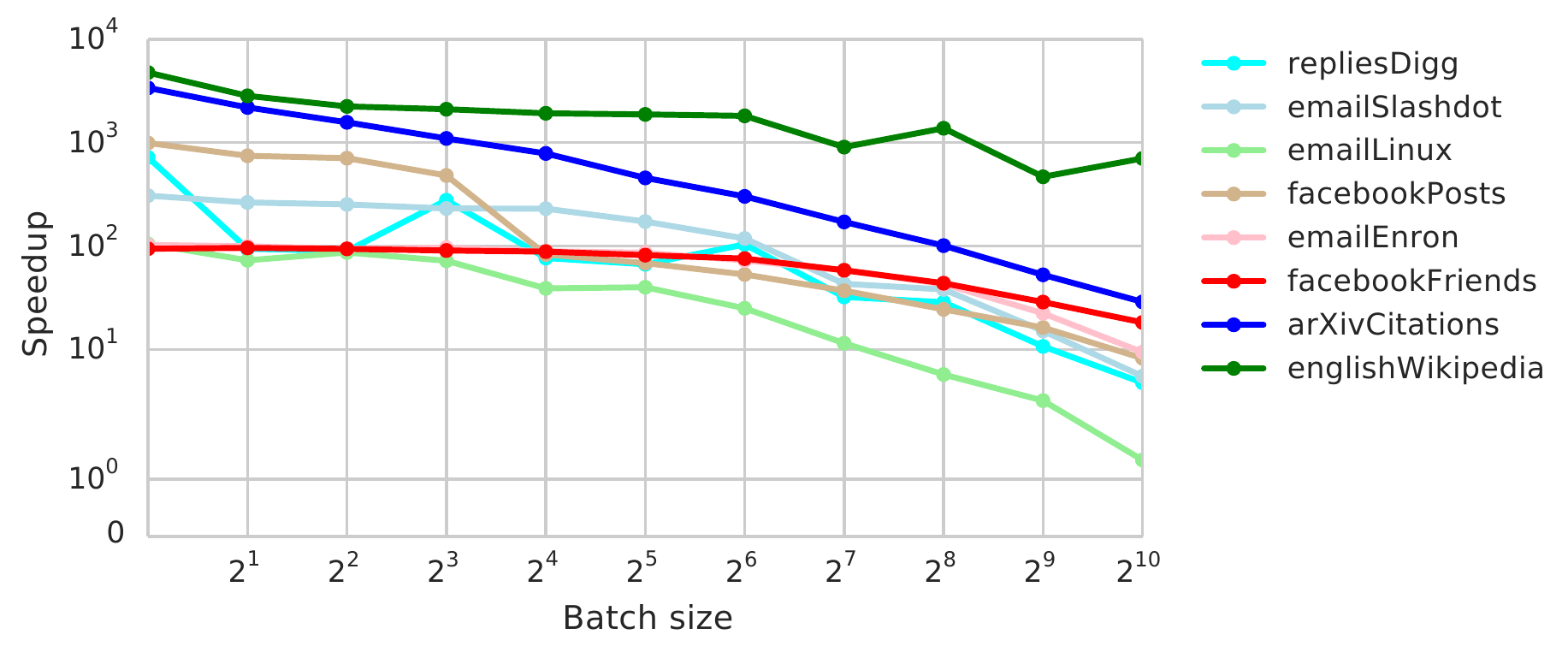}
\includegraphics[width = 0.41\textwidth]{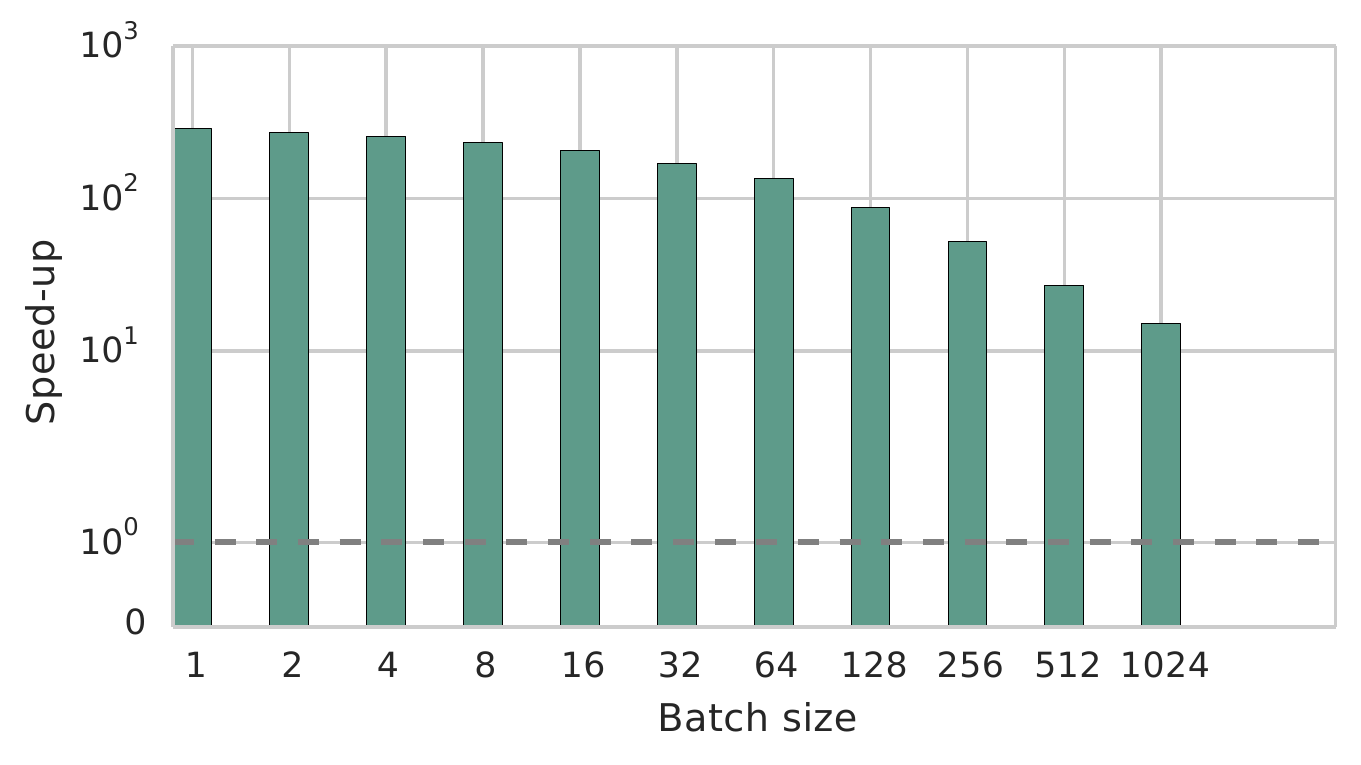}
\caption{Left: speedups of \da on \rk in real unweighted graphs under random updates. Right: Speedups of \dad on \rk on the \texttt{facebookPosts} directed graph using real dynamics.} 
\label{fig:random}
\end{center}
\end{figure}

 \begin{table*}[h]
\begin{center}
\begin{scriptsize}
  \begin{tabular}{  l | r | r | r | r | r | r | r | r |}
\cline{2-9}  
 & \multicolumn{4}{  c |  }{Real}& \multicolumn{4}{  c |  }{Random} \\ \cline{2-9}
 &\multicolumn{2}{  c |  }{Time [s]}& \multicolumn{2}{  c |  }{Speedups} &\multicolumn{2}{  c |  }{Time [s]}& \multicolumn{2}{  c |  }{Speedups} \\ \cline{1-9}
\multicolumn{1}{| l|}{Graph} & $|\beta|=1$ & $|\beta|=1024$ & $|\beta|=1$ & $|\beta|=1024$  & $|\beta|=1$ & $|\beta|=1024$ & $|\beta|=1$ & $|\beta|=1024$ \\\cline{1-9}
\multicolumn{1}{| l|}{\texttt{repliesDigg}} 		& 0.053	& 3.032	& 605.18	& 14.24	& 0.049	& 3.046	& 658.19	& 14.17	\\ \cline{1-9}		
\multicolumn{1}{| l|}{\texttt{emailSlashdot}} 	& 0.790	& 5.387	& 50.81	& 16.12	& 0.716	& 5.866	& 56.00	& 14.81	\\ \cline{1-9}		
\multicolumn{1}{| l|}{\texttt{emailLinux}} 		& 0.324	& 24.816	& 5780.49	& 75.40	& 0.344	& 24.857	& 5454.10	& 75.28	\\ \cline{1-9}				
\multicolumn{1}{| l|}{\texttt{facebookPosts}} 	& 0.029	& 6.672	&  2863.83& 11.42	& 0.029	& 6.534	& 2910.33	& 11.66	\\ \cline{1-9}	
\multicolumn{1}{| l|}{\texttt{emailEnron}} 		& 0.050	& 9.926	& 3486.99	& 24.91	& 0.046	& 50.425	& 3762.09	& 4.90	\\ \cline{1-9}					
  \end{tabular}
  \end{scriptsize}
\end{center}
  \caption{Times and speedups of \daw on \rk in weighted real graphs under real dynamics and random updates, for batch sizes of 1 and 1024.}
  \label{table:weighted}
\end{table*} 
  \vspace{-2ex}
 \begin{table*}[h]
\begin{center}
\begin{scriptsize}
  \begin{tabular}{  l | r | r | r | r |}
\cline{2-5}  
 & \multicolumn{4}{  c |  }{Hyperbolic} \\ \cline{2-5}
 &\multicolumn{2}{  c |  }{Time [s]}& \multicolumn{2}{  c |  }{Speedups} \\ \cline{1-5}
\multicolumn{1}{| l|}{Number of edges} & $|\beta|=1$ & $|\beta|=1024$ & $|\beta|=1$ & $|\beta|=1024$ \\\cline{1-5}
\multicolumn{1}{| l|}{$m = 20000$} 				& 0.005	& 0.195	& 99.83	& 2.79	\\ \cline{1-5}		
\multicolumn{1}{| l|}{$m = 50000$} 				& 0.002	& 0.152	& 611.17	& 10.21	\\ \cline{1-5}
\multicolumn{1}{| l|}{$m = 200000$} 				& 0.015	& 0.288	& 422.81	& 22.64	\\ \cline{1-5}		
\multicolumn{1}{| l|}{$m = 500000$} 				& 0.012	& 0.339	& 1565.12	& 51.97	\\ \cline{1-5}		
\multicolumn{1}{| l|}{$m = 2000000$} 			& 0.049	& 0.498	& 2419.81	& 241.17	\\ \cline{1-5}		
\multicolumn{1}{| l|}{$m = 5000000$} 			& 0.083	& 0.660	& 4716.84	& 601.85	\\ \cline{1-5}			
\multicolumn{1}{| l|}{$m = 20000000$} 			& 0.006	& 0.401	& 304338.86	& 5296.78	\\ \cline{1-5}							
  \end{tabular}
  \end{scriptsize}
\end{center}
  \caption{Times and speedups of \da on \rk in hyperbolic unit-disk graphs, for batch sizes of 1 and 1024.}
  \label{table:hyperbolic}
\end{table*}

To summarize, our results show that our dynamic algorithms are faster than recomputation with \rk in all the tested instances, even when large batches of 1024 edges are applied to the graph. With small batches, the algorithms are always orders of magnitude faster than \rk, often with running times of fraction of seconds or seconds compared to minutes or hours. Such high speedups are made possible by the efficient update of the sampled shortest paths, which limits the recomputation to the nodes that are actually affected by the batch. Also, processing the edges in batches, we avoid to update multiple times nodes that are affected by several edges of the batch.

\section{Conclusions}
Because betweenness centrality considers all shortest paths between pairs of nodes, its exact computation is out of reach for the large complex networks that come up in many applications today. 
However, approximate scores obtained by sampling paths are often sufficient to identify the most important nodes and rank nodes in an order that is very similar to exact BC. 
Since many applications are concerned with rapidly evolving networks, we have explored whether a dynamic approach -- which updates paths when a batch of edges arrive or are deleted from the network -- is more efficient than recomputing BC scores from scratch.
We introduce the first dynamic betweenness approximation algorithms with provable error guarantee. They work for weighted, unweighted, directed and undirected graphs.
Our theoretical results are confirmed by experiments on a diverse set of real-world networks with both real and simulated dynamics,
which show the effectiveness of our approach.

The dynamic updating of paths implies a higher memory footprint, but also enables significant speedups compared to recomputation (\eg factor 100 for a batch of 1024 edge insertions).
The scalability of our algorithms is primarily limited by the available memory. Each sampled path requires $O(n)$ memory and the number of required samples grows quadratically as the error bound $\epsilon$ is tightened. 
This leaves the user with a tradeoff between running time and memory consumption on the one hand and BC score error on the other hand.
However, our experiments indicate that even a relatively high error bound (\eg $\epsilon = 0.1$) for the BC scores preserves the ranking for the more important nodes reasonably well.

We studied sequential implementations for simplicity and comparability with related work, but parallelization is possible, part
of future work, and can yield further speedups in practice.
Our implementations are based on \textit{NetworKit}\footnotemark, the open-source framework for large-scale network analysis.
Most of the source code used for this paper is already available in \textit{NetworKit}, the remaining code will follow in upcoming
releases of the package.

An interesting open problem remains the update of BC approximations in scenarios where also nodes can be inserted and deleted. The presence of a new node (or the deletion of an existing one) would modify the probability distribution that regulates the path sampling, requiring the necessity of new techniques.

\footnotetext{\url{http://networkit.iti.kit.edu}}

\bigskip

\renewcommand{\baselinestretch}{0.25}
\begin{scriptsize}
\textbf{Acknowledgements.}
This work is partially supported by DFG grant ME-3619/3-1 (FINCA) within the SPP 1736 \emph{Algorithms for Big Data} and by DFG grant ME-3619/2-1 (TEAM).
We thank Moritz von Looz for providing the synthetic dynamic networks and the numerous contributors to 
the \textit{NetworKit} project. We also thank Matteo Riondato for his constructive comments
on earlier versions of the material presented in this paper.
\end{scriptsize}
\renewcommand{\baselinestretch}{1.0}
\bibliographystyle{abbrv}
\bibliography{references}

\pagebreak

\end{document}